\newtheorem{theorem}{Theorem}[section]
\newtheorem{corollary}[theorem]{Corollary}
\newtheorem{lemma}[theorem]{Lemma}
\newtheorem{proposition}[theorem]{Proposition}
\newtheorem{definition}[theorem]{Definition}
\newtheorem{assumption}[theorem]{Assumption}
\newtheorem{observation}[theorem]{Observation}
\newtheorem{example}[theorem]{Example}
\newcommand{\floor}[1]{
{\lfloor {#1} \rfloor}
}
\newcommand{\given}{\,|\,}
\newcommand{\prob}[2][]{\text{\bf Pr}\ifthenelse{\not\equal{}{#1}}{_{#1}}{}\!\left[{\def\givenn{\middle|}#2}\right]}
\newcommand{\expect}[2][]{\text{\bf E}\ifthenelse{\not\equal{}{#1}}{_{#1}}{}\!\left[{\def\givenn{\middle|}#2}\right]}
\newcommand{\tparen}{\big}
\newcommand{\tprob}[2][]{\text{\bf Pr}\ifthenelse{\not\equal{}{#1}}{_{#1}}{}\tparen[{\def\given{\tparen|}#2}\tparen]}
\newcommand{\texpect}[2][]{\text{\bf E}\ifthenelse{\not\equal{}{#1}}{_{#1}}{}\tparen[{\def\given{\tparen|}#2}\tparen]}
\newcommand{\sprob}[2][]{\text{\bf Pr}\ifthenelse{\not\equal{}{#1}}{_{#1}}{}[#2]}
\newcommand{\sexpect}[2][]{\text{\bf E}\ifthenelse{\not\equal{}{#1}}{_{#1}}{}[#2]}
\newcommand{\eps}{\epsilon}
\newcommand{\rev}{{\rm Rev}}
\newcommand{\poly}{{\rm poly}}
\newcommand{\opt}{{\rm OPT}}
\newcommand{\optval}{\rm{OPT\text{-}EF}}
\newcommand{\optefw}{\rm{OPT\text{-}EF1}}
\newcommand{\opteps}{{\rm OPT\text{-}\epsilon EF}}
\DeclareMathOperator*{\argmin}{arg\,min}
\renewcommand{\P}{\mbox{\sf P}}
\newcommand{\NP}{\mbox{\sf NP}}
\begin{document}

\title{Algorithmic Fair Contracts}

\author{
Matteo Castiglioni\thanks{DEIB, Politecnico di Milano. Email: \texttt{matteo.castiglioni@polimi.it}}
\and Junjie Chen\thanks{Department of Economics, National University of Singapore. Email: \texttt{junjchen9-c@my.cityu.edu.hk}}
\and Yingkai Li\thanks{Department of Economics, National University of Singapore. Email: \texttt{yk.li@nus.edu.sg} }
}


\date{}

\begin{titlepage}
	\clearpage\maketitle
	\thispagestyle{empty}

\begin{abstract}
We initiate the algorithmic study of fair contract design. A principal assigns multiple tasks to heterogeneous agents and chooses task-level linear contracts; agents differ in costs and success probabilities, and fairness requires each agent to prefer her own task-contract bundle to any other agent's. Unlike envy-free allocations of indivisible items, envy-free full-allocation contracts always exist, but optimizing revenue under this constraint is computationally difficult: no polynomial-time algorithm can achieve any constant-factor approximation in general. We therefore identify tractable regimes. With a constant number of tasks, optimal EF, EF1, and $\epsilon$-EF contracts are computable in polynomial time. With a constant number of agents, exact EF remains hard, even for three agents, while EF1 and $\epsilon$-EF admit additive FPTAS against the EF benchmark. We also show that exact EF can have an unbounded price of fairness, whereas $\epsilon$-EF and EF1 can restore bounded revenue loss.
\end{abstract}

\end{titlepage}

\section{Introduction}

Contract theory studies how a principal can use payments to induce agents to take costly actions that are not directly observable \citep{grossman1992analysis,bolton2004contract}. This incentive problem has recently become a central topic in theoretical computer science, where algorithmic work has examined the complexity of optimal contracts \citep[e.g.,][]{guruganesh2021contracts, castiglioni2025reduction,dutting2023multi}, approximation guarantees for simple contract classes \citep[e.g.,][]{dutting2019simple,alon2022bayesian,ezra2023approximability}, and learning contracts from data \citep[e.g.,][]{bacchiocchi2023learning,chen2024bounded}; see the survey by \citet{dutting2024algorithmic}.

Most of this literature optimizes the principal's revenue subject to individual incentive requirements: the contract must make the prescribed effort optimal for the agent and give the agent enough utility to take the task. These requirements do not address a different agent-side question: when different agents face different contracts, are these contracts fair relative to one another? We study this comparative question. In particular, we ask whether each agent would prefer her own contract to the contract offered to any other agent, evaluated using her own costs, success probabilities, and induced utility.

This perspective is increasingly relevant in digital labor markets and algorithmic management, where platforms such as Uber, Didi, DoorDash, Deliveroo, and Amazon Mechanical Turk use automated systems to set terms of work, prices, incentives, schedules, and sometimes assignments. Such decisions are scrutinized through the lens of fair working conditions, nondiscrimination, and transparency in compensation and work organization~\citep[e.g.,][]{employment2019draft}; the EU Platform Work Directive, for example, explicitly identifies automated systems that allocate tasks, price individual assignments, determine schedules, and provide incentives as central features of platform work~\citep{eu2024platform}. These concerns also connect to broader questions about inequality in labor markets~\citep[e.g.,][]{chipriyanov2024researching,wbg24path}. Recent empirical work on ride-hailing studies algorithmic pay and work allocation together, documenting that dynamic pricing can affect both job predictability and inequality across drivers~\citep{binns2025uber}. These examples motivate the study of contract design under fairness constraints, rather than revenue maximization alone.

Envy-freeness provides a natural formalization of this comparative concern. It is  a utility-based comparison different from equality of nominal payments: an agent compares the utility she obtains from her own contract with the utility she would obtain from someone else's contract. This comparison is about expected utility, not just nominal pay. A long trip, a risky delivery, a specialized data-labeling job, or a task requiring costly preparation may be much less attractive than another contract with the same posted payment. Conversely, two agents can value the same contractual terms differently because their success probabilities and effort costs differ. Thus, simply imposing equal-pay contracts is not enough. Equal pay is an exogenous restriction on transfers; it may leave agents with unequal net utilities, may fail to remove envy, and can even be incompatible with envy-free full allocation in simple instances (see \cref{apx:equal_pay_not_fair}). 

This paper introduces \emph{fair contracts}: contract-design problems in which the principal optimizes incentives and revenue while respecting agents' fairness comparisons. We focus on envy-freeness (EF): each agent should weakly prefer her own contract to that assigned to any other agent, evaluated using her own subjective utility. Our formal model instantiates this idea in a multi-task environment, where the principal chooses both a task allocation and task-level linear incentive payments. Our goal is to understand when fair contracts exist, how hard it is to optimize revenue subject to fairness, and how much revenue the principal may lose by imposing fairness constraints.

\subsection{Model}
We now instantiate fair contracts in a simple multi-task moral-hazard environment. A principal faces $m$ tasks and $n$ heterogeneous agents. If task $j$ is performed by agent $i$ and the agent exerts effort, the task succeeds with probability $p_{i,j}$, generates reward $r_j$ for the principal, and costs the agent $c_{i,j}$. In this setting, a contract consists of task-level linear shares $\alpha_j\in[0,1]$ and a delegation of each task to one agent, where $\alpha_j$ specifies the fraction of the reward paid to the delegated agent upon success.

A contract is fair if no agent envies another agent's contract. In this model, each agent's contract is her bundle of delegated tasks together with the linear shares attached to those tasks. Because agents have different success probabilities and costs, the same contract can have different values to different agents. Thus, agent $i$ compares her own contract with agent $j$'s by evaluating both using agent $i$'s own success probabilities and effort costs. The principal's objective is to maximize expected revenue subject to these envy-free constraints.

We focus on full-allocation contracts: every task is assigned, and each assigned contract is large enough to make effort worthwhile for the assigned agent. This rules out the vacuous fair solution that sets all contracts to zero and induces no effort. This restriction preserves the natural analogy with fair division, where all items must be allocated \citep[e.g.,][]{amanatidis2023fair}. We can show that under a mild usefulness assumption on tasks, envy-free full-allocation contracts exist.

Existence, however, does not make optimization easy. Revenue maximization can conflict sharply with envy-freeness. Even with a single task, the highest-revenue contract may also be the most attractive one from another agent's perspective; satisfying envy-freeness may therefore require changing the contract and sacrificing revenue. The main question of the paper is how severe this conflict is, computationally and quantitatively.

Fair contracts are related to, but distinct from, fair division and envy-free pricing. In fair division, an allocated good or chore directly determines utility, and the main issue is the fairness of the allocation itself. Here, utility is mediated by incentives: an opportunity has value only through a contract that makes costly effort attractive, as in moral-hazard contract theory~\citep{grossman1992analysis,bolton2004contract}. Thus, allocation, payment, and hidden action interact. This difference also changes the existence question: envy-free allocations of indivisible items may fail to exist~\citep[e.g.,][]{aziz2022fair,amanatidis2023fair}, whereas envy-free full-allocation contracts always exist in our model. The challenge is therefore not finding any fair contract, but finding a revenue-optimal one.

The model is also different from envy-free pricing~\citep[e.g.,][]{guruswami2005profit,chen2010envy,anshelevich2017envy}. In envy-free pricing, a seller posts prices and buyers select their favorite bundles; prices enter buyers' utilities symmetrically. In fair contracts, the same incentive share affects agents differently because success probabilities and effort costs are heterogeneous, and the principal must choose both the contracts and the allocated bundles of tasks for agents. This additional incentive dimension is what drives the computational hardness in our setting.

\subsection{Our Results}
\begin{table}[t]
\renewcommand{\arraystretch}{1.5}
\centering
\caption{Summary of computational and price of fairness results. {\bf The first two result columns use optimal EF contracts as benchmarks}. Approximation ratios are benchmark revenue divided by achieved revenue, and hence are at least one.}
\label{tb:hardness}
\begin{tabular}{|m{1.6cm}<{\centering} | m{3.4cm}<{\centering}  |m{3.5cm}<{\centering} | m{3.6cm}<{\centering} | m{2.2cm}<{\centering} | }
\hline
 & \textbf{General settings} & \textbf{Constant number of agents} & \textbf{Constant number of tasks} 
& \textbf{Price of fairness}\\
\hline
\multirow{2}{*}{\parbox{1.7cm}{ \centering EF contracts} } & \multirow{2}{*}{\parbox{3.5cm}{ \centering No constant approx} } & No approx $<\frac{5}{2}$ for $3$ agents & \multirow{2}{*}{\parbox{3.8cm}{\centering EF contracts are polynomially solvable}} 
& \multirow{2}{*}{ \parbox{2cm}{\centering unbounded\,\,} }
\\
\cline{3-3}
                         &                          & \NP-hard for $2$ agents& & \\
\hline
$\eps$-EF contracts & 
\parbox{3.6cm}{\centering No constant approx, even for a constant $\epsilon$} & Additive FPTAS & 
\parbox{3.8cm}{\centering $\eps$-EF contracts are polynomially solvable}  
& $[\frac{1}{4\eps},\frac{\min\{m,n^2\}}{4\eps}]$\\
\hline
{EF1 contracts } & { \centering Unknown} & Additive FPTAS &  
\parbox{3.8cm}{\centering EF1 contracts are polynomially solvable }
& $[\sqrt{n},n^2]$
\\
\hline
\end{tabular}
\end{table}

A key conceptual contribution of this paper is to bring agent-side fairness into contract design and initiate the study of algorithmic fair contracts. Fairness constraints change the principal's problem in a fundamental way: the contract must not only induce effort and generate revenue, but also withstand comparisons across agents. In the multi-task model we study, this creates a new interaction among incentives, fairness, and profit maximization.

We now summarize our main results. They show that this new fairness constraint is structurally benign but algorithmically demanding: envy-free full-allocation contracts always exist, yet optimizing revenue under envy-freeness is hard in general. Table~\ref{tb:hardness} summarizes the main computational results. 

\vspace{1.5mm}

\noindent{\bf Strong negative results in general settings.}
For arbitrary numbers of agents and tasks, the optimal EF benchmark admits no constant approximation ratio in polynomial time. The hardness is robust: for every constant ratio $c\ge 1$, there is a constant $\epsilon>0$ such that even an $\epsilon$-EF contract cannot guarantee a $1/c$ fraction of the EF benchmark (\cref{def:eps_ef,arbitryri_hardnesdd_linear}). The reduction is from a gap version of bounded-degree independent set \citep{alon1995derandomized,trevisan2001non}. High revenue requires assigning many vertex tasks to one efficient agent, while envy-freeness forces those tasks to form an independent set.

\vspace{1.5mm}

\noindent{\bf Algorithms and hardness in restricted settings.}
The general hardness motivates two natural restrictions. When the number of tasks is constant, the assignment space is polynomially enumerable. For EF and $\epsilon$-EF, each fixed assignment yields a linear program. For EF1, the difficulty is that one must also identify, for each potential envy comparison, the task whose removal eliminates envy. We show how to avoid an exponential enumeration by using  break-even shares as contract upper bounds, obtaining polynomial-time algorithms for optimal EF1 contracts (\cref{thm:constant_task}).

When the number of agents is constant, exact EF remains computationally difficult. Even with three agents, no polynomial-time algorithm can achieve  an approximation ratio strictly below $\frac{5}{2}$ for the optimal EF contract unless $\P=\NP$ (\cref{hardness_ef}). Exact optimization remains \NP-hard with two agents (\cref{prop:two_ef_hard}). Instead, for the relaxed notions $\epsilon$-EF and EF1, we obtain additive FPTAS with respect to the optimal EF benchmark. The algorithms use dynamic programming over discretized utility profiles; the EF1 case requires an adaptive grid that preserves the EF1 constraints after discretization (\cref{thm:constant_agent,thm:constant_agentEF1}).

\vspace{1.5mm}

\noindent{\bf Price of fairness.}
We also quantify the revenue loss imposed by fairness constraints, using $\opt$ as the unconstrained benchmark. Exact EF can have unbounded price of fairness, already with two agents and one task (\cref{price_fair}).
Both relaxations of $\eps$-EF and EF1 restore the boundedness of the price of fairness. For $\eps$-EF, when $0<\eps\le 1/4$, we show that the principal can always retain at least a $4\eps/\min\{m,n^2\}$ fraction of $\opt$; conversely, single-task instances show that no guarantee strictly above a $4\eps$ fraction is possible in general (\cref{prop:pof_eps_ef}). Hence, for fixed $n$ and $m$, the worst-case price of fairness scales as $\Theta(1/\eps)$ as $\eps$ becomes small, and the single-task bound is tight up to an arbitrarily small slack. For EF1, we prove that the principal can always retain a $1/n^2$ fraction of $\opt$, while there are instances in which the retained fraction is only $O(1/\sqrt n)$ (\cref{ef1upperboundpro}).

\vspace{1.5mm}

\noindent{\bf Technical challenges and contributions.}
We close the overview by highlighting the technical ideas behind these results. The main difficulty is that envy-freeness is a global constraint on a mixed discrete-continuous object. The principal must choose both a task assignment and task-level shares, while every ordered pair of agents induces a comparison between two bundles evaluated through the first agent's own probabilities and costs. This prevents a separation between allocation and payment design: locally profitable contracts may create envy, and a fair-looking allocation may require revenue-destroying payments. Our general hardness reduction turns this coupling into an independent-set structure. High revenue requires assigning many vertex tasks to one efficient agent, while the envy constraints rule out adjacent choices; carefully scaled auxiliary tasks make the gap robust even when a constant amount of additive envy is allowed.

The algorithmic results identify regimes where this coupling can still be controlled. With a constant number of tasks, we can enumerate assignments; EF and $\epsilon$-EF then reduce to linear programs. EF1 is the first point at which this LP approach breaks: every ordered pair of agents carries an existential certificate, namely the task whose removal eliminates envy, and this certificate depends on the shares chosen by the LP. Naively enumerating these certificates is exponential because many agents may receive the empty bundle and may all compare themselves to the same nonempty bundle. Our key observation is that, for an empty-bundle agent, EF1 is equivalent to saying that she obtains positive utility from at most one task in the compared bundle. This converts the choice of removable tasks into a collection of break-even upper bounds on task shares. We enumerate threshold patterns induced by these break-even shares, discard patterns in which the same empty-bundle agent can be above threshold on two tasks, and solve the resulting LPs. Thus the existential EF1 witnesses are handled without enumerating them agent by agent.

The approximation algorithms for a constant number of agents require a different idea. We use dynamic programming over discretized $n^2$-dimensional utility profiles, recording how every agent values every other agent's contract. For $\epsilon$-EF, a uniform additive discretization is enough, because the target guarantee itself permits additive envy. For EF1, however, the final output must satisfy an exact ``up to one task'' condition; a small additive error can be spread over many tasks and need not be removable by deleting a single task. The algorithm therefore uses an adaptive grid rather than a fixed global mesh. It first guesses each agent's utility in the optimal EF benchmark and then chooses the granularity of the grid separately for that agent's utility scale. In this way, an agent with small benchmark utility is rounded more carefully, while an agent with large benchmark utility can tolerate coarser absolute rounding. The resulting errors behave like controlled relative losses in the envy comparisons, rather than unrelated additive perturbations. The dynamic program then finds a contract with the same discretized cross-utility profile as a rounded optimal benchmark solution, and the relative slack is small enough that deleting the largest task in the compared bundle absorbs the remaining error. This is the main mechanism that turns an additive dynamic-programming approximation into an exact EF1 contract with revenue within an additive $\epsilon$ of the EF benchmark.

The price-of-fairness results rely on separate, more instance-specific constructions. The unbounded EF example shows that the cost of exact fairness is intrinsic: with just one task and two agents, envy constraints can already preclude the unconstrained revenue-maximizing assignment. For $\epsilon$-EF, the positive bound combines two ways of spending the allowed envy: distributing a small envy budget task by task, and using a round-robin construction over single-task contracts. The matching dependence on $\epsilon$ comes from a single-task lower-bound instance. For EF1, the upper bound again uses a round-robin allocation, now centered on an agent with large total achievable surplus, while the lower bound adapts ideas from price-of-fairness constructions in indivisible goods but must encode them through success probabilities, costs, and incentive shares. Overall, the price-of-fairness results draw a qualitative boundary between exact and relaxed fairness: exact EF admits no finite worst-case guarantee, $\epsilon$-EF has tight $\Theta(1/\epsilon)$ dependence for fixed $n$ and $m$, and EF1 yields polynomial revenue guarantees while leaving the optimal dependence on $n$ as an open problem.

\subsection{Related Work}
The present paper lies at the intersection of algorithmic contract theory and fair division. Existing contract-design models optimize incentives and revenue, typically without agent-side fairness over assigned bundles; fair-division models study envy and efficiency, typically without hidden actions or outcome-contingent incentives.

\paragraph{Fairness-related considerations in contracts.}
Several strands of prior work incorporate fairness-related considerations into contract design. In the economics literature, \citet{fehr2007fairness} experimentally study how fairness concerns affect contractual relationships, showing that bonus contracts may outperform standard incentive contracts when agents exhibit fairness concerns. More recent work studies equal-pay-for-similar-work policies in labor markets~\citep{gentile2026equal}. In the algorithmic literature, \citet{feng2024price} study price-of-non-discrimination questions in public combinatorial contracts, where agents perform identical tasks. These equal-pay or non-discrimination constraints address exogenous restrictions, potentially imposed by regulation, rather than agents' endogenous fairness comparisons.
Such constraints are conceptually related to our approach, but they do not by themselves guarantee fairness. Agents ultimately care about their own subjective utilities in a mechanism, not only about the nominal payments they receive. In settings with heterogeneous agents and heterogeneous tasks, the same payment can generate different utilities, and hence envy among them, for different agents working on different tasks. See \cref{apx:equal_pay_not_fair} for an illustrative example. Thus, simply equalizing payments does not fully capture the subjective nature of fairness~\citep{yaari1984dividing}. In contrast, our envy-free notion is defined directly from the agents' subjective perspective: each agent compares her own utility with the utility she would obtain from another agent's assigned bundle of tasks and contracts, while exerting optimal effort.

\paragraph{Follow-up works on fair contracts.}
Following the initial version of the present paper, a recent line of work has studied related notions of fair contracts. The closest follow-up work is fair team contracts~\citep{castiglioni2025fair}, which studies fairness among agents selected for a single collaborative project. That paper also considers envy-freeness and extends the notion to environments with externalities. \citet{feldman2026equal,ding2026multi} study equal-pay contracts in team models, which constitute a special class of fair contracts under the definition of \citet{castiglioni2025fair}. These papers focus on team-production settings, whereas the present paper studies multiple individual tasks: each task is assigned to one agent, agents may receive bundles of tasks, and envy is evaluated over both the task allocation and the attached task-level contracts.

\paragraph{Algorithmic contract theory and combinatorial contracts.}
Contract design has attracted growing interest from the computer science community, where algorithmic methods enable the analysis of richer and more complex settings.
The algorithmic study of contracts was initiated by~\citet{babaioff2006combinatorial}. Subsequent work explored both the complexity of optimal contracts~\citep{dutting2021complexity} and approximation guarantees for simple linear contracts~\citep{dutting2019simple}. A line of recent work integrates moral hazard with adverse selection, providing algorithmic solutions in both single-parameter~\citep{alon2021contracts,alon2022bayesian} and multi-parameter settings~\citep{guruganesh2021contracts,castiglioni2021bayesian,castiglioni2022designing,guruganesh2023power,castiglioni2025reduction}.
Other recent directions include learning and online contract design~\citep{bacchiocchi2023learning,zhu2023sample,han2024learning}, agent-designed contracts~\citep{bernasconi2024agent}, budget-feasible contracts~\citep{feldman2025budget}, and contract design jointly with information structure or transparency~\citep{castiglioni2025hiring,dutting2026transparency}. We refer readers to the survey by~\citet{dutting2024algorithmic} for a comprehensive overview.

This work is closest to the literature on combinatorial and multi-agent contracts~\citep[e.g.,][]{dutting2023multi,dutting2022combinatorial,dutting2024combinatorial,deo2024supermodular,ezra2023approximability,duetting2025multi}. These papers introduce substantial combinatorial structure into contract design, but the objective is primarily revenue maximization rather than revenue maximization under agent-side envy constraints over bundles of tasks and contracts. The multi-project contract model of~\citet{alon2025multiprojectcontracts} is especially close because it also studies the delegation of multiple projects to heterogeneous agents. The allocation structure is different: in their model, each agent has limited capacity and can participate in at most one project, whereas here each task must be assigned to exactly one agent and an agent may receive multiple tasks. This changes both the combinatorial allocation problem and the fairness constraints, since envy is evaluated over bundles of tasks together with their attached contracts.

\paragraph{Fair division and envy-free pricing.}
The paper also relates to the literature on fair division under envy-free constraints. Envy-freeness is a central fairness notion, dating back to~\citet{foley1966resource}; see~\citet{aziz2022fair,amanatidis2023fair} for recent surveys on the fair allocation of indivisible goods and chores. Several well-known relaxations and variants have been proposed, including envy-freeness up to one item (EF1)~\citep{lipton2004approximately,budish2011combinatorial}, envy-freeness up to any item (EFX)~\citep{caragiannis2019unreasonable}, weighted envy-freeness~\citep{chakraborty2021weighted}, and proportionality~\citep{steinhaus1948problem,suksompong2016asymptotic}. The model here is also conceptually related to the fair division of chores~\citep[e.g.,][]{heydrich2015dividing,dehghani2018envy}.
The main distinction is incentive compatibility. In fair division, an allocated item or chore directly gives utility or disutility; in this setting, a task yields utility to an agent only through an outcome-contingent contract that makes effort worthwhile. This hidden-action feature creates a three-way interaction among assignment, incentives, and fairness. Moreover, envy-free full-allocation contracts always exist in this model, so the main challenge is revenue optimization under envy-freeness rather than existence.

The model is also related to envy-free pricing~\citep[e.g.,][]{guruswami2005profit,chen2010envy,anshelevich2017envy}, where a seller chooses prices and allocations so that buyers prefer their purchased bundles. In envy-free pricing, prices enter buyers' utilities symmetrically and the seller does not need to induce costly hidden effort. In fair contracts, the same task-level contract affects agents differently through heterogeneous costs and success probabilities, and the principal must choose both the assignment and the incentives.

\paragraph{Price of fairness.}
Finally, the price-of-fairness results are related to the literature quantifying the efficiency loss imposed by fairness constraints in goods allocation~\citep{bei2021price,barman2020optimal,bu2022complexity}. For example, \citet{barman2020optimal} obtain tight $\Theta(\sqrt{n})$ bounds for the price of EF1 in indivisible goods allocation. In contrast to fair-division settings where utilities are fixed by the allocation, the price of fairness in this paper is shaped by both the assignment and the incentive payments needed to induce effort.
For a broader overview of related results in fair division, we refer interested readers to the survey by \citet{liu2024mixed}.

\section{Model}\label{section_model_part}
\paragraph{Setting.} A principal has a set of tasks $\mathcal{M}$ to complete and can delegate them to a set of agents $\mathcal{N}$. Let $m=|\mathcal{M}|$ and $n=|\mathcal{N}|$ be some finite values. In particular, the principal has to assign each task $j\in \mathcal{M}$ to an agent $i \in \mathcal{N}$.
If task $j \in \mathcal{M}$ is assigned to agent $i$, then  agent $i \in \mathcal{N}$ chooses a binary action: exert effort, which costs $c_{i,j}\in[0,1]$ and succeeds with probability $p_{i,j}\in[0,1]$, or shirk, which costs zero and never succeeds.  If effort
succeeds, the principal receives a reward $r_j \in [0,1]$;
otherwise, the principal receives {\it zero}.  Agents are risk-neutral, effort choices are separable across tasks and utilities are additive across assigned tasks. We adopt principal-favoring tie-breaking: whenever effort and shirking give the same expected utility on an assigned task, the agent exerts effort.

To incentivize the agents to work, the principal designs a contract for each task. In this paper, we focus on the simple but widely adopted class of linear contracts. Specifically, a linear contract determines the fraction $\alpha_j\in [0,1]$ of the reward given to the agent to which the task is assigned. Under the linear contract $\alpha_j$, if an agent $i$ successfully completes task $j$, it will receive payment $\alpha_j \cdot r_j$; otherwise, it receives {\it zero} payments. Note that we use task-specific
linear shares $\alpha_j$ rather than agent-specific shares $\alpha_{i,j}$. Thus, if a worker compares herself to another
worker, she evaluates the same task-level contracts attached to the other worker’s bundle. This captures settings where platforms post task-level success-contingent payments and then assign
tasks to workers.

We use the shorthand $q_{i,j}=p_{i,j}r_j$ for the expected task reward generated by agent $i$ on task $j$, and
\[
u_{i,j}(\alpha_j)=\max\{\alpha_j q_{i,j}-c_{i,j},0\}
\]
for the utility agent $i$ obtains from task $j$ under share $\alpha_j$ when she may optimally shirk. For a bundle $S_h$, let $U_i(S_h;\alpha)=\sum_{k\in S_h}u_{i,k}(\alpha_k)$. We also define the break-even share
\[
\tau_{i,j}=
\begin{cases}
c_{i,j}/q_{i,j}, & q_{i,j}>0,\\
0, & q_{i,j}=c_{i,j}=0,\\
+\infty, & q_{i,j}=0<c_{i,j}.
\end{cases}
\]
Thus task $j$ can be effort-feasibly assigned to agent $i$ under some linear share in $[0,1]$ exactly when $\tau_{i,j}\le 1$.

\paragraph{Fair Contracts.} The main question we examine in this paper is how to design allocations of tasks to agents that satisfy fairness constraints.
We focus on cases in which each task can be assigned to at most one agent, while each agent may receive the delegation of multiple tasks. We use $S=\{S_i\}_{i\in \mathcal{N}}$ to denote the allocation, where $S_i \subseteq \mathcal{M}$ is the set of tasks assigned to agent $i$. Then, allocation $S$ is feasible if $S_i\cap S_j =\emptyset$ for all $i\neq j$. A contract $(S,\alpha)$ in our setting consists of a feasible allocation $S$ and the linear contracts $\alpha_j$ for each task $j$.
At this point, a feasible allocation may be partial. We call a contract effort-feasible if every assigned task satisfies the effort constraint below; it is full if every task is assigned.

Note that agents are heterogeneous in their efforts and success probabilities. To align with the {\it subjective} nature of fairness~\citep{yaari1984dividing}, we take envy-freeness as a starting point for our problem: each agent compares the utility they obtain from their own assigned bundle of tasks with the utility they would obtain from other agents' bundles.
This notion has been extensively studied in the fair division literature, and we build on it to define the envy-free (EF) contracts.
\begin{definition}[Envy-free Contracts]
A contract $(S,\alpha)$ is envy-free if
\begin{equation}\label{tmp_envy_freeness}
     \sum_{k \in S_i}\max\{\alpha_k p_{i, k }r_k - c_{i,k}, 0\} \geq \sum_{k \in S_j} \max\{\alpha_k p_{i, k} r_k - c_{i, k},0\},  \quad \forall j \neq i.
\end{equation}
\end{definition}
Intuitively, given the contract $(S,\alpha)$, envy-free contracts require that each agent $i$ prefers its assigned set of tasks over switching to another agent's set of tasks and exerting optimal effort. Equivalently, $U_i(S_i;\alpha)\ge U_i(S_j;\alpha)$ for all ordered pairs $i\neq j$.

To align with the literature on fair division, where all items must be assigned to one of the agents, we consider a setting in which the principal is restricted to full allocation contracts. That is, every task must be assigned to some agent, ensuring that no tasks are left unallocated. Moreover, we require that agents always exert effort to try to complete the tasks.\footnote{Notice that without this further restriction, assigning a task to an agent that does not exert effort is essentially equivalent to not assigning the task at all.}

\begin{definition}[Full Allocation Contracts]
\label{full_allocation_def}
A contract $(S,\alpha)$ is a full allocation contract if $\cup_{i\in \mathcal{N}} S_i = \mathcal{M}$ and
\begin{equation}\label{effort-constraints}
\alpha_j p_{i, j }r_j - c_{i,j} \geq 0, \quad \forall i \in \mathcal{N}, j\in S_i. \tag{Effort constraints}
\end{equation}
\end{definition}
Conceptually, these are effort constraints. Since shirking yields zero utility, \eqref{effort-constraints}
ensures that exerting effort weakly dominates shirking on every assigned task. By the tie-breaking convention above, weak dominance is sufficient for the principal's revenue expression to treat assigned agents as exerting effort.
Throughout our paper, we maintain the restriction to full allocation contracts.
Under this restriction, the envy-free constraints (\ref{tmp_envy_freeness}) can be simplified as follows:
\begin{equation}\label{envy-free-constraints}
     \sum_{k \in S_i} \alpha_k p_{i, k }r_k - c_{i,k} \geq \sum_{k \in S_j} \max\{\alpha_k p_{i, k} r_k - c_{i, k},0\},  \quad \forall j \neq i. \tag{EF constraints}
\end{equation}

We make the following assumption to ensure the existence of full allocation contracts.
\begin{assumption}
\label{asp:usefulness}
For any task $j\in \mathcal{M}$, there exists an agent $i\in \mathcal{N}$ such that $p_{i, j }r_j - c_{i,j}\geq 0$.
\end{assumption}
\noindent We argue that \cref{asp:usefulness} is rather mild: if the social welfare of a task is negative for every possible agent that performs the task, no agent would exert effort on it under any linear contract, and \eqref{effort-constraints} would necessarily fail. Hence,
the principal has to  remove these tasks in a preprocessing step before imposing full allocation on the remaining task set, which is indeed optimal.
Moreover, Assumption \ref{asp:usefulness} is reasonable in practice. For example, on a ride-hailing platform, if a customer sending a request is located in a too rural region or the payment is too low, the platform will have no incentives to consider that request and instead leave the customer unmatched.

\paragraph{Existence of EF Contracts.} Before moving to optimization, we first record that the fairness constraint is not an obstacle to feasibility. In sharp contrast to the fair allocation literature, where envy-free allocations may not exist, envy-free full-allocation contracts always exist in our model under \cref{asp:usefulness}.
\begin{proposition}\label{prop:ef_exists}
 Under \cref{asp:usefulness}, there always exists a (full allocation) contract that is envy free, and there exists an algorithm that finds it in polynomial time.
\end{proposition}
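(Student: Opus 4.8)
The plan is to exhibit an explicit construction and verify directly that it is a full allocation contract satisfying both the IR and EF constraints; the construction will simultaneously be computable in polynomial time. The construction is exactly the one already used inside the proof of \cref{prop_full_allo}, now applied to \emph{every} task rather than only the unallocated ones. For each task $j \in \mathcal{M}$, restrict attention to the set $\mathcal{M}_j = \{i \mid p_{i,j} r_j - c_{i,j} \ge 0\}$, which is nonempty by \cref{asp:usefulness}; assign $j$ to the most cost-efficient agent $i^*_j = \argmin_{i \in \mathcal{M}_j} \frac{c_{i,j}}{p_{i,j} r_j}$, and set the linear contract to the minimum incentive-compatible level $\alpha_j = \frac{c_{i^*_j, j}}{p_{i^*_j, j} r_j}$.

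First I would check feasibility. Each task is assigned to exactly one agent, so this is a full allocation; since $i^*_j \in \mathcal{M}_j$ we have $c_{i^*_j, j} \le p_{i^*_j, j} r_j$, hence $\alpha_j \le 1$ is a valid linear contract; and the IR constraint holds with equality, since $\alpha_j p_{i^*_j, j} r_j - c_{i^*_j, j} = 0$. The core of the argument is then to verify the EF constraints. By the minimum-payment choice, every agent derives exactly zero utility from each of its own assigned tasks, so the left-hand side of every EF constraint is zero. It therefore suffices to show the right-hand side is also zero, i.e., that for every task $k$ and every agent $i$ the deviating utility $\alpha_k p_{i,k} r_k - c_{i,k}$ is nonpositive. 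I would split into two cases: if $i \in \mathcal{M}_k$, then since $i^*_k$ minimizes the ratio over $\mathcal{M}_k$ we have $\frac{c_{i^*_k, k}}{p_{i^*_k, k} r_k} \le \frac{c_{i,k}}{p_{i,k} r_k}$, whence $\alpha_k p_{i,k} r_k \le c_{i,k}$ and the deviating utility is $\le 0$; if $i \notin \mathcal{M}_k$, then $p_{i,k} r_k - c_{i,k} < 0$, and since $\alpha_k \le 1$ we get $\alpha_k p_{i,k} r_k - c_{i,k} \le p_{i,k} r_k - c_{i,k} < 0$. In both cases the $\max\{\cdot, 0\}$ term vanishes, so each right-hand side equals zero and the EF constraints hold.

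I do not expect a genuine obstacle here; the only delicate point is the case analysis ensuring the right-hand side vanishes even for agents outside $\mathcal{M}_k$, which is where $\alpha_k \le 1$ is used. An alternative and even shorter route is to observe that the empty contract with $S_i = \emptyset$ for all $i$ is vacuously envy-free, since both sides of \eqref{tmp_envy_freeness} are empty sums, and then to invoke \cref{prop_full_allo} to obtain a revenue-dominating envy-free full allocation contract; the construction in that proof is precisely the one above. Finally, for the running time, we compute one $\argmin$ over at most $n$ agents for each of the $m$ tasks, so the contract is produced in $O(mn)$ time, giving the claimed polynomial-time algorithm.
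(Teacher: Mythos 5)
Your construction is exactly the one the paper uses: assign each task $j$ to $\argmin_{i \in \mathcal{M}_j} c_{i,j}/(p_{i,j}r_j)$ with the minimum IR contract, so every agent earns zero from its own bundle and zero from any deviation. The proposal is correct and matches the paper's proof, merely spelling out the EF verification (the two-case analysis on whether $i \in \mathcal{M}_k$) that the paper leaves implicit.
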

The construction is the zero-rent one: assign each task $j$ to an agent with minimum break-even share $\tau_{i,j}$ and set $\alpha_j=\tau_{i,j}$. The assignee obtains zero utility, and every other agent weakly prefers shirking on that task, so all agents value all bundles at zero. Thus existence is easy because EF is a relative comparison; the hard problem is revenue optimization under EF.

The next observation explains why the full-allocation restriction does not lose revenue under the same usefulness assumption: any unallocated useful task can be added at its break-even share without creating envy.
\begin{proposition}\label{prop_full_allo}
Under \cref{asp:usefulness}, for any envy-free partial allocation contract $(S,\alpha)$, there exists an envy-free full allocation contract $(\hat{S},\hat{\alpha})$ that generates a weakly higher revenue for the principal.
\end{proposition}

 Therefore, from now on, we will refer to the {\it full allocation contracts} simply as {\it contracts}.

\paragraph{The Principal's Problem.} Given any contract $(S,\alpha)$, the expected revenue of the principal is
\begin{align*}
\rev(S,\alpha)=\sum_{i\in \mathcal{N}} \sum_{k \in S_i} (1-\alpha_{k}) p_{i,k}r_k.
\end{align*}
In this paper, the objective of the principal is to design a fair contract that maximizes its expected revenue, subject to the EF constraints and the effort constraints. Specifically, we have the following optimization program.
\begin{align} \label{pr:fair}
\max_{\{\alpha_k\}_{k\in \mathcal{M}}, \{S_i\}_{i\in \mathcal{N}}} \quad & \rev(S,\alpha) \tag{\optval}\\
\text{subject to} \qquad & \text{(\ref{effort-constraints}), (\ref{envy-free-constraints}) } \nonumber\\
&S_{i} \cap S_j =\emptyset, \quad \forall i\neq j \nonumber\\
& \bigcup_{i\in \mathcal{N}} S_i = \mathcal{M}. \nonumber
\end{align}
We also use $\optval$ to denote the optimal objective value for program \eqref{pr:fair}.

\paragraph{Relaxed Notions of EF contracts.} We also consider two relaxations of EF contracts: $\eps$-EF contracts and EF1 contracts. $\eps$-EF contracts relax the EF constraints by some $\eps>0$, while EF1 contracts relax EF to envy-free up to one task. Note that Proposition \ref{prop_full_allo} and \ref{prop:ef_exists} also hold for $\eps$-EF contracts and EF1 contracts.
\begin{definition}[$\eps$-Envy-Free Contracts]
\label{def:eps_ef}
A contract $(S,\alpha)$ is $\eps$-envy-free if
\begin{equation}
     \sum_{k \in S_i} \alpha_k p_{i, k }r_k - c_{i,k} \geq \sum_{k \in S_j} \max\{\alpha_k p_{i, k} r_k - c_{i, k},0\} - \eps,  \quad \forall j \neq i. \tag{$\eps$-EF constraints}
\end{equation}
\end{definition}

\begin{definition}[Envy-Free-Up-to-One (EF1) Contracts]
\label{def:ef1}
A contract $(S,\alpha)$ is envy-free up to one task (EF1) if for every ordered pair of agents $i, j$ with $i\neq j$,  either $S_j=\emptyset$, or there exists a task $k_{i,j} \in S_j$ such that
\begin{equation}
     \sum_{k \in S_i} \alpha_k p_{i, k }r_k - c_{i,k} \geq \sum_{k \in S_j\backslash\{k_{i,j}\}} \max\{\alpha_k p_{i, k} r_k - c_{i, k},0\},  \quad \forall j \neq i. \tag{EF1 constraints}
\end{equation}
\end{definition}

Let $\opteps$ and $\optefw$ be the optimal objective values by replacing the fairness constraints in the optimization program \eqref{pr:fair} with $\epsilon$-EF and EF1 respectively.

\paragraph{Additional notations.} In the remaining sections of the paper, we let $\lceil x\rceil_{\mathcal{X}}$ be the smallest element in $\mathcal{X}$ greater than or equal to $x$, given a $x\in \mathbb{R}$ and a discrete set $\mathcal{X}$. Also, let $[n] \triangleq \{1, 2, \dots, n\}$ for any integer $n$, and $\lfloor x \rfloor$ be the largest integer smaller than or equal to $x$.

\section{Computational Hardness for EF Contracts}
\label{sec:general}
In this section, we show that despite the desirable property that envy-free contracts always exist (\cref{prop:ef_exists}), finding a revenue-optimal one is computationally challenging.
Specifically, we show that even finding an approximately optimal envy-free contract  is \NP-hard for any constant approximation ratio (\cref{arbitryri_hardnesdd_linear}).
Moreover, we show that even if we restrict our attention to settings with a constant number of agents, beating the approximation ratio $\frac{5}{2}$ is \NP-hard (\cref{hardness_ef}).
The missing proofs from this section are provided in \cref{missingproof_sec_gener}.

\subsection{Hardness for General Settings}
In this section, we prove the hardness results of computing optimal EF contracts. In particular, we provide a stronger result that it is \NP-hard to approximate optimal EF contracts within any constant approximation ratio, even using the relaxed $\eps$-EF contracts for approximation.
The proof is by a reduction from the following problem.

\begin{definition}[$\textsf{GAP-BOUNDED-IS}_{\mu,\eta}$] For every $\mu \in [0,1]$ and $k \in \mathbb{N}_+$, we define $\textsf{GAP-Bounded-IS}_{\mu,k}$ as the following promise problem.
Given a graph $G=(V,E)$ in which each vertex has degree at most $k$ and an $\eta\in [\frac{1}{k},1]$ such that either one of the following is true:

\begin{itemize}
	\item there exists an independent set of size at least $\eta |V|$,
	\item all the independent sets have size at most $\mu \eta |V|$,
\end{itemize}
determine which of the two cases holds.
\end{definition}
It is well known that for every constant $\mu>0$, there exists a constant $k= k(\mu)$ that depends on $\mu$ such that $\textsf{GAP-Bounded-IS}_{\mu, \eta}$ is \NP-hard \citep{alon1995derandomized,trevisan2001non}.

\begin{theorem}\label{arbitryri_hardnesdd_linear}
For any constant $c \ge 1$, there exists a constant $\epsilon>0$ such that, assuming {\rm \P~$\neq$~\NP}, there does not exist a polynomial-time algorithm that computes an $\epsilon$-EF contract with revenue at least a $\frac{1}{c}$ fraction of $\optval$.
\end{theorem}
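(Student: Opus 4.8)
The plan is to establish the result by a gap-preserving reduction from $\textsf{GAP-Bounded-IS}_{\mu,k}$, exploiting the fact that this promise problem is \NP-hard once $\mu$ is a small enough constant and $k=k(\mu)$. Given a target factor $c\ge 1$, I would first fix a constant $\mu\le \frac{1}{2c}$ (the slack absorbs lower-order terms and differing gadget constants), which in turn fixes the degree bound $k$ and hence an \NP-hard family of instances. From a graph $G=(V,E)$ of maximum degree $k$ together with the threshold $\eta$, I would construct in polynomial time a contract instance together with a constant $\eps>0$ (depending only on the gadget, not on $|V|$) so that independent sets in $G$ correspond to high-revenue (approximately) envy-free contracts. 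The intended correspondence is: a vertex $v$ may be \emph{activated}, meaning its associated task is assigned so as to extract $\Theta(1)$ revenue while leaving the responsible agent near-zero utility, and two adjacent vertices cannot both be activated without forcing some agent to envy a neighbour's bundle by a fixed constant $\gamma$. Edges thus encode envy conflicts, and I would set $\eps<\gamma$ so that the additive relaxation to $\eps$-EF cannot absorb a single conflict.

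With such a gadget the two directions proceed as follows. For \textbf{completeness}, given an independent set $I$ with $|I|\ge \eta|V|$, I would activate exactly the vertices of $I$ and give every remaining agent enough utility (by raising its contract and sacrificing its revenue) that no envy arises; since $I$ is independent the activated agents do not conflict, yielding an \emph{exactly} envy-free contract of revenue at least $\Theta(1)\cdot\eta|V|$, hence $\optval\ge \Theta(1)\cdot\eta|V|$. For \textbf{soundness}, starting from any $\eps$-EF contract of revenue $R$, I would let $I$ be the set of vertices whose task yields more than a fixed threshold of revenue, argue that $I$ must be independent (a conflicting pair would force envy exceeding $\gamma>\eps$, contradicting $\eps$-envy-freeness), and bound the contribution of the remaining tasks by a small per-vertex amount $\delta$. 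This gives $R\le \Theta(1)\cdot|I|+\delta|V|$, so whenever all independent sets have size at most $\mu\eta|V|$ and $\delta$ is chosen small, every $\eps$-EF contract has revenue only a small multiple of $\mu\eta|V|$. Choosing $\mu$ small enough makes the ratio between the completeness and soundness revenues at least $c$, so a polynomial-time algorithm returning an $\eps$-EF contract of revenue at least $\frac1c\,\optval$ would decide $\textsf{GAP-Bounded-IS}_{\mu,k}$, contradicting $\P\neq\NP$.

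The main obstacle, which dictates the precise gadget, is engineering this activation/conflict structure while respecting the principal's full freedom to reassign tasks and tune contracts. A naive private-task-per-vertex gadget fails, because here an agent envies a task only when it values that task, and positive value means the task could instead be assigned to that agent at low $\alpha$, so a reassignment or a swap often restores envy-freeness and destroys the intended conflict. The construction must therefore guarantee that \emph{every} high-revenue way of handling the tasks around an edge, including swaps and reassignments to neighbours, still triggers a constant envy, so that revenue is genuinely governed by the independent-set number rather than by a clever reallocation. A secondary point is robustness to the additive $\eps$ slack: since each EF constraint compares whole bundles, I would use the degree bound $k$ to keep each agent entangled with only $O(1)$ neighbours, ensuring that a single edge-conflict contributes more than $\eps$ to the relevant constraint and that accumulated slack cannot silently activate a dependent set. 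Verifying these two properties for the concrete gadget is where the real work lies.
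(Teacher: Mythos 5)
Your high-level plan coincides with the paper's proof: the same reduction from $\textsf{GAP-Bounded-IS}_{\mu,k}$ with $\mu=\frac{1}{2c}$, the same activation/conflict correspondence between vertices and high-revenue tasks, an exactly-EF contract of revenue $\Theta(\eta|V|)$ in the completeness direction, and a soundness argument showing that the set of high-revenue vertices must be independent because an edge inside it would violate the EF constraint by more than $\eps$. However, the proposal has a genuine gap, and you name it yourself: the concrete gadget is never supplied, and you correctly observe that a naive private-task-per-vertex gadget fails and that ``verifying these two properties for the concrete gadget is where the real work lies.'' Since the entire difficulty of the theorem is in engineering a gadget in which (a) an edge conflict produces an envy of a fixed constant that the principal \emph{cannot} buy off, and (b) reassigning or swapping tasks cannot recover the lost revenue, the proposal as written establishes nothing beyond the (standard) outer shell of the reduction.

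For comparison, the paper's gadget resolves exactly the obstacle you identified, and it is worth seeing how. There is one efficient agent $b^*$ with $p_{b^*,d_v}=1$, $c_{b^*,d_v}=\tfrac12$ for every vertex task $d_v$ (reward $1$), so activating $v$ means assigning $d_v$ to $b^*$, which by IR forces $\alpha_{d_v}\ge\tfrac12$ and yields revenue about $\tfrac12$. For each edge $e$ there is an agent $b_e$ with success probability only $\delta=\frac{1}{8ck^2}$ on the tasks of $e$'s two endpoints and zero elsewhere; this kills the reassignment escape you worried about, because giving a vertex task to $b_e$ instead of $b^*$ yields revenue at most $\delta$, so the principal gains essentially nothing by doing so. The conflict is made uncompensable by giving $b_e$ a single private task $d_e$ of reward only $\delta/2$: if both endpoints of $e$ are activated, $b_e$'s utility from switching to $b^*$'s bundle is at least $2\cdot\tfrac12\cdot\delta=\delta$, while the maximum utility the principal can ever give $b_e$ is $\delta/2$, so the EF constraint is violated by at least $\delta/2>\eps=\delta/3$ no matter how the contracts are tuned. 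If only one endpoint is activated, $b_e$'s outside option is exactly $\delta/2$ and can be matched by setting $\alpha_{d_e}=1$, which is why the completeness direction yields an \emph{exact} EF contract. Note also that the degree bound $k$ is not used to limit each agent's entanglement (each $b_e$ touches exactly two vertex tasks regardless of $k$); it enters only through $|E|\le k|V|/2$ in the soundness revenue bound $\tfrac12|V^*|+\delta(|V|-|V^*|)+\tfrac{\delta}{2}|E|\le\tfrac{1}{2c}\eta|V|$ and through the choice of $\delta$. Without this (or an equivalent) construction, your argument does not yet constitute a proof.
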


\begin{proof}

The theorem is proved by reducing from the problem $\textsf{GAP-Bounded-IS}_{\mu, \eta}$.
For any desired constant approximation ratio $c$, we reduce from the problem $\textsf{GAP-Bounded-IS}_{\mu, \eta}$ with constants $\mu=\frac{1}{2c}$ and $k=k(\mu)$.
	In the following, we let $\gamma=|V|$, $\beta=|E|$, and we define the constant $\delta=\frac{1}{8ck^2}$, and $\epsilon = \delta/3$.
	Notice that since the maximum degree is $k$, we have $\beta\le \frac{\gamma k}{2} $.

	\paragraph{Construction.} Given a graph $G=(V,E)$, we construct an instance as follows.
	We first define the set of agents. For each edge $e\in E$, there exists an agent $b_{e}$. Additionally, we have one ``efficient" agent $b^*$.
	Then, we define two sets of tasks:
	\begin{itemize}
		\item For each vertex $v\in V$, we have a task $d_v$ with reward $r_{d_v}=1$. The probabilities of success and costs are constructed as follows.
		\begin{itemize}
			\item Let $p_{b^*,d_v}=1$ and $c_{b^*,d_v}=1/2$.
			\item If node $v\in e$, i.e.,  the edge $e$ connects $v$ to another node, then let  $p_{b_{e},d_v}=\delta$ and $c_{b_{e},d_v}=0$.
			\item Let $p_{b,d_v}=0$ and $c_{b,d_v}=0$  for all other agents $b$.
		\end{itemize}
		\item For each edge $e\in E$, we define a task $ d_{e}$ with $r_{d_{e}}=\delta/2$.  The probabilities of success and costs are defined as follows.
		\begin{itemize}
			\item For agent $b_{e}$, let $p_{b_{e}, d_{e}}=1$ and $c_{b_{e}, d_{e}}=0$.
			\item Let $p_{b,d_{e}}=c_{b,d_{e}}=0$ for all other agents $b$.
		\end{itemize}
	\end{itemize}

\paragraph{Key idea of the proof.}  To prove the statement, we mainly show the following two key statements: (i) If there exists an independent set of size at least $\eta |V|$, then there exists an EF contract with principal's expected revenue at least $\frac{1}{2} \eta\gamma$; (ii)
If all independent sets have size at most $\mu \eta \gamma$, then all $\epsilon$-EF contracts achieve principal's expected revenue strictly less than $\frac{1}{2c}\eta\gamma$.

Intuitively, to achieve a high revenue, the principal should assign as many tasks as possible to the efficient agent $b^*$, with a contract at least $\frac{1}{2}$. Therefore, if there exists an independent set of size at least $\eta |V|$, we can construct an EF contract that assigns at least $\eta |V|$ tasks to $b^*$, yielding a revenue of  $\frac{1}{2} \eta\gamma$. Otherwise, if all independent sets have size at most $\mu \eta |V|$, the key for proving the upper bound on revenue is to show that the number of tasks assigned to agent $b^*$ is at most $\mu\eta\gamma$, even if we allow $\eps$-EF constraints, where $\eps>0$ is a small constant and $\eps<\frac{\delta}{2}$. This is shown by contradiction: There must be tasks $d_v$ and $d_{v'}$ assigned to $b^*$, where $(v,v')=e \in E$, and the contract on each of these tasks is at least $\frac{1}{2}$ by the effort constraints. Hence, agent $b_{e}$ would get $\delta$ performing tasks $d_v$ and $d_v'$. Thus, it must be the case that the payment to agent $b_{e}$ from $d_{e}$ (the only other task with positive success probability) is $\delta$, which is impossible since $r_{d_{e}}=\frac{\delta}{2}$ and violates the EF constraint by at least $\frac{\delta}{2} > \eps$.

	\paragraph{Sufficiency.}

	Assume that there is an independent set $I$ such that $|I|\ge \eta \gamma$. 
		Consider the set of tasks $\{d_v\}_{v\in V}$. We assign all the tasks $d_v$ with $v\in I$ to agent $b^*$ and we set contract $\alpha_{d_{v}} = \frac{1}{2}$ for every $v\in I$. The effort constraints for agent $b^*$ are therefore satisfied.
	We assign every remaining task $d_v$, where $v\in V\setminus I$, to any agent $b_{e^*}$ with $v \in e^*$. We set the contract for these tasks $d_v$ to $\alpha_{d_v} = 0$. Note that this will not increase the utility of agent $b_{e^*}$, but it will increase the principal's revenue by at most $(\gamma - |I|)\delta$.

	Next, we consider the set of tasks $\{d_{e}\}_{e\in E}$. For any edge $e\in E$, note that at most one of its nodes $v\in e$ belongs to the set $I$.
	If such a node $v\in e\cap I$ exists, then agent $b_{e}$ would receive utility $\frac{1}{2}\delta$ by switching to the set of tasks $I$ assigned to $b^*$. Hence, we assign the task $d_{e}$ to this agent $b_{e}$ by setting contract $\alpha_{d_{e}}=1$, which leads to satisfying EF constraints.
	Otherwise, i.e., if $e\cap I=\emptyset$, the corresponding agent $b_{e}$ gets utility $0$ by switching to agent $b^*$'s assigned set $I$. Hence, we can assign the task $d_{e}$ to agent $b^*$ by setting contract $\alpha_{d_{e}} = 0$. Note that this will not change agents' utilities, and the EF constraints continue to hold. The principal gets revenue $0$ for this set of tasks.

	Hence, we conclude that the principal's revenue is
	\begin{equation*}
		\frac{1}{2}|I| + (\gamma - |I|)\delta \ge \frac{1}{2}\eta \gamma.
	\end{equation*}

	\paragraph{Necessity.}
	Assume that all the independent sets have size at most $\gamma \eta \mu$.
		Consider an allocation $S$ and contracts $\alpha$. Let $S_{b^*}$ be the set of tasks assigned to $b^*$ and $V^*$ be the subset of nodes in $V$ whose corresponding tasks are assigned to agent $b^*$, i.e., $V^* = \{v \in V\mid d_v \in S_{b^*} \}$. By the effort constraints, it must be the case that $\alpha_{d_v} \ge \frac{1}{2}$ for all $v\in V^*$.

	We first show that the size $|V^*| \le \gamma \eta \mu$. The following argument is by contradiction. Suppose $|V^*| > \gamma \eta \mu$. By assumption, we have that $V^*$ is not an independent set. Therefore, there must exist some edge $e = (v_1, v_2)$ such that both nodes $v_1, v_2 \in V^*$. This implies that both tasks $d_{v_1}$ and $d_{v_2}$ are assigned to agent $b^*$. Hence, by the EF constraints, if an agent $b_{e}$ switches to work on the set of tasks $S_{b^*}$, it would receive payments at least
	\begin{align}\label{eq:cont1}
		\alpha_{d_{v_1}}p_{b_{e}, d_{v_1}} r_{d_{v_1}} + \alpha_{d_{v_2}}p_{b_{e}, d_{v_2}} r_{d_{v_2}} \ge \delta.
	\end{align}
	Hence, the payment to agent $b_{e}$ is at least $\delta$. However, since we are restricted to linear contracts, the payments to agent $b_{e}$ is upperbounded by $\delta/2$. Indeed,  the only task outside $S_{b^*}$ from which $b_e$ can obtain positive utility is $d_e$, whose payment is at most $\delta/2$.
	Hence, the EF constraint is violated by at least $\delta/2>\epsilon$ by Equation \eqref{eq:cont1}, reaching a contradiction.

	Thus, we can restrict to allocations in which $|V^*|\le \gamma \eta \mu$.
	In this case, the principal's revenue is at most
	\[
	\frac{1}{2}|V^*| + \delta(\gamma - |V^*|)+ \frac{\delta}{2}\beta \le 	\frac{1}{2} \eta \mu \gamma + \delta (\gamma+ \frac{\gamma k}{2})< \frac{1}{2c} \eta \gamma
	\]
    where in the first inequality we use $\beta\le  \frac{\gamma k}{2}$, and the strict inequality follows from $\mu=1/(2c)$, $\delta=1/(8ck^2)$, $k\ge 1$, and $\eta\ge 1/k$.
	This concludes the proof.
\end{proof}

This immediately implies that there is no PTAS for computing  EF-optimal contracts.
\begin{corollary}
    	For any constant $c \ge 1$, assuming {\rm \P~$\neq$~\NP}, there does not exist a polynomial-time algorithm that computes an EF contract with revenue at least a $\frac{1}{c}$ fraction of $\optval$.
\end{corollary}

\subsection{Hardness for Settings with A Constant Number of Agents}
\label{sec:constant_agent}
In this section, we examine the hardness of computing optimal contracts in settings with a constant number of agents.
We first show that even in the restricted setting with a constant number of agents,
computing an EF contract with an approximation ratio  strictly below $\frac{5}{2}$ is still \NP-hard.
This rules out the possibility of a multiplicative PTAS for EF contracts.

\begin{theorem}\label{hardness_ef}
When there are three agents, assuming {\rm \P~$\neq$~\NP}, for every constant $\rho>0$ there does not exist a polynomial-time algorithm that computes an envy-free contract with revenue at least a $\left(\frac{2}{5}+\rho\right)$ fraction of $\optval$.
\end{theorem}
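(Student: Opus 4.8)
The plan is to reduce from the (\NP-complete) partition problem~\citep{garey2002computers}: given positive integers $a_1,\dots,a_t$ with $\sum_i a_i = 2T$, decide whether some subset sums to exactly $T$. I would build a three-agent instance whose optimal EF revenue is ``high'' (some value $H$) when a balanced partition exists, and at most a $\frac{2}{5}$ fraction of that value otherwise. Since $\optval$ would then certify which case holds, any polynomial-time algorithm returning an EF contract with revenue at least $\frac{2}{5}\optval$ would decide partition, contradicting $\P\neq\NP$.

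For the gadget I would introduce one item-task per integer $a_i$, together with a small number of high-reward ``anchor'' tasks. The two agents that can profitably perform the item-tasks play the roles of the two sides of the partition: the set of items assigned to each encodes one side. The third agent serves as a witness whose value for a bundle of item-tasks is proportional to its total $a$-weight, so that at the forced contracts its binding constraint in \eqref{envy-free-constraints} reads (witness utility) $\ge \tfrac{1}{2}\max(\text{weight of side }1,\ \text{weight of side }2)$. As the two weights sum to a fixed constant, this maximum equals $T$ exactly when the split is balanced and is at least $T+1$ by integrality otherwise. I would then calibrate the anchor rewards and the item costs---using the IR constraints \eqref{ir-constraints} to pin the item contracts to a fixed level, much as in \cref{prop:ef_exists}---so that the \emph{additional} compensation EF demands in the unbalanced case, relative to the balanced case, is a constant fraction of the total extractable revenue, producing the exact $5{:}2$ ratio rather than merely some constant.

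Two directions must then be established. For sufficiency, given a balanced partition I would exhibit an explicit EF contract: split the items accordingly, set each item contract to its IR-minimizing value, and set each anchor contract to the smallest value making the witness and the two workers envy-free; a direct computation shows its revenue attains $H$, with all three pairwise constraints verified including the $\max\{\cdot,0\}$ terms. The necessity direction is the main obstacle: I must show that when no balanced partition exists, \emph{every} EF full-allocation contract earns at most $\frac{2}{5}H$. This is a case analysis over allocations, not over a single configuration, precisely because EF only binds through cross-valuation---at IR-minimizing contracts symmetric agents are trivially envy-free with zero utilities---so I must rule out the principal recovering revenue by reassigning item-tasks to the witness, by dumping all items on a single worker, or by trading item contracts against anchor contracts. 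The crux will be a lemma bounding any EF contract's revenue by a linear functional of the induced split's imbalance, which the integrality gap forces to cost at least the tuned constant; I expect the delicate part to be showing this bound holds uniformly over all allocations (and is not circumvented by the $\max\{\cdot,0\}$ slack, by empty worker bundles, or by \cref{asp:usefulness}-driven reassignments), together with pinning the numerical parameters so the ratio is exactly $\frac{2}{5}$.
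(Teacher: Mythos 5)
Your reduction source (partition) and overall decision framework match the paper's, and your sufficiency direction is essentially what the paper does. But the mechanism you propose for the necessity direction contains a genuine flaw: you plan to bound any EF contract's revenue by ``a linear functional of the induced split's imbalance'' and let the integrality gap ($T$ versus $T+1$) cost a \emph{constant} fraction of the revenue. This cannot work. One unit of imbalance corresponds to a single integer $a_i$ out of a total weight $2T$, and since partition instances encode their integers in binary, $T$ can be exponential in the instance size while the total reward of the item-tasks (and hence the total extractable revenue) must remain polynomially bounded with each $r_j \in [0,1]$. The ``additional compensation'' triggered by a one-unit imbalance is therefore an $O(1/T)$ fraction of the item value --- vanishingly, indeed exponentially, small --- and no calibration of anchor rewards can inflate it to a constant fraction of revenue. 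A marginal/continuous accounting of how much extra EF costs in the unbalanced case can at best prove hardness of \emph{exact} optimization (as in \cref{prop:two_ef_hard}), never a constant-factor gap.

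The paper's necessity argument is qualitatively different: it is a threshold/infeasibility argument, not a marginal one. There is a single anchor task $j^*$ (reward $1$, done by agent $1$ with cost $1/2$, valued at $1/10$ by agents $2$ and $3$), and the item-tasks are scaled so their total value to agents $2$ and $3$ is exactly $1/10$. If $j^*$ is assigned to agent $1$, IR forces $\alpha_{j^*}\ge 1/2$, so \emph{each} of agents $2$ and $3$ envies agent $1$ by exactly $1/20$ --- half the total item value --- and the EF constraints force each to receive item weight at least $\tfrac12\sum_j n_j$. Both inequalities hold simultaneously only for an exact partition; when none exists, the high-revenue configuration is outright infeasible, and every remaining allocation (in particular $j^*$ assigned to agent $2$ or $3$) caps revenue at $1/5$ versus $1/2$ in the yes-case --- a fixed constant gap independent of the integers' magnitudes. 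Note also that the paper needs no separate ``witness'': the two partition-side agents are themselves the envious parties, and the anchor agent has $p_{1,j}=0$ on all item-tasks so it never envies anyone, which disposes of most of the case analysis you anticipate. Your witness-envies-the-max-side gadget could perhaps be repaired into an infeasibility argument of the same flavor (cap the witness's achievable utility at exactly $\tfrac12 T$ so that $\max(w_1,w_2)>T$ is infeasible), but as written --- a revenue penalty linear in the imbalance --- the argument does not yield the claimed $\tfrac{2}{5}$ factor.
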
 

Interestingly, the hardness proof of \cref{hardness_ef} can be easily adapted to show that it is \NP-hard to compute a constant-ratio approximation of optimal EF1 and $\eps$-EF contracts.

\begin{corollary}\label{hardness_ef1_epsef}
    When there are three agents, assuming {\rm \P~$\neq$~\NP}, for every constant $\rho>0$ we have that (i) there does not exist a polynomial-time algorithm that computes an EF1 contract with revenue at least a $\left(\frac{7}{10}+\rho\right)$ fraction of  $\optefw$; and (ii) for any $\eps<\frac{1}{20}$, there does not exist a polynomial-time algorithm that computes an $\eps$-EF contract with revenue at least a $\left(\frac{7}{10}+\rho\right)$ fraction of  $\opteps$.
\end{corollary}

Moreover, in the more extreme case with only two agents, we show that computing the optimal EF contract remains \NP-hard. It remains open whether nontrivial approximation guarantees are possible
for the two-agent case.
\begin{proposition}
\label{prop:two_ef_hard}
When there are only two agents, assuming {\rm \P~$\neq$~\NP}, there does not exist a polynomial-time algorithm that computes the optimal envy-free contract.
\end{proposition}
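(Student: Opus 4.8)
The plan is to establish \NP-hardness via a polynomial-time reduction from \textsf{Partition}: given positive integers $n_1,\dots,n_\ell$ with $\sum_j n_j = 2T$, decide whether some subset sums to $T$. I would build a two-agent contract instance whose optimal envy-free revenue reaches a designated threshold $V$ if and only if the \textsf{Partition} instance is a yes-instance; an exact solver for \eqref{pr:fair} would then decide \textsf{Partition}. The revenue gap between the yes and no cases will be made of order $1/M$ for a large scaling parameter $M$, so the reduction yields only exact hardness and says nothing about approximation, consistent with the open problem stated after the proposition.

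The construction follows the template of \cref{hardness_ef} but must manufacture the combinatorial ``balance'' out of a single pair of agents. I would introduce (i) an anchor/revenue task $j^*$ that is efficiently performed only by agent $1$, with $c_{1,j^*}$ chosen so that the IR constraint pins its contract to a minimum level (say $\alpha_{j^*}\ge \frac12$) whenever it is assigned to agent $1$, together with a small but positive success probability $p_{2,j^*}$ so that agent $2$ attaches an unavoidable positive value to $j^*$; and (ii) a family of number tasks $\{t_j\}_{j\in[\ell]}$ with rewards proportional to $n_j$ that both agents value identically. The intended optimal contract assigns $j^*$ to agent $1$ at its IR floor (extracting the anchor revenue) and splits the number tasks between the two agents. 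The two EF inequalities---agent $1$ not envying agent $2$ and agent $2$ not envying agent $1$---then sandwich the agents' number-task utilities and, at the revenue-maximizing contract, are meant to force the split to realize an exact equal partition of the $n_j$'s.

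For the forward direction (yes $\Rightarrow$ revenue $\ge V$), given a subset summing to $T$ I exhibit the explicit contract above and verify directly that it is IR and envy-free and attains revenue $V$. For the reverse direction (no $\Rightarrow$ revenue $<V$), I argue that any contract attaining $V$ must place $j^*$ on agent $1$ at the IR floor (any other placement or a higher contract forfeits revenue), and that, because the number tasks are valued identically by both agents, the pair of EF inequalities forces the two agents' number-task utilities to coincide; revenue maximization then drives the number-task contracts to their extreme values, at which this equality reads $\sum_{k\in S_1} n_k = \sum_{k\in S_2} n_k = T$. Contrapositively, if no partition exists, revenue $V$ is unattainable and the optimum is strictly smaller.

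The main obstacle is precisely the interaction between the continuous contract variables and the discrete partition condition. Because each $\alpha_j$ ranges over $[0,1]$, the EF inequalities are \emph{a priori} satisfiable by fractional contracts that meet the required utility targets with no genuine subset-sum---indeed, in the naive symmetric construction the principal can set all number-task contracts to zero and trivialize envy-freeness. The heart of the argument is therefore to choose the costs and probabilities so that revenue maximization pushes every relevant contract to an extreme point (either the IR floor or $1$), at which the attainable utility levels form a discrete set indexed by subsets of $[\ell]$, so that the sandwiched EF equality is an honest integer partition constraint rather than a continuously achievable target. A secondary obstacle is ruling out the degenerate deviations the principal might exploit---assigning $j^*$ to agent $2$, leaving number tasks underpaid, or unbalancing the split to save a little payment---which I would handle by a short case analysis showing that each such deviation loses strictly more than the $O(1/M)$ slack separating the yes and no cases.
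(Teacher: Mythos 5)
Your high-level plan matches the paper's: both reduce from \textsf{Partition}, both use an anchor task $j^*$ whose IR floor pins $\alpha_{j^*}\ge\tfrac12$ on agent $1$ while agent $2$ attaches a small positive value to it, and both intend the two EF inequalities to sandwich the number-task utilities into an exact split. However, there is a genuine gap precisely at the point you flag as ``the heart of the argument,'' and the specific design you commit to --- number tasks ``that both agents value identically'' --- provably cannot close it. With identical valuations, write $U(S)=\sum_{k\in S}\alpha_k r_k$ for either agent's utility from a set of number tasks; the two EF constraints only force $U(S_2)-U(S_1)$ to lie in the interval $[\alpha_{j^*}p_{2,j^*},\,\alpha_{j^*}-\tfrac12]$. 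The principal can then take $S_1=\emptyset$ on the number tasks, put everything in $S_2$, and choose a single \emph{fractional} level $\alpha_k\equiv \Sigma/\sum_j r_j$ hitting the required difference exactly; the resulting revenue is the same whether or not a partition exists, because nothing in the instance penalizes an unbalanced allocation. Your claim that ``revenue maximization drives the number-task contracts to their extreme values'' is exactly what fails here: with symmetric valuations the revenue is a function of $\alpha_{j^*}+\Sigma$ alone and is indifferent among a continuum of fractional optima.

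The paper's proof supplies two ingredients that your sketch is missing and that are what actually discretize the problem. First, the number tasks are valued \emph{asymmetrically}, with $p_{1,j}=1$ and $p_{2,j}=\tfrac12$ (a fixed $2{:}1$ ratio). This has two effects: (i) a payment-shifting perturbation that keeps agent $2$'s EF constraint tight automatically keeps agent $1$'s tight as well, which is used to show $\alpha_j=0$ for all $j\in S_1$ at the optimum; and (ii) welfare strictly prefers placing number tasks on agent $1$, so the binding constraint $\sum_{j\in S_2}\alpha_j r_j p_{2,j}=\tfrac{1}{20}$ together with $\alpha_j\le 1$ forces $\sum_{j\in S_2} r_j p_{2,j}\ge\tfrac1{20}$, and any strict inequality loses revenue --- equality is precisely the partition condition. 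Second, the paper adds an auxiliary task $\bar j$ with $p_{2,\bar j}=0$, visible only to agent $1$; this gives the principal a free degree of freedom to make agent $1$'s EF constraint tight (Observation in the appendix) for \emph{any} candidate split, without which feasibility of the intended assignment would fail outright (your symmetric construction with $\alpha_{j^*}=\tfrac12$ is actually infeasible: adding the two EF constraints yields $0\ge \tfrac12 p_{2,j^*}$). To repair your proposal you would need to introduce both the valuation asymmetry on the number tasks and a slack task of this kind, and then carry out the tightness-and-perturbation case analysis; as written, the reverse direction does not go through.
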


\section{Polynomial-time Algorithms in Restricted Settings}\label{section_resutionced_alg}

In this section, we focus on the design of polynomial-time algorithms that may achieve ``good" revenue in restricted settings. Recall the strong negative results \Cref{arbitryri_hardnesdd_linear} that in general settings of an arbitrary number of agents and tasks, there is no PTAS for approximating the optimal EF contracts even with $\eps$-EF contracts. This motivates us to consider two restricted settings with (i) either a constant number of tasks, or (ii) a constant number of agents.

In terms of settings with a constant number of tasks, we show that the optimization problem can be solved via solving a set of linear programs with polynomial size, and hence the optimal EF-contracts, $\eps$-EF contracts, and EF1 contracts can be found in polynomial time (\cref{thm:constant_task}).

The second set of results studies settings with a constant number of agents. Since \Cref{hardness_ef} shows that the problem  does not admit a PTAS under exact EF unless \P=\NP,  we first circumvent such hardness results by considering two relaxations,  $\eps$-EF and EF1. Then, for each relaxation, we present an (additive) FPTAS via dynamic programming (\cref{thm:constant_agent,thm:constant_agentEF1}). We leave open the question of whether exact EF contracts admit any nontrivial constant approximation ratio when the number of agents is constant.

\subsection{Polynomial-time Algorithms for a Constant Number of Tasks}
\label{subsec:const_tasks}

We first present polynomial-time algorithms for computing optimal contracts under EF and EF1 constraints.

\begin{theorem}\label{thm:constant_task}
When the number of tasks $m$ is a constant, there exists a polynomial-time algorithm that computes optimal contracts in $\poly(n)$ time under EF, $\eps$-EF and EF1 constraints.
\end{theorem}

We first observe from Program (\ref{pr:fair}) that if the allocation of tasks is fixed, Program (\ref{pr:fair}) can be reformulated as a linear program. Therefore, if the number of tasks $m$ is a constant, we can enumerate all the possible allocations of tasks, where the number of such allocations is at most $O(n^m)$. For each allocation, we solve the corresponding LP. All these steps clearly take polynomial time. Similarly, we can apply a similar procedure to compute the optimal $\eps$-EF contracts.

The proof for computing optimal EF1 contracts is more involved. While the high-level idea is similar to EF-contract cases, which first enumerate the allocation of tasks and then formulate a polynomial-size linear program, the challenges come from the requirements that we need to additionally decide the task to be removed (called {\it removable task}) in EF1 constraints. A naive enumeration for removable tasks can result in an exponential number of possibilities. For example, suppose all tasks are assigned to agent $i$ in one enumeration. For any other agent $j\neq i$, to satisfy the EF1 constraints of agent $j$ envying agent $i$, one needs to determine the removable task. Therefore, simply enumerating the removable tasks for EF1 constraints of all other $j\neq i$ envying agent $i$ would result in $O(m^n)$ possibilities.

To overcome the above challenge, we enumerate the contracts adopted. Without loss of generality, consider the EF1 constraints of agent $i$ envying agent $j$, whose assigned set of tasks are $S_i$ and $S_j$, respectively. Since we have a constant number of tasks, the only case that leads to exponential complexity is $|S_i| = 0$ and $|S_j|>0$, as discussed above. Notice that agent $i$ receives utility $0$ from set $S_i$. To ensure EF1 constraints, we only need to ensure that agent $i$ gains positive utility from at most one task (e.g., task $k$) in $S_j$. This implies one possible upper bound on the contract $\alpha_k$, which is the {\it  break-even share} sufficient to incentivize agent $i$ to exert effort on task $k$. Hence, instead of enumerating the removable tasks, we enumerate the  break-even shares to ensure that every empty-bundle agent gains positive utility from at most one task in $S_j$. This reduces the exponential complexity to polynomial complexity $O(n^m)$.

\begin{proof}[Proof of \cref{thm:constant_task}] We present the proof for computing optimal EF contracts, $\eps$-EF contracts and  EF1 contracts separately.

\paragraph{Computing optimal EF contracts.}
Notice that if the number of tasks $m$ is a constant, there are at most $O(n^m)$ different allocations of tasks. To prove the result, it is sufficient to show that given an allocation $S_1, S_2, \dots, S_n$, Program (\ref{pr:fair}) is a linear program. By introducing additional variables $t_{i,k}$ for all $i \in \mathcal{N}, k\in \mathcal{M}$, we have the Program (\ref{pr:fair}) become
\begin{align} \label{OPT_EF_LP_add_var}
\max_{\{\alpha_k\}_{k\in \mathcal{M}}} \quad & \rev(S,\alpha) \tag{OPT-EF-LP}\\
\text{subject to} \qquad & \alpha_j p_{i, j }r_j - c_{i,j} \geq 0, \quad \forall i \in \mathcal{N}, j\in S_i. \nonumber\\
& \sum_{k \in S_i} \alpha_k p_{i, k }r_k - c_{i,k} \geq \sum_{k \in S_j} t_{i,k},  \quad \forall j \neq i. \nonumber \\
&t_{i, k} \ge \alpha_k p_{i, k} r_k - c_{i, k}, \quad \forall i \in \mathcal{N}, k\in \mathcal{M}. \nonumber \\
&t_{i, k} \ge 0, \quad \forall i \in \mathcal{N}, k\in \mathcal{M}. \nonumber
\end{align}
which is a linear program with polynomial number of variables and constraints. This concludes the proof.

\paragraph{Computing optimal $\eps$-EF contracts.} Following the proof for EF contracts, we can find the optimal $\eps$-EF contracts by first enumerating the allocation of tasks, and then formulating a linear program.

\paragraph{Computing optimal EF1 contracts.}
    The proof for EF1 contracts is more involved. We first note that given an allocation $S_1, S_2, \dots, S_n$, there are at most a constant number of agents assigned a non-empty set of tasks. The number of such allocations is at most $O(n^m)$. Hence, as in the proof of \cref{thm:constant_task}, we can enumerate over all the allocations.

    Hence, in the following we fix an allocation $S_1, S_2, \dots, S_n$. Consider the EF1 constraint relative to a pair of agents $(i, j)$, which guarantees that agent $i$ prefers $S_i$ over $S_j$ up to one item. Note that if we directly enumerate all the possible \emph{removable tasks} $k_{i,j}$ (as in Definition \ref{def:ef1}) for all pairs of agents $(i,j)$, there would be an exponential number of possible sets of constraints, \emph{i.e.}, combinations of removal tasks. Our main result is that we actually only need to enumerate a polynomial number of possible sets of constraints. Then, for each allocation and combination of removal tasks, we solve a linear program.

    For pair $(i, j)$, consider the following four cases:
    \begin{enumerate}[(i)]
    \item  $|S_i|>0$ and $|S_j|=0$
    \item $|S_i|=0$ and $|S_j|=0$
    \item $|S_i|>0$ and $|S_j|>0$
    \item $|S_i|=0$ and $|S_j|>0$
\end{enumerate}

(i) Since the effort constraints give agent $i$ weakly nonnegative utility from every assigned task, the EF1 constraints are satisfied.

(ii) Since both sides of EF1 constraints are $0$, the EF1 constraints are satisfied.

(iii)  There are at most $|S_j|$ different possible EF1 constraints, which are obtained by enumerating all tasks $k_{i,j}\in S_j$. Since there are at most $m$ agents such that $|S_i|>0$, we get that the total number of possible sets of EF1 constraints related to an agent $j$ is $O(m^m)$. Moreover, there are at most $m$ such $S_j$, and hence the total number of possible sets is $O((m^m)^m)$.

(iv) Suppose the set of agents that are assigned the empty set is $A$. First, note that there are at most $n$ agents that are assigned empty sets of tasks. Hence, we cannot directly enumerate all the removable tasks $k_{i,j}$ for each agent $i$. However, the EF1 constraints of pairs $(i,j)$ imply that
    \begin{equation}\label{aps_ef1_0removeal}
    0 \geq \sum_{k \in S_j\backslash\{k_{i,j}\}} \max\{\alpha_k p_{i, k} r_k - c_{i, k},0\},  \quad \forall j \neq i, \exists k_{i,j} \in S_j
\end{equation}
which implies that for task $k\in S_j\setminus \{k_{i,j}\}$, the contract must be at most
\[
\alpha_k \le \tau_{i,k}
\]
This means that if the removable task is $k_{i,j}$, as long as  $\alpha_k \le \tau_{i,k}$ holds for all tasks $k\in S_j\setminus \{k_{i,j}\}$, the agent pair $(i,j)$ is EF1.  Equivalently, an empty agent can have positive utility for at most one task in the compared bundle.
Moreover, in this case, the contract $\alpha_{k_{i,j}}$ of the removable task  $k_{i,j}$ has no restriction. In other words, there exists at most one task in $S_j$, from which agent $i$ can get positive utility.

Given an EF1 contract, suppose the set of removable tasks from $S_j$ is $\{k_{i,j}\}_{i\in A}$ for the set of agents $A$. Note that $\{k_{i,j}\}_{i\in A}$ is a multiset.
By the above discussion, it determines the upper bounds (i.e., the  break-even shares) for all contracts $\alpha_k$ of all tasks $k\in S_{j}$, which is also a sufficient condition for EF1 constraints to hold for all agents $i\in A$ regarding set $S_j$.
Hence, instead of enumerating removable tasks for all agents $i\in A$ regarding set $S_j$, we can equivalently enumerate the upper bounds for contracts of all tasks in $S_j$.

The algorithm for case (iv) works as in Algorithm \ref{algorithm_ef1polynomialcase4}.

\begin{algorithm}[t]
\caption{Find Upper Bounds for Contracts $\alpha_k$, $k\in S_j$ \\
\textbf{Input:} Set of tasks $S_j$,  set of empty-bundle agents $A$, $\{p_{i,j}\}_{i \in {A}, j \in S_j}$, $\{c_{i,j}\}_{i \in {A}, j \in {S_j}}$, $m, n$ \\
\textbf{Output:} All feasible sets of indexes $\{i_k\}_{k \in S_j}$.
}\label{algorithm_ef1polynomialcase4}
Compute the break-even shares $\tau_{i,k}$ for all agents $i\in A$ and tasks $k\in S_j$ \;
Sort the distinct finite values in $\{\tau_{i,k}\}_{i \in A}$ for task $k$ from low to high. Denote the list of threshold values by $l_k$. \;
$\mathcal{F} \gets \emptyset$ \;
\For{ every set of indexes $\{i_k\}_{k \in S_j}$}{
   \If{$\{i_k\}_{k \in S_j}$ is feasible}{
$\mathcal{F} = \mathcal{F} \cup \{\{i_k\}_{k \in S_j}\}$ \;
   }
}
\Return $ \mathcal{F}$.
\end{algorithm}

For each task $k \in S_j$, we calculate the break-even share $\tau_{i,k}$ for agent $i\in A$. Dispose $\tau_{i,k}$ if $\tau_{i,k}> 1$. To ease exposition, assume all break-even shares $\tau_{i,k}$ are at most $1$.  Then, for each task $k$, we sort all agents according to their break-even shares from low to high. Additionally, we add one {\it virtual} threshold value $\omega=1$ to serve as an upper bound on contracts. Denote this list as $l_k$.
Now,  we have $|S_j|$ such lists. To determine the upper bounds for contract $\alpha_k$ of task  $k\in S_j$, we can iterate over the threshold values in list $l_k$ and use them as candidate upper bounds.

Suppose in one iteration, we obtain indexes $\{i_k\}_{k\in S_j}$ for all the lists.
Then, given an allocation of sets $S_1, S_2, \dots, S_n$, which can be obtained by enumeration, we form a linear program
\begin{align} \label{ef1_tmp}
\max_{\{\alpha_k\}_{k\in \mathcal{M}}} \quad & \rev(S,\alpha) \tag{EF1-Program}\\
\text{subject to} \qquad & \text{(\ref{effort-constraints})} \nonumber\\
& \text{ EF1 constraints from case (iii) } \nonumber\\
& \text{ Upper bounds  $\alpha_k \le l_k [i_k] $ for all the tasks} \nonumber
\end{align}

However, not every enumeration of upper bounds is desired, as some of them may lead to a solution of (\ref{ef1_tmp}) that is not EF1. To see this, consider an agent $a \in A$ and two tasks $k, k' \in S_j$. The indexes of agent $i$ in lists $l_k, l_{k'}$ are $a_k, a_{k'}$, respectively. Suppose in one enumeration, we have enumerated indexes $i_k > a_k$ and $i_{k'}>a_{k'}$. By solving (\ref{ef1_tmp}), it may lead to one \emph{undesired} solution where $\tau_{l_k [a_k],k} < \alpha_{k} \le \tau_{l_k [i_k],k}$ and $\tau_{l_{k'} [a_{k'}],k'} < \alpha_{k'} \le \tau_{l_{k'} [i_{k'}],k'}$. This solution is not EF1 for agent $a$, as it gives agent $a$ a positive utility from exerting efforts on set $S_j$.

To rule out the above undesired enumerated upper bounds, we need to \emph{check the feasibility of enumerated upper bounds} as follows: If there do not exist two lists $l_k, l_{k'}$ where $k, k' \in S_j$ such that there is an agent $a\in A$ appearing in both lists $l_k[1, \dots, i_k-1]$ and $l_{k'} [1, \dots, i_{k'}-1]$ for two enumerated index $i_k, i_{k'}$, the enumeration of upper bounds is feasible. 

Finally, we show that there are at most a polynomial number of possible sets of constraints in case (iv).
Note that there are at most $O(n^{|S_j|})$ possible sets of upper bounds for the set $S_j$. Since there are at most $m$ tasks, the total number of possible sets of upper bounds is $O(n^m)$.

\medskip

In summary of the above four cases, given a task allocation, we need to solve at most $O(m^{m^2}  n^m)$ different programs (\ref{ef1_tmp}). For each task allocation, EF1 constraints, and upper bounds, we check if the enumerated upper bounds are feasible. If it is the case, we proceed to solve an LP:
For each program (\ref{ef1_tmp}), by introducing additional variables $t_{i,k}$ for all $i \in \mathcal{N}, k\in \mathcal{M}$ and rewriting  EF1 contracts as in (\ref{OPT_EF_LP_add_var}), we obtain a linear program with a polynomial number of variables and constraints.  Finally, since the number of allocations of tasks is at most $O(n^m)$, we have that the complexity of computing EF1 contracts is polynomial.
This concludes the proof.
\end{proof}

\subsection{FPTAS for a Constant Number of Agents}

In this section, we focus on polynomial-time algorithms for finding approximately optimal contracts (with respect to optimal EF contracts) under two relaxed notions of fairness, $\eps$-EF
 and EF1. We start by introducing a general dynamic programming template.

\paragraph{A General Dynamic Programming Template.}
An approximately optimal contract can be computed efficiently by discretizing the set of possible contracts and utilities.
In particular, we consider a setting where
\begin{itemize}
\item the contract $\hat \alpha_j$ for task $j \in \mathcal{M}$ belongs to a discrete set $\mathcal{D}_j$.
\item the utility $\hat U_{i,j}(\hat \alpha_j)$ of agent $i$ exerting {\it optimal} effort on task $j$ under contract $\hat \alpha_j\in\mathcal{D}_j$ belongs to a discrete set $\mathcal{U}_i$.
\item the principal's expected revenue $\hat U^P(\hat \alpha_j,i)$ from a task $j$ performed by agent $i$ under contract  $\hat \alpha_j$ belongs to a discrete set $\mathcal{H}$.
\end{itemize}

Algorithm \ref{algorithm_dynmaic_pr2} provides a general dynamic programming template.
The algorithm iterates over all the tasks $\ell \in \mathcal{M}$ and keeps track of the possible utility vectors that can be induced by contracts satisfying the effort constraints.
Formally, it keeps track of a set $\mathcal{L}_\ell$ of all the possible reachable profiles $(\hat S,\hat \alpha, y)$,
where considering assigning the first $\ell$ tasks,
\begin{itemize}
\item $\hat S$ is a partial allocation of the tasks, and $\hat \alpha$ represents the contracts of the tasks;
\item $y=(\hat{h}, (\hat v_{i,j})_{i,j \in \mathcal{N}})$ is the utility profile of agents and the principal:  $\hat{h}$ represents the principal's expected revenue,
and $\hat{v}_{i,j}$ is the utility of agent $i$ working on the tasks assigned to agent $j$.
\end{itemize}

The reachable profiles are computed as follows. For each contract $\hat \alpha_\ell \in \mathcal{D}_\ell$ and agent $k\in\mathcal{N}$, the algorithm checks whether assigning task $\ell$ to agent $k$ under $\hat \alpha_\ell$ satisfies the effort constraint. If so, it computes all the possible partial allocations obtained by assigning task $\ell$ to agent $k$ with contract $\hat \alpha_\ell$,  given the partial allocation $(\hat S,\hat \alpha,(\hat h,(\hat v_{i,j})_{i,j \in \mathcal{N}}))\in \mathcal{L}_{\ell-1}$ (Lines \ref{line:update1} $\sim$ \ref{line:update4} in Algorithm \ref{algorithm_dynmaic_pr2}). Specifically, the utility profile is computed as follows:
\begin{subequations} \label{eq:update}
\begin{align}
    &h'= \hat h + \hat U^P(\hat \alpha_\ell,k)\\
    & v'_{i,k}= \hat v_{i,k} + \hat U_{i,\ell}(\hat \alpha_\ell)\quad \forall i \in \mathcal{N}\\
    & v'_{i,j}= \hat v_{i,j}  \quad \forall i\in \mathcal{N}, \forall j\neq k
\end{align}
\end{subequations}
Note that our goal is only to have a representative contract (and allocation) for each reachable utility profile $y$. If a  pair $(\hat S',\hat \alpha')$ induces the same utility profile as another  pair in $\mathcal{L}_{\ell}$, it is removed (See Line \ref{line:remove}).

Finally, for each utility profile $y$, the algorithm returns a  pair $(\hat S,\hat \alpha)$ that induces it.
Formally,
\begin{proposition}\label{thm:DP}
Algorithm \ref{algorithm_dynmaic_pr2} returns a set $\mathcal{C}$ such that (i) all $(\hat S,\hat \alpha)\in \mathcal{C}$ satisfy the effort constraints, and (ii) for each $(S,\alpha)$ with $\alpha \in  \prod_{j \in \mathcal{M}}\mathcal{D}_j$ that satisfies the effort constraints, there exists an $(\hat S,\hat \alpha) \in \mathcal{C}$ such that
    \begin{align*}
        &\sum_{i \in \mathcal{N}} \sum_{j \in S_i} \hat U^P( \alpha_j,i)= \sum_{i \in \mathcal{N}} \sum_{j \in \hat S_i} \hat U^P(\hat \alpha_j,i)\\
        &\sum_{k \in  S_j} \hat U_{i,k}(\alpha_k) = \sum_{k \in \hat S_j} \hat U_{i,k}(\hat \alpha_k) \quad\forall i,j\in \mathcal{N}.
    \end{align*}
\end{proposition}

\begin{proof}
Property (i) follows from the feasibility check in Line \ref{alg2_line8ir}.
The second property follows from standard dynamic programming arguments that the algorithm keeps track of all possible utility profiles.
\end{proof}

The running time of the algorithm depends on the discretizations, which will be analyzed when instantiating it for $\eps$-EF contracts or EF1 contracts.
In particular, it depends on the number of possible utility tuples that can be induced.

\begin{algorithm}[t]
\caption{Dynamic programming to find all  utility profiles \\
\textbf{Input:} $\{r_j\}_{j \in \mathcal{M}}$, $\{p_{i,j}\}_{i \in \mathcal{N}, j \in \mathcal{M}}$, $\{c_{i,j}\}_{i \in \mathcal{N}, j \in \mathcal{M}}$, $m, n$, discretization of contracts $(\mathcal{D}_{j})_{j \in \mathcal{M}}$, of agents' utilities $(\mathcal{U}_{i})_{i \in \mathcal{N}}$, and principal's expected revenue $\mathcal{H}$, utility function $\hat U_{i,j}(\cdot)$ and $\hat U^P(\cdot,\cdot)$  \\
\textbf{Output:} set of  pairs (allocation, contracts)  $\mathcal{C}$.
}\label{algorithm_dynmaic_pr2}
$\mathcal{L}_0\gets \{(\emptyset^n,\emptyset, 0^{n^2+1})\}$\;
\For{$\ell = 1, \dots, m$}{
  $\mathcal{L}_{\ell} \gets \emptyset$ \;
  \For{$\hat{\alpha}_{\ell} \in \mathcal{D}_\ell$}{
    \For{\text{agent} $i = 1, \dots, n$}{
        \If{$\hat \alpha_\ell p_{i,\ell} r_\ell-c_{i,\ell} <0$ \label{alg2_line8ir}}{
        continue\;
        }   \label{alg2_line1oir}
      \For{  $(\hat{S}, \hat \alpha,y) \in \mathcal{L}_{\ell -1}$}{
           Let $\hat{S}_i' = \hat{S}_i \cup\{\ell\}$ \label{line:update1}\;
            Let $\hat{S}_j' = \hat{S}_j$ for each $j \neq i$ \;
           Let $\hat{\alpha}'=( \hat \alpha, \hat \alpha_\ell)$\;
        Compute $y'$ as per Equation \eqref{eq:update}\label{line:update4}\;
        \If{  there is no $(S, \alpha,y)\in \mathcal{L}_{\ell}$ such that $y=y'$ \label{line:remove}} {
        $\mathcal{L}_{\ell} = \mathcal{L}_{\ell} \cup  \{(\hat{S}', \hat \alpha',y')\}$ \;
        }
        }}}}
\Return $\mathcal{C}=\{\hat S,\hat \alpha: \textnormal{there exists a $y$ with } (\hat S,\hat \alpha,y) \in \mathcal{L}_m\}$.
\end{algorithm}

\paragraph{FPTAS Algorithms.}

We instantiate the dynamic-programming template to obtain FPTAS results that achieve almost optimality (regarding $\optval$) under $\eps$-EF and EF1 constraints, respectively. Our first result gives an FPTAS algorithm under $\eps$-EF constraints.

\begin{theorem}[FPTAS under $\eps$-EF Constraints]\label{thm:constant_agent}
When the number of agents is a constant, there is an algorithm that computes an $\eps$-envy-free contract with expected revenue at least $\optval-\eps$ in time polynomial in the instance size and $\frac{1}{\eps}$.
\end{theorem}

Next, we develop an FPTAS under the more challenging EF1 constraints.

\begin{theorem}[FPTAS under EF1 Constraints]\label{thm:constant_agentEF1}
When the number of agents is a constant, there is an algorithm that computes an EF1 contract with expected revenue at least $\optval-\eps$ in time polynomial in the instance size and $\frac{1}{\eps}$.
\end{theorem}

While the high-level frameworks to prove \cref{thm:constant_agent} and \cref{thm:constant_agentEF1} are similar, the proof of \cref{thm:constant_agentEF1} requires a delicate and non-trivial analysis. Below, we give a brief proof sketch for \cref{thm:constant_agentEF1}.

\vspace{2mm}

\noindent{\bf Key idea to prove \cref{thm:constant_agentEF1}.} Our algorithm framework starts by defining appropriate discretizations for contracts, agents' utilities, and the principal's expected revenue. Let $(S^*, \alpha^*)$ be the optimal EF allocation and contracts, and $\bar{\alpha}$ be the discretized contracts obtained from $\alpha^*$. Given this discrete environment, we apply \cref{thm:DP} to obtain a solution that gives the same discrete utility profile as $(S^*, \bar{\alpha})$. One trivial approach may be to uniformly discretize by some constant step. In this way, since the utilities of agents and the principal are additive, the obtained solution can finally lead to some $\eps$-approximate contracts in an additive sense. However, this approach fails to derive EF1 contracts, as it may require $\eps$ to be large, resulting in a large loss to the principal.
To circumvent this challenge, we develop a non-trivial adaptive discretization method to ensure that EF1 constraints can be recovered from additively violated EF constraints obtained above.

As the first step, we need to ensure that under the discretized optimal contract $(S^*, \bar\alpha)$,  the additive error introduced to agent $i$'s EF constraints has a dependence on its optimal utility $U_i^*$ (note that we can approximate it via polynomial enumeration). To achieve it, for the contract of each task $j$, we construct an upper bound (denoted as $\underline{\alpha}_j$) to be the minimum over {\it linear} contracts of all the agents, which gives the agent the maximum possible utility  $U_i^*$ from the single task $j$. For each agent $i$, we discretize once the contract space (denoted as $\mathcal{D}_{i,j}$) of task $j$ with a carefully designed step size $\delta(\underline{\alpha}_j -  \tau_{i,j})$. Hence, the discrete contract $\mathcal{D}_j$ of task $j$ is the union of all $\mathcal{D}_{i,j}$ for all the agent $i$.
By doing so, for every task $j$, we are able to ensure that agent $i$'s utilities differ at most $\delta U^*_i$ when exerting optimal efforts under $\alpha_j$ and $\bar{\alpha}_j$.
With such observation, we have the additive error in (approximately) EF constraints under $(S^*, \bar\alpha)$ as $\nu U^*_i$, where $\nu = m\delta$.

The second step is to translate the additive error to EF1 constraints.
To this end, we set the discretization step for agent $i$'s utility to be $\delta U_i^*$.
Recall that our dynamic programming finds a solution $(\hat{S}, \hat{\alpha})$ with the same {\it discrete} utility profile as  $({S}^*, \bar{\alpha})$.
Hence, under $(\hat{S}, \hat{\alpha})$, agent $i$'s utility before discretization, $\sum_{k\in \hat{S}_i} U_{i, k} (\hat{\alpha})$, differs from its discretized utility $\sum_{k \in \hat{S}_i} \hat{U}_{i, k} (\hat{\alpha})$ by at most $\nu U^*_i$ in an {\it additive} sense. Notably, the additive error can be converted into a {\it multiplicative} error, since $\sum_{k \in \hat{S}_i} \hat{U}_{i, k} (\hat{\alpha}) \ge U^*_i$ by definition.
Furthermore, as an intermediate step, we establish a multiplicatively approximate EF guarantee for any two agents $i, j$, $$\sum_{k \in \hat S_i} \hat \alpha_k p_{i, k }r_k - c_{i,k} \ge (1-6\nu) (\sum_{k \in \hat S_j} \max\{\hat \alpha_k p_{i, k} r_k - c_{i, k},0\}).$$
This follows from the previous observation, which converts the additive error depending on $U_i^*$---arising from the approximate EF constraints obtained in the first step and the guarantees of \cref{thm:DP}---into a multiplicative error.
Finally, to obtain an EF1 constraint, we remove from $\hat S_j$ the task that gives agent $i$ the largest utility. This induces a new factor of $\frac{1-6\nu}{1-1/m}$, which is greater than $1$ when $\nu$ is small enough.

\begin{proof}[Proof of \cref{thm:constant_agentEF1}]
    We will use the template of the $\epsilon$-EF dynamic programming approach designed in the previous section, enhanced with an initial guessing step and an adaptive grid.
    The pseudocode of our algorithm is in Algorithm \ref{algorithm_dynmaic_pr_EF1}.

    Denote the optimal solution of Program (\ref{pr:fair}) as $\alpha^*= (\alpha_1^*, \alpha_2^*, \dots, \alpha_m^*)$, and its corresponding fair allocations of tasks as $S^*= (S_1^*, S_2^*, \dots, S_n^*)$. Given the desired approximation $\epsilon>0$ for the principal's revenue,  if $m=1$, every full allocation is EF1 because the unique task can be removed from any compared nonempty bundle, so the problem reduces to choosing the revenue-maximizing effort-feasible assignment. Hence, we assume $m\ge 2$ and set errors $\nu= \min\{\epsilon, \frac{1}{6m}\}$ and $\delta=\frac{\nu}{m}$.

    \paragraph{Characterize the range of utilities.}

    Since, given an optimal allocation $S^*$, the optimal contract is a solution to an LP, standard arguments imply that the utility of each agent is $0$ or at least $2^{-f(I)}$, where $f(I)$ is a polynomial function of the instance size \citep{bertsimas1997introduction}.

    \paragraph{Guessing step.} We consider the following set of approximate agents' utilities:
    \[\mathcal{G}= \{0\} \cup \{ m 2^{-i}\}_{i \in \{0,1,\ldots, f(I)+\log(m)\}}. \]
    Then, we enumerate over all the tuples $(U_i)_{i \in \mathcal{N}}\in \mathcal{G}^{n}$ which include a guess of the utility for each agent.
     Notice that there exists a guess $(U_i)_{i \in \mathcal{N}}$ such that:
    \begin{align} \label{eq:guess}
    \sum_{k \in S^*_i} \alpha_k^* p_{i, k }r_k - c_{i,k} \in [U_i/2,U_i] \quad \forall i \in \mathcal{N}.
    \end{align}

    In the following, we focus on the execution of Algorithm \ref{algorithm_dynmaic_pr_EF1} when the guess is correct and hence we assume that Equation \eqref{eq:guess} holds.

    \paragraph{Adaptive grid.}
    Once we fix the utility profile $(U_i)_{i \in \mathcal{N}}$, we can define an adaptive grid for the dynamic programming as a function of the guesses.

    We define a set of possible discrete contracts as follows.
    Consider a task $j\in \mathcal{M}$ and an agent $i\in \mathcal{N}$. Let $\underline{\alpha}_{i,j}$ be the maximum $\alpha$ in the set
\begin{align}\label{def:barAlpha}
        \{\alpha\in [0,1]:\alpha p_{i,j} r_j-c_{i,j} \le U_i\}.
    \end{align}
    and let $\underline{\alpha}_j= \min_i \underline{\alpha}_{i,j}$.
    Then, we define the set
    \begin{equation}    \begin{aligned}\label{eq:disc_contracts}
        &\mathcal{D}_{i,j}= \begin{cases}
        \{ \tau_{i,j}+\delta k (\underline{\alpha}_j-\tau_{i,j})\}_{k \in \{0,1,\ldots,1/\delta\}} \quad \quad \textnormal{if }\underline{\alpha}_j\ge \tau_{i,j}\\
        \mathcal{D}_{i,j}= \emptyset \hspace{6.4cm} \textnormal{otherwise}
        \end{cases}
    \end{aligned}
    \end{equation}
    Notice that for each $j\in \mathcal{M}$, at least one of the sets $\{\mathcal{D}_{i,j}\}_{i \in \mathcal{N}}$ is not empty thanks to Assumption \ref{asp:usefulness}.
    Finally, we set the possible contracts for a task $j \in \mathcal{M}$ as
\begin{align}\label{eq:disc_contract_EF1}
        \mathcal{D}_j=\cup_{i\in \mathcal{N}} \mathcal{D}_{i,j}.
    \end{align}

    Now, we define the set of possible discretized utilities with a normalized uniform grid.
    In particular, for each agent $i$, we define the discretized utilities as
    \begin{align} \label{eq:disc_utilities_EF1}
        \mathcal{U}_i=\{\delta k U_i\}_{k \in \{0,1,\ldots,1/\delta\} } \quad\forall i \in \mathcal{N}
    \end{align}
    Finally, we use a uniform grid over the principal's expected revenue:
    \begin{align}\label{eq:discretization_principal_EF1}
        &\mathcal{H}=\{0,\delta, \ldots, 1\}
    \end{align}

Now, we are left to define the approximate utility functions $\hat U_{i,j}(\cdot)$, $\hat U^P(\cdot,\cdot)$. We let

\begin{equation}\label{eq:disc_utility_EF1}
\begin{aligned}
&\hat U_{i,j}(\hat \alpha_j)= \lceil \hat \alpha_j p_{i,j} r_j-c_{i,j}\rceil_{\mathcal{U}_i} \quad \forall  i \in \mathcal{N}, j \in \mathcal{M}, \hat \alpha_{j}\in \mathcal{D}_j\\
&\hat U^P(\hat \alpha_j,i)=\lceil (1-\hat \alpha_j) p_{i,j} r_j\rceil_{\mathcal{H}}\quad  \forall  i \in \mathcal{N}, j \in \mathcal{M}, \hat \alpha_{j}\in \mathcal{D}_j
\end{aligned}
\end{equation}
It is easy to see that $\lceil \hat \alpha_j p_{i,j} r_j-c_{i,j}\rceil_{\mathcal{U}_i}$ are well-defined thanks to the definition of $\mathcal{D}_j$.
Moreover, it holds
\begin{align}\label{eq:good_apx_EF1}
\hat U_{i,j}(\hat \alpha_j)\in [\max \{\hat \alpha_j p_{i,j} r_j-c_{i,j}, 0\},\max \{\hat \alpha_j p_{i,j} r_j-c_{i,j}, 0\} + \delta U_i]  \quad \forall  i \in \mathcal{N}, j \in \mathcal{M}, \hat \alpha_{j}\in \mathcal{D}_j
\end{align}

\paragraph{Existence of a good solution with discretized contracts.}
We start showing that there exists a discretized contract that is almost EF and almost optimal.
Consider the contract $\bar \alpha$ obtained by setting  $\bar \alpha_j = \lceil \alpha^*_j\rceil_{\mathcal{D}_j}$ for each $j \in \mathcal{M}$.
First of all, we have to show that such $\bar \alpha$ is well-defined. To do that, we have to guarantee that $\mathcal{D}_j$ includes at least  one contract greater than $\alpha^*_j$ for each $j \in \mathcal{M}$.

\begin{lemma}\label{wellefeindinelemmandk}
$\bar \alpha$ is well-defined, i.e., $\alpha^*_j \le \max_{\alpha_j \in \mathcal{D}_j} \alpha_j$ for all $j\in \mathcal{M}$.
\end{lemma}

Now, we show that $(S^*,\bar \alpha)$ is almost optimal and almost EF. Formally:

\begin{lemma}\label{lemma4_6fptasef1}
$(S^*,\bar \alpha)$ satisfies the effort constraints and guarantees:
\begin{equation}\label{discreteize_withinsec42}
\begin{split}
    \sum_{k \in S^*_i}\bar{\alpha}_k p_{i, k }r_k - c_{i,k}  \ge \sum_{k \in S^*_j} \max \{\bar\alpha_{k} p_{i,k}r_k -c_{i,k},0\}-\nu U_i \quad\forall i,j \in \mathcal{N},
\end{split}
\end{equation}
and
\begin{align}\label{eq:apx_opt_EF1}
\sum_{i \in \mathcal{N}} \sum_{k \in S^*_i} (1-\bar{\alpha}_k) p_{i,k}r_k \ge  \optval -\nu.
\end{align}
\end{lemma}

\paragraph{Find a good solution with discretized contracts.}

Our algorithm enumerates over all the possible utilities $(U_i)_{i \in \mathcal{N}}$. For each guess, the algorithm runs Algorithm \ref{algorithm_dynmaic_pr2}, and stores the set of EF1 contracts returned by it. Then, it outputs the contract  that maximizes the principal's expected revenue.
Let's denote this contract with $(\tilde{S}, \tilde{\alpha})$.

Now, by \cref{thm:DP}, Algorithm \ref{algorithm_dynmaic_pr_EF1} finds
a set $\mathcal{C}$ which includes an allocation $\hat S$ and a contract $\hat \alpha$ that induce the same discrete utility profile as $(\bar \alpha, S^*)$.
Formally, we have that
 \begin{align*}
        &\sum_{i \in \mathcal{N}} \sum_{j \in S_i^*} \hat U^P( \bar \alpha_j,i)= \sum_{i \in \mathcal{N}} \sum_{j \in \hat S_i} \hat U^P(\hat \alpha_j,i)\\
        &\sum_{k \in  S_j^*} \hat U_{i,k}(\bar\alpha_k) = \sum_{k \in \hat S_j} \hat U_{i,k}(\hat \alpha_k), \quad \forall i,j\in \mathcal{N}\\
        & \hat \alpha_j p_{i,j} r_j-c_{i,j} \ge 0, \quad \forall i \in \mathcal{N}, j \in \hat S_i.
\end{align*}
The final step is to show that $(\tilde S, \tilde \alpha)$ has principal's expected revenue at least $\optval -2\nu$, while by construction it is EF1 and satisfies the effort constraints.
To do so, it is sufficient to show that $(\hat S, \hat \alpha)$ is EF1 and $2\nu$-optimal, since such a contract belongs to $\mathcal{C}$.

\begin{lemma}\label{lemma4_7_epftsef1}
 $(\hat S, \hat \alpha)$  guarantees
$\sum_{i \in \mathcal{N}} \sum_{k \in \hat S_i} (1-\hat{\alpha}_k) p_{i,k}r_k \ge \optval-2\nu$.
\end{lemma}

Hence, we are left to show that the contract $(\hat S, \hat \alpha)$  is indeed EF1.
\begin{lemma}
 $(\hat S, \hat \alpha)$ satisfies \text{\rm EF1}.
\end{lemma}
\begin{proof}
Take two agents $i$ and $j$.
Then
\begin{subequations}\label{eq:bound_multi}
\begin{align}
     \sum_{k \in \hat S_i} \hat U_{i,k}(\hat \alpha) &= \sum_{k \in  S^*_i}  \hat U_{i,k}(\bar \alpha)
     \overset{ \text{by definition of $\hat{U}_{i,k}$}
}{\ge} \sum_{k \in S^*_i}\bar{\alpha}_k p_{i, k }r_k - c_{i,k} \label{eq:bound_multi_2}\\
    & \overset{\text{by (\ref{discreteize_withinsec42})}}{\ge} \sum_{k \in S^*_j} \max\{ \bar{\alpha}_k p_{i, k }r_k - c_{i,k}, 0\} -\nu U_i
     \overset{\text{by \eqref{eq:good_apx_EF1}}}{\ge} \sum_{k \in S^*_j} \hat U_{i,k}(\bar \alpha) -2\nu U_i \label{eq:bound_multi_4}\\
    &  = \sum_{k \in \hat S_j} \hat U_{i,k}(\hat \alpha) -2\nu U_i
     \ge \sum_{k \in \hat S_j} \max\{ \hat \alpha_k p_{i,k} r_k - c_{i,k}, 0\}- 2\nu U_i ,
\end{align}
\end{subequations}

To conclude the proof, we want to show that there exists a task $k_{i,j}$ such that
    \[ \sum_{k \in \hat S_i} \hat \alpha_k p_{i, k }r_k - c_{i,k} \geq \sum_{k \in \hat S_j\backslash\{k_{i,j}\}} \max\{\hat \alpha_k p_{i, k} r_k - c_{i, k},0\}.\]

    Take the task giving the largest utility $k_{i,j}\in \arg \max_{k \in \hat S_{j}} \hat \alpha_k p_{i,k} r_k -c_{i,k}$.
   Then, we have that
\begin{align*}
         \sum_{k \in \hat S_i} \hat \alpha_k p_{i, k }r_k - c_{i,k} & \geq \sum_{k \in \hat S_i} \hat U_{i,k}(\hat \alpha_k) - \nu U_i
          = (\sum_{k \in \hat S_i} \hat U_{i,k}(\hat \alpha_k) +2 \nu U_i) -3 \nu U_i\\
         & \geq (1-6\nu) (\sum_{k \in \hat S_i} \hat U_{i,k}(\hat \alpha_k)+ 2 \nu U_i)
           \geq (1-6\nu) (\sum_{k \in \hat S_j} \max\{\hat \alpha_k p_{i, k} r_k - c_{i, k},0\})\\
         & \geq \frac{1-6\nu}{1-1/m} \sum_{k \in \hat S_j\setminus k_{i,j}} \max\{\hat \alpha_k p_{i, k} r_k - c_{i, k},0\}
         \geq \sum_{k \in \hat S_j\setminus k_{i,j}} \max\{\hat \alpha_k p_{i, k} r_k - c_{i, k},0\}
\end{align*}
where the first inequality is by Equation \eqref{eq:good_apx_EF1}, the second inequality is by Equation \eqref{eq:guess} and that $\sum_{k \in \hat S_i} \hat U_{i,k}(\hat \alpha_k) = \sum_{k \in  S_i^*} \hat U_{i,k}(\bar\alpha_k) \ge \sum_{k \in S^*_i} \alpha_k^* p_{i, k }r_k - c_{i,k} \ge \frac{U_i}{2}$,
the third inequality is by Equation \eqref{eq:bound_multi}, the fourth inequality is by the definition of $k_{i,j}$, and the last inequality is by $\nu\le \frac{1}{6m}$.
\end{proof}

We conclude the proof by noticing that the algorithm runs in polynomial time. Indeed, Algorithm \ref{algorithm_dynmaic_pr_EF1} clearly runs in polynomial time excluding the calls to Algorithm \ref{algorithm_dynmaic_pr2}. Moreover, it is easy to see that each call to Algorithm \ref{algorithm_dynmaic_pr2} runs in polynomial time since the possible vectors of agents' cumulative utilities and the principal's cumulative expected revenue have polynomial size (agent $i$ utility belongs to $\{0,\delta U_i, \ldots,m U_i\}$, while the principal's expected revenue belongs to $\{0,\delta, \ldots,m\}$ ).
\end{proof}

\begin{algorithm}[t]
\caption{Computing EF1 Contracts \\
\textbf{Input:} $\epsilon>0$ $\{r_j\}_{j \in \mathcal{M}}$, $\{p_{i,j}\}_{i \in \mathcal{N}, j \in \mathcal{M}}$, $\{c_{i,j}\}_{i \in \mathcal{N}, j \in \mathcal{M}}$, $m, n$ \\
\textbf{Output:} The contracts $(\tilde{\alpha}, \tilde{S})$.
}\label{algorithm_dynmaic_pr_EF1}
$\nu \gets \min\{\epsilon, \frac{1}{6m}\}$\;
$\delta\gets \nu/m$\;
$\mathcal{C}^* \gets \emptyset$ \;
\For {$(U_i)_{i \in \mathcal{N}}\in \mathcal{G}^{n}$}{
define  $\mathcal{D}_j$, $\mathcal{U}_i$, $\mathcal{H}$, $\hat U_{i,j}(\cdot)$, $\hat U^P(\cdot,\cdot)$ as per Equation \eqref{eq:disc_contract_EF1}, \eqref{eq:disc_utilities_EF1}, \eqref{eq:discretization_principal_EF1}, and \eqref{eq:disc_utility_EF1}\;
$\mathcal{C}\gets$ Run Algorithm \ref{algorithm_dynmaic_pr2} with discretization  $\mathcal{D}_j$, $\mathcal{U}_i$, $\mathcal{H}$, and functions $\hat U_{i,j}(\cdot)$, $\hat U^P(\cdot,\cdot)$\;
\For{ $( S,\alpha) \in \mathcal{C}$}{ \label{line_5}
   \If{$(S, \alpha)$ satisfies EF1 }{
$\mathcal{C}^* = \mathcal{C}^* \cup \{{S},\alpha\}$ \;
   }
} \label{line9}
}
\Return $({\tilde\alpha}, {\tilde S}) \in \mathcal{C}^*$ with maximum  principal's expected revenue.
\end{algorithm}

\section{Price of Fairness}\label{sec:priceoffairness}

In this section, we examine how fairness constraints affect the principal's revenue.
Let $\opt$ be the optimal revenue the principal can guarantee without fairness concerns.
Ideally, the principal would hope to keep a large fraction of $\opt$ using envy-free contracts.
To quantify the loss of revenue due to fairness constraints, we introduce the notion of \emph{price of fairness} (PoF), which is defined as the multiplicative gap between $\opt$ and the optimal revenue under various fairness constraints, such as $\opt/\optval$ for EF contracts and $\opt/\optefw$ for EF1 contracts.
We similarly use $\opt/\opteps$ for $\eps$-EF contracts. 

\subsection{EF: Unbounded Price of Fairness}

We first focus on exact envy-free contracts and show that the price of fairness is unbounded even in the case of only two agents. 
\begin{proposition}\label{price_fair}
For any constant $c > 1$, there exists an instance with only two agents where the
optimal revenue from envy-free contracts is less than a $\frac{1}{c}$ fraction of the optimal revenue without fairness concerns, i.e., $\optval < \frac{1}{c}\cdot \opt$.
\end{proposition}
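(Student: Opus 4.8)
The plan is to exhibit a single-task, two-agent instance that directly realizes the tension sketched in the introduction: one agent is cheap to incentivize but generates negligible surplus, while the other generates constant surplus but demands a high payment. I would fix $r_j=1$ for the unique task $j$ and take agent $1$ with $p_{1,j}=\lambda$ and $c_{1,j}=0$, and agent $2$ with $p_{2,j}=1$ and $c_{2,j}=\tfrac12$, where $\lambda\in(0,1)$ is a small parameter to be chosen in terms of $c$. \Cref{asp:usefulness} is satisfied since both agents have nonnegative surplus $p_{i,j}r_j-c_{i,j}$.

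First I would compute $\opt$. Since there is a single task, the unconstrained principal assigns it to the agent maximizing $(1-\alpha_j)p_{i,j}r_j$ subject only to IR, whose optimum is attained at $\alpha_j=c_{i,j}/(p_{i,j}r_j)$ and equals the surplus $p_{i,j}r_j-c_{i,j}$. Hence $\opt=\max\{\lambda,\tfrac12\}=\tfrac12$, achieved by assigning $j$ to agent $2$ with $\alpha_j=\tfrac12$.

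The key step is to show that every envy-free contract must instead assign $j$ to agent $1$. With a single task, the recipient's own EF constraint is vacuous (the other bundle is empty, so it reduces to IR), and EF reduces to the requirement that the \emph{non}-recipient derive nonpositive utility from the assigned task. If $j$ were given to agent $2$, IR forces $\alpha_j\ge\tfrac12$, but then agent $1$'s envy term $\alpha_j p_{1,j}r_j-c_{1,j}=\lambda\alpha_j>0$ violates EF; equivalently, the only contract making agent $1$ non-envious is $\alpha_j=0$, which breaks agent $2$'s IR. Thus $j$ must go to agent $1$, where the revenue is $(1-\alpha_j)\lambda\le\lambda$, so $\optval=\lambda$. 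Choosing $\lambda<\tfrac{1}{2c}$, for instance $\lambda=\tfrac{1}{3c}$, yields $\optval=\lambda<\tfrac{1}{c}\cdot\tfrac12=\tfrac1c\,\opt$, as required; letting $\lambda\to 0$ shows the price of fairness is unbounded.

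The only thing to be careful about is the reduction of the EF system on a single task: I must verify that no intermediate contract value can simultaneously keep the high-value agent IR-feasible and the low-value agent non-envious, which is immediate here since the two requirements $\alpha_j\ge\tfrac12$ and $\alpha_j\le0$ are contradictory. Beyond this case check there is no genuine obstacle: the construction works precisely because $\optval$ is pinned to the forced agent's vanishing surplus $\lambda$, while $\opt$ stays bounded away from zero.
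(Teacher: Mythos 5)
Your proposal is correct and follows essentially the same route as the paper's own proof (Example 5.2): a single task and two agents, where the low-probability agent is cheap to incentivize and the high-surplus agent requires a contract of at least $\tfrac12$ by IR, so EF forces the task onto the low-value agent and pins $\optval$ to a vanishing parameter while $\opt$ stays at $\tfrac14$ (resp.\ $\tfrac12$ in your normalization). The only cosmetic difference is that you set the forced agent's cost to zero, which makes $\optval$ exactly $\lambda$ rather than $9\eps$; the argument is otherwise identical.
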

This result can be proved with the following example.
\begin{example}\label{example52}
Consider an instance parameterized by a constant  $\delta$$>0$. Suppose there is one task with reward $r=1$ and two agents.  Set the success probabilities and costs to $p_{1} = 10 \delta$, $c_1 =  \delta$ and $p_{2} = \frac{1}{2}$, $c_2 = \frac{1}{4}$. Note that in this case, linear contracts are without loss of optimality: If the principal pays agents on the failure outcome, the agent assigned no task would envy, as it would at least obtain positive utility by choosing to shirk if assigned a task.

To achieve envy-freeness, the principal must assign the task to agent $1$, and to maximize the revenue, the principal needs to offer a linear contract $\frac{c_1}{p_1}=\frac{1}{10}$ and gains a revenue of $(1-\frac{1}{10}) \cdot 10 \delta = 9 \delta$.
However, without the EF constraints, the principal would assign tasks to agent $2$ and gain a revenue of $\frac{1}{4}$ by offering a linear contract $\frac{1}{2}$.
The price-of-fairness ratio $\opt / \optval $ is $\frac{1}{36 \delta}$, which becomes arbitrarily large when  $\delta$$\to 0$.
\end{example}

Recall that the unbounded gap in \cref{price_fair} holds even when arbitrary contracts are allowed. This suggests that the high price of fairness is driven primarily by the imposed fairness constraints, rather than by restrictions on the class of linear contracts.
Therefore, in the next two subsections, we consider two ways to reduce the PoF: (i) we relax exact envy-freeness to $\eps$-EF, which permits a small additive amount of envy; and (ii) we consider EF1 constraints, which introduce unfairness up to one task.

\subsection{\texorpdfstring{$\eps$}{epsilon}-EF: Small Additive Relaxations}

Recall that rewards are normalized so that $r_j\le 1$ for every task $j$, and hence each task-level expected value $p_{i,j}r_j$ is also at most $1$. Thus, the additive slack $\eps$ in the $\eps$-EF constraints is measured on the same normalized utility scale as the agents' task-contract values.

The next result shows that, in the small-$\eps$ regime, additive envy-freeness has price of fairness of order $1/\eps$ for fixed $n$ and $m$.

\begin{proposition}\label{prop:pof_eps_ef}
For every $n\ge 2$, $m\ge 1$, and $0<\eps\le 1/4$, the optimal revenue from $\eps$-EF contracts is at least a $\frac{4\eps}{\min\{m,n^2\}}$ fraction of the optimal revenue without fairness concerns, i.e.,
\[
\opteps
\ge
\frac{4\eps}{\min\{m,n^2\}}\cdot \opt.
\]
Moreover, for every $0<\eps<1/4$ and every $\rho>0$, there exists an instance with $n$ agents and $m$ tasks for which
\[
\opteps
\le
(4\eps+\rho)\cdot \opt.
\]
\end{proposition}

The proof is deferred to \cref{apx:eps_ef_pof}. The positive guarantee follows from the better of two constructions: one spends an envy budget of $\eps/m$ independently on every task, while the other uses a round-robin construction over carefully chosen single-task contracts. The lower-bound example is a single-task instance. Thus, when $m=1$, the two sides coincide up to an arbitrarily small $\rho$; equivalently, the price-of-fairness bound is tight for a single task, and equals $1/(4\eps)$ for every $0<\eps\le 1/4$. More generally, for every fixed $n$ and $m$, the proposition gives an asymptotically tight dependence on $\eps$: the worst-case multiplicative gap $\opt/\opteps$ is $\Theta_{n,m}(1/\eps)$ as $\eps\downarrow 0$.

\subsection{EF1: Introducing Unfairness up to One Task}

In this section, we consider EF1 constraints and show that at least a $\frac{1}{n^2}$ fraction of the revenue can be extracted by EF1 contracts. Note that this provides a constant approximation ratio if $n$ is a constant.
This retained fraction is meaningful if the number of agents is not too large. For instance, the EF1 revenue in \cref{example52} sharply increases to $\frac{1}{4}$.  Furthermore, we complement the previous result by showing that the retained fraction can be as small as $O(1/\sqrt{n})$, or equivalently that the price of fairness can be as large as $\Omega(\sqrt{n})$. The construction of the instance takes inspiration from the construction in \cite{bei2021price}, but requires a delicate adaptation and more involved analysis. We leave as an open problem to close this gap.

\begin{proposition}\label{ef1upperboundpro}
For any number of agents $n$, there exists an EF1 contract with expected revenue at least a $\frac{1}{n^2}$ fraction of the optimal revenue without fairness concerns, i.e., 
\[\optefw\geq \frac{1}{n^2}\cdot \opt.\]
Moreover, for any number of agents $n\ge 9$, there exists an instance where the optimal revenue from EF1 contracts is at most an $O(\frac{1}{\sqrt{n}})$ fraction of the optimal revenue without fairness concerns, i.e., \[
\optefw \leq O(\frac{1}{\sqrt{n}}) \cdot \opt.\]
\end{proposition}

\begin{proof}
In this proof, we prove the upper bound and lower bound for the PoF separately.

\paragraph{The upper bound $O(n^2)$.}
Let $i^*$ be the agent that maximizes $\sum_{j \in \mathcal{M}}\max\{p_{i,j}r_j-c_{i,j}, 0\}$. Note that this maximum value is at least $\frac{\opt}{n}$, where we recall that $\opt$ is the optimal principal's expected revenue without fairness concerns.
Set the contract for each task as $\alpha_j=\tau_{i^*,j}$ if $p_{i^*,j}r_j-c_{i^*,j} \geq 0$; otherwise, set $\alpha_j = \min_{i \in \mathcal{P}_j} \tau_{i,j}$ where $\mathcal{P}_j=\{i \in \mathcal{N}\mid p_{i,j}r_j-c_{i,j} \geq 0\}$. We recall that this set is not empty thanks to Assumption \ref{asp:usefulness}.

Now, we construct the contract using a round-robin algorithm, where agents take turns to select one of the remaining tasks. In particular, we start with agent $i^*$, followed by the other agents in an arbitrary order. Let $\hat S$ be the set of remaining tasks. In each turn, each agent $i$ chooses one of the remaining tasks in $\arg\max_{j \in \hat S} \alpha_{j} p_{i,j} r_{j}-c_{i,j}$ with the following additional constraints:
\begin{itemize}
    \item the agent selects the tasks by breaking ties to maximize the principal's revenue;
    \item the agent does not select a task if no remaining task satisfies her effort constraint.
\end{itemize}
It is straightforward to see that the modifications do not affect the conclusion that the allocation given by the round-robin algorithm is EF1.
Moreover, contracts are set in order to guarantee that for any task there exists an agent that can be incentivized to exert effort. Hence, our modified round-robin algorithm can ensure that the resulting allocation is a full allocation.

Now, we provide a lower bound on the principal's expected revenue. To do that, it is sufficient to focus on agent $i^*$.
Since agent $i^*$ has zero utility for all tasks by construction, and in each round, the tie-breaking rule assigns to $i^*$ the task that provides the principal's maximum revenue (among the remaining ones),
it is easy to see that the principal has revenue at least
\[\frac{1}{n} \sum_{j \in \mathcal{M}}\max\{p_{i^*,j}r_j-c_{i^*,j}, 0\} \ge \frac{1}{n^2}\opt,\]
implying
$\optefw  \geq \frac{1}{n^2}\cdot \opt$.

\paragraph{The lower bound of PoF $\Omega(\sqrt{n})$.} 

Let $m=n$ with $n \ge 9$. Also, let $r=\floor{\sqrt{n}} \ge 3$. The rewards are $1$ for all the tasks. Next, we define the success probabilities and costs for each agent.
\begin{itemize}
    \item For agent $i \in \{1, 2, \dots, r-1\}$, let the probabilities and costs be $p_{i, (i-1)r +k } = 1$, $c_{i, (i-1)r +k } = \frac{r-1}{r}$ for $k =  1, 2, \dots, r$, while for all other tasks $j$, let $p_{i, j } = 0$, $c_{i, j } = 1$.
    \item For agent $i=r$, let $p_{i, j} = 1$, $c_{i, j } = \frac{n-r(r-1)-1}{n-r(r-1)}$ for tasks $j \in \{ r(r-1)+1, r(r-1)+2,  \dots, n\}$, while for all other tasks $j$, let $p_{i, j} = 0$, $c_{i, j} = 1$.
    \item For agent $i\in \{r+1, \dots, n\}$,  let $p_{i, j} = \frac{2}{n}$, $c_{i, j } = \frac{1}{n}$ for all the task $j \in \mathcal{M}$.
\end{itemize}

Note that assigning one task to any agent $i \in \{ r+1,\dots, n\}$ yields revenue at most $\frac{1}{n}$ to the principal.
To obtain the maximum revenue without fairness, the principal assigns
to each agent $i\in \{1, 2, \dots, r-1\}$ tasks $j \in \{(i-1)r+1, \dots, ir\}$ and sets the contract $\alpha_j = \frac{r-1}{r}>\frac{1}{n}$. The principal's revenue from each of these contracts is $\frac{1}{r}$.
Moreover, the principal assigns task $j\in \{r(r-1)+1, \dots, n\}$  to agent $r$ with contract $\alpha_j = \frac{n-r(r-1)-1}{n-r(r-1)}$. The revenue from each of these tasks is $\frac{1}{n-r(r-1)} > \frac{1}{n}$.
This leads to a cumulative principal's revenue of $(r-1) \cdot r\cdot \frac{1}{r} + (n-r(r-1))\cdot \frac{1}{(n-r(r-1))} = r$.

Next, we provide an upper bound for the revenue achievable with EF1 contracts.
Note that by construction, agent $i \in \{1, 2, \dots, r-1\}$ can only be assigned tasks $j \in \{(i-1)r+1, \dots, ir \}$, and agent $i=r$ can only be assigned tasks $j\in \{r(r-1)+1, \dots, n\}$.

 We claim that if there exists one agent $i \in \{1, 2, \dots, r\}$ assigned at least two tasks, while there is one agent $i' \in \{r+1, \dots, n\}$ assigned no tasks, then agent $i'$  will envy agent $i$. Note that by construction, the minimum contract to incentivize agent $i'$ is $\frac{1}{2}$. Then, the claim holds because the minimum contract that incentivizes agent $i \in \{1, 2, \dots, r-1\}$ and agent $i=r$ is greater than that for agent $i \in \{r+1,r+2, \dots, n\}$, i.e.,  $\frac{r-1}{r} > \frac{1}{2}$ and $\frac{n-r(r-1)-1}{n-r(r-1)}\ge \frac{r-1}{r} > \frac{1}{2}$ by $n > 9$.

Recall that there are $n$ tasks. As long as there are at least two tasks assigned to one agent in $\{1, 2, \dots, r\}$, each agent in $\{r+1, \dots, n\}$ must be assigned at least one task. Note that the maximum revenue from assigning task $j\in \{r(r-1)+1, \dots, n\}$ to agent $i=r$ is $\frac{1}{n-r(r-1)} < \frac{1}{r}$ since $r = \floor{\sqrt{n}}$ implies $n-r(r-1) \ge r$.
Hence, the  revenue of EF1 contracts in this case is at most $(n-r)\cdot \frac{1}{n} +  (\frac{1}{r})\cdot(r)<2$, which is exactly the  welfare extracted by assigning  one task to each agent in $\{r+1, \dots, n\}$ while the remaining tasks are assigned to agents in $\{1, 2, \dots, r\}$.

Moreover, if all the agents in $\{1, 2, \dots, r\}$ are assigned at most one task, then the revenue of EF1 contracts is also at most the welfare $(n-r)\cdot \frac{1}{n} +  (\frac{1}{r})\cdot(r)<2$. This is because assigning more tasks to agents in $\{r+1, \dots, n\}$ will lead to lower welfare, as $\frac{1}{r}\ge\frac{1}{n-r(r-1)}>\frac{1}{n}$, i.e., the maximum welfare extracted from an agent in $\{r+1, \dots, n\}$ completing one task is less than that from an agent in $\{1, 2, \dots, r\}$.

In summary, the optimal revenue from EF1 contracts is at most $2$, which leads to a retained fraction of at most $\frac{2}{r} = \frac{2}{\floor{\sqrt{n}}}$. This concludes the proof.
\end{proof}

\section{Conclusions}
\label{sec:conclude}

This paper initiates the study of a new class of fair contract design problems that are both practically relevant and theoretically rich, providing a unified framework to understand how fairness constraints affect a principal’s profit maximization in settings such as algorithmic labor platforms and task allocation systems. By integrating ideas from contract theory, fair allocation, and combinatorial optimization, our results open a new research direction at the intersection of economics and computation.

Several important questions remain open in our paper. First, it is unclear whether one can obtain any non-trivial constant approximation ratios to the optimal EF benchmark under EF1 constraints, or under exact EF constraints when restricting attention to the case with a constant number of agents.
Second, for the price of fairness, our upper and lower bounds leave a gap for both $\eps$-EF and EF1 contracts that depends on the number of agents, and closing it remains an interesting direction for future work.

Beyond our specific model, we identify three promising directions for future research that may be of broader interest to the community:
\begin{itemize}
    \item While we focus on linear contracts for tractability and interpretability, an interesting direction is to study more expressive (e.g., nonlinear) contract classes under fairness constraints and understand their effects on the complexity and structure of fair contract design.
    \item Beyond profit maximization, welfare maximization is an important objective, especially in public-sector settings such as government project allocation. In this context, fairness remains central, but contract design must still address incentive compatibility, distinguishing it from prior work on fair allocation without strategic agents~\citep[e.g.,][]{bu2022complexity,aziz2023computing}.
    \item Since our hardness results show that even approximate optimization under envy-freeness is \NP-hard, it is natural to explore alternative fairness notions—such as proportionality or maximin share guarantees~\citep[e.g.,][]{suksompong2016asymptotic,garg2020improved}—that may offer a better trade-off between fairness and computational tractability.
\end{itemize}

\newpage
\bibliographystyle{apalike}
\bibliography{ref}

@incollection{grossman1992analysis,
  title={An analysis of the principal-agent problem},
  author={Grossman, Sanford J and Hart, Oliver D},
  booktitle={Foundations of Insurance Economics: Readings in Economics and Finance},
  pages={302--340},
  year={1992},
  publisher={Springer}
}

@book{bolton2004contract,
  title={Contract theory},
  author={Bolton, Patrick and Dewatripont, Mathias},
  year={2004},
  publisher={MIT Press}
}

@inproceedings{dutting2019simple,
  title={Simple versus optimal contracts},
  author={D{\"u}tting, Paul and Roughgarden, Tim and Talgam-Cohen, Inbal},
  booktitle={Proceedings of the 2019 ACM Conference on Economics and Computation},
  pages={369--387},
  year={2019}
}

@inproceedings{guruganesh2021contracts,
  title={Contracts under moral hazard and adverse selection},
  author={Guruganesh, Guru and Schneider, Jon and Wang, Joshua R},
  booktitle={Proceedings of the 22nd ACM Conference on Economics and Computation},
  pages={563--582},
  year={2021}
}

@inproceedings{castiglioni2025reduction,
  title={A reduction from multi-parameter to single-parameter Bayesian contract design},
  author={Castiglioni, Matteo and Chen, Junjie and Li, Minming and Xu, Haifeng and Zuo, Song},
  booktitle={Proceedings of the 2025 Annual ACM-SIAM Symposium on Discrete Algorithms (SODA)},
  pages={1795--1836},
  year={2025},
  organization={SIAM}
}

@inproceedings{dutting2023multi,
  title={Multi-agent contracts},
  author={D{\"u}tting, Paul and Ezra, Tomer and Feldman, Michal and Kesselheim, Thomas},
  booktitle={Proceedings of the 55th Annual ACM Symposium on Theory of Computing},
  pages={1311--1324},
  year={2023}
}

@inproceedings{alon2022bayesian,
  title={Bayesian Analysis of Linear Contracts},
  author={Alon, Tal and D{\"u}tting, Paul and Li, Yingkai and Talgam-Cohen, Inbal},
  booktitle={Proceedings of the 24th ACM Conference on Economics and Computation},
  pages={66--66},
  year={2023}
}

@inproceedings{ezra2023approximability,
  title={On the (In)approximability of Combinatorial Contracts},
  author={Ezra, Tomer and Feldman, Michal and Schlesinger, Maya},
  booktitle={15th Innovations in Theoretical Computer Science Conference (ITCS 2024)},
  pages={44--1},
  year={2024}
}

@article{bacchiocchi2023learning,
  title={Learning optimal contracts: How to exploit small action spaces},
  author={Bacchiocchi, Francesco and Castiglioni, Matteo and Marchesi, Alberto and Gatti, Nicola},
  journal={arXiv preprint arXiv:2309.09801},
  year={2023}
}

@inproceedings{chen2024bounded,
  title={Are Bounded Contracts Learnable and Approximately Optimal?},
  author={Chen, Yurong and Chen, Zhaohua and Deng, Xiaotie and Huang, Zhiyi},
  booktitle={Proceedings of the 25th ACM Conference on Economics and Computation},
  pages={315--344},
  year={2024}
}

@article{dutting2024algorithmic,
  title={Algorithmic contract theory: A survey},
  author={D{\"u}tting, Paul and Feldman, Michal and Talgam-Cohen, Inbal},
  journal={Foundations and Trends{\textregistered} in Theoretical Computer Science},
  volume={16},
  number={3-4},
  pages={211--412},
  year={2024},
  publisher={Now Publishers, Inc.}
}

@article{employment2019draft,
  title={Draft report on fair working conditions, rights and social protection for platform workers new forms of employment linked to digital development},
  author={{Employment and Social Affairs Committee}},
  year={2019}
}

@article{chipriyanov2024researching,
  title={Researching the Impact of Corporate Social Responsibility on Economic Growth and Inequality: Methodological Aspects},
  author={Chipriyanov, Mihail and Chipriyanova, Galina and Krasteva-Hristova, Radosveta and Atanasov, Atanas and Luchkov, Kiril},
  journal={Journal of Risk and Financial Management},
  volume={17},
  number={12},
  pages={546},
  year={2024},
  publisher={MDPI}
}

@misc{wbg24path,
  author = {{The World Bank Group}},
  year = {2024},
  url = {https://www.worldbank.org/en/publication/poverty-prosperity-and-planet},
  urldate = {May 15, 2025},
  title = {Poverty, Prosperity, and Planet Report: Pathways Out of the Polycrisis}
}

@article{amanatidis2023fair,
  title={Fair division of indivisible goods: Recent progress and open questions},
  author={Amanatidis, Georgios and Aziz, Haris and Birmpas, Georgios and Filos-Ratsikas, Aris and Li, Bo and Moulin, Herv{\'e} and Voudouris, Alexandros A and Wu, Xiaowei},
  journal={Artificial Intelligence},
  volume={322},
  pages={103965},
  year={2023},
  publisher={Elsevier}
}

@article{heydrich2015dividing,
  title={Dividing connected chores fairly},
  author={Heydrich, Sandy and van Stee, Rob},
  journal={Theoretical Computer Science},
  volume={593},
  pages={51--61},
  year={2015},
  publisher={Elsevier}
}

@inproceedings{dehghani2018envy,
  title={Envy-free chore division for an arbitrary number of agents},
  author={Dehghani, Sina and Farhadi, Alireza and HajiAghayi, MohammadTaghi and Yami, Hadi},
  booktitle={Proceedings of the Twenty-Ninth Annual ACM-SIAM Symposium on Discrete Algorithms (SODA)},
  pages={2564--2583},
  year={2018},
  organization={SIAM}
}

@inproceedings{guruswami2005profit,
  title={On profit-maximizing envy-free pricing.},
  author={Guruswami, Venkatesan and Hartline, Jason D and Karlin, Anna R and Kempe, David and Kenyon, Claire and McSherry, Frank},
  booktitle={Proceedings of the Sixteenth Annual ACM-SIAM Symposium on Discrete Algorithms (SODA)},
  volume={5},
  pages={1164--1173},
  year={2005}
}

@inproceedings{chen2010envy,
  title={Envy-free pricing in multi-item markets},
  author={Chen, Ning and Deng, Xiaotie},
  booktitle={International Colloquium on Automata, Languages, and Programming},
  pages={418--429},
  year={2010},
  organization={Springer}
}

@article{anshelevich2017envy,
  title={Envy-free pricing in large markets: Approximating revenue and welfare},
  author={Anshelevich, Elliot and Kar, Koushik and Sekar, Shreyas},
  journal={ACM Transactions on Economics and Computation (TEAC)},
  volume={5},
  number={3},
  pages={1--42},
  year={2017},
  publisher={ACM New York, NY, USA}
}

@article{alon1995derandomized,
  title={Derandomized graph products},
  author={Alon, Noga and Feige, Uriel and Wigderson, Avi and Zuckerman, David},
  journal={Computational Complexity},
  volume={5},
  pages={60--75},
  year={1995},
  publisher={Springer}
}

@inproceedings{trevisan2001non,
  title={Non-approximability results for optimization problems on bounded degree instances},
  author={Trevisan, Luca},
  booktitle={Proceedings of the Thirty-Third Annual ACM Symposium on Theory of Computing},
  pages={453--461},
  year={2001}
}

@article{fehr2007fairness,
  title={Fairness and contract design},
  author={Fehr, Ernst and Klein, Alexander and Schmidt, Klaus M},
  journal={Econometrica},
  volume={75},
  number={1},
  pages={121--154},
  year={2007},
  publisher={Wiley Online Library}
}

@article{gentile2026equal,
  title={Equal pay for Similar work},
  author={Gentile Passaro, Diego and Kojima, Fuhito and Pakzad-Hurson, Bobak},
  journal={American Economic Review},
  volume={116},
  number={3},
  pages={977--1013},
  year={2026}
}

@article{feng2024price,
  title={Price of Non-discrimination in Public Combinatorial Contracts},
  author={Feng, Yiding and Ma, Mengfan and Xiao, Mingyu},
  journal={Available at SSRN},
  year={2024}
}

@article{yaari1984dividing,
  title={On dividing justly},
  author={Yaari, Menahem E and Bar-Hillel, Maya},
  journal={Social Choice and Welfare},
  volume={1},
  number={1},
  pages={1--24},
  year={1984},
  publisher={Springer}
}

@article{castiglioni2025fair,
  title={Fair Team Contracts},
  author={Castiglioni, Matteo and Chen, Junjie and Li, Yingkai},
  journal={arXiv preprint arXiv:2512.19388},
  year={2025}
}

@article{feldman2026equal,
  title={Equal-Pay Contracts},
  author={Feldman, Michal and Gal-Tzur, Yoav and Ponitka, Tomasz and Schlesinger, Maya},
  journal={arXiv preprint arXiv:2601.15478},
  year={2026}
}

@article{ding2026multi,
  title={Multi-Agent Non-Discriminatory Contracts},
  author={Ding, Ke and Li, Bo and Sun, Ankang},
  journal={arXiv preprint arXiv:2601.16835},
  year={2026}
}

@inproceedings{babaioff2006combinatorial,
  title={Combinatorial agency},
  author={Babaioff, Moshe and Feldman, Michal and Nisan, Noam},
  booktitle={Proceedings of the 7th ACM Conference on Electronic Commerce},
  pages={18--28},
  year={2006}
}

@article{dutting2021complexity,
  title={The complexity of contracts},
  author={D{\"u}tting, Paul and Roughgarden, Tim and Talgam-Cohen, Inbal},
  journal={SIAM Journal on Computing},
  volume={50},
  number={1},
  pages={211--254},
  year={2021},
  publisher={SIAM}
}

@inproceedings{alon2021contracts,
  title={Contracts with private cost per unit-of-effort},
  author={Alon, Tal and D{\"u}tting, Paul and Talgam-Cohen, Inbal},
  booktitle={Proceedings of the 22nd ACM Conference on Economics and Computation},
  pages={52--69},
  year={2021}
}

@inproceedings{castiglioni2021bayesian,
  title={Bayesian agency: Linear versus tractable contracts},
  author={Castiglioni, Matteo and Marchesi, Alberto and Gatti, Nicola},
  booktitle={Proceedings of the 22nd ACM Conference on Economics and Computation},
  pages={285--286},
  year={2021}
}

@inproceedings{castiglioni2022designing,
  title={Designing menus of contracts efficiently: The power of randomization},
  author={Castiglioni, Matteo and Marchesi, Alberto and Gatti, Nicola},
  booktitle={Proceedings of the 23rd ACM Conference on Economics and Computation},
  pages={705--735},
  year={2022}
}

@inproceedings{guruganesh2023power,
  title={The power of menus in contract design},
  author={Guruganesh, Guru and Schneider, Jon and Wang, Joshua and Zhao, Junyao},
  booktitle={Proceedings of the 24th ACM Conference on Economics and Computation},
  pages={818--848},
  year={2023}
}

@inproceedings{zhu2023sample,
  title={The Sample Complexity of Online Contract Design},
  author={Zhu, Banghua and Bates, Stephen and Yang, Zhuoran and Wang, Yixin and Jiao, Jiantao and Jordan, Michael I},
  booktitle={Proceedings of the 24th ACM Conference on Economics and Computation},
  pages={1188--1188},
  year={2023}
}

@inproceedings{han2024learning,
  title={Learning in online principal-agent interactions: The power of menus},
  author={Han, Minbiao and Albert, Michael and Xu, Haifeng},
  booktitle={Proceedings of the AAAI Conference on Artificial Intelligence},
  volume={38},
  number={16},
  pages={17426--17434},
  year={2024}
}

@inproceedings{bernasconi2024agent,
  title={Agent-Designed Contracts: How to Sell Hidden Actions},
  author={Bernasconi, Martino and Castiglioni, Matteo and Celli, Andrea},
  booktitle={Proceedings of the 25th ACM Conference on Economics and Computation},
  pages={312--312},
  year={2024}
}

@article{feldman2025budget,
  title={Budget-Feasible Contracts},
  author={Feldman, Michal and Gal-Tzur, Yoav and Ponitka, Tomasz and Schlesinger, Maya},
  journal={arXiv preprint arXiv:2504.01773},
  year={2025}
}

@inproceedings{castiglioni2025hiring,
  title={Hiring for An Uncertain Task: Joint Design of Information and Contracts},
  author={Castiglioni, Matteo and Chen, Junjie},
  booktitle={Proceedings of the 2025 Annual ACM-SIAM Symposium on Discrete Algorithms (SODA)},
  pages={1758--1794},
  year={2025},
  organization={SIAM}
}

@misc{dutting2026transparency,
  title={Transparency in Team Contracts},
  author={D{\"u}tting, Paul and Li, Yingkai and Talgam-Cohen, Inbal and Wu, Jinzhao},
  year={2026},
  note={Working paper},
  url={https://yingkai-li.github.io/working-papers/}
}

@inproceedings{dutting2022combinatorial,
  title={Combinatorial contracts},
  author={D{\"u}tting, Paul and Ezra, Tomer and Feldman, Michal and Kesselheim, Thomas},
  booktitle={2021 IEEE 62nd Annual Symposium on Foundations of Computer Science (FOCS)},
  pages={815--826},
  year={2022},
  organization={IEEE}
}

@inproceedings{dutting2024combinatorial,
  title={Combinatorial contracts beyond gross substitutes},
  author={D{\"u}tting, Paul and Feldman, Michal and Gal-Tzur, Yoav},
  booktitle={Proceedings of the 2024 Annual ACM-SIAM Symposium on Discrete Algorithms (SODA)},
  pages={92--108},
  year={2024},
  organization={SIAM}
}

@inproceedings{deo2024supermodular,
  title={On supermodular contracts and dense subgraphs},
  author={Deo-Campo Vuong, Ramiro and Dughmi, Shaddin and Patel, Neel and Prasad, Aditya},
  booktitle={Proceedings of the 2024 Annual ACM-SIAM Symposium on Discrete Algorithms (SODA)},
  pages={109--132},
  year={2024},
  organization={SIAM}
}

@inproceedings{duetting2025multi,
  title={Multi-agent combinatorial contracts},
  author={D{\"u}tting, Paul and Ezra, Tomer and Feldman, Michal and Kesselheim, Thomas},
  booktitle={Proceedings of the 2025 Annual ACM-SIAM Symposium on Discrete Algorithms (SODA)},
  pages={1857--1891},
  year={2025},
  organization={SIAM}
}

@inproceedings{alon2025multiprojectcontracts,
  title={Multi-Project Contracts},
  author={Alon, Tal and Castiglioni, Matteo and Chen, Junjie and Ezra, Tomer and Li, Yingkai and Talgam-Cohen, Inbal},
  booktitle={Proceedings of the 26th ACM Conference on Economics and Computation},
  pages={580--598},
  year={2025}
}

@book{foley1966resource,
  title={Resource allocation and the public sector},
  author={Foley, Duncan Karl},
  year={1966},
  publisher={Yale University}
}

@article{aziz2022fair,
  title={Fair allocation of indivisible goods and chores},
  author={Aziz, Haris and Caragiannis, Ioannis and Igarashi, Ayumi and Walsh, Toby},
  journal={Autonomous Agents and Multi-Agent Systems},
  volume={36},
  pages={1--21},
  year={2022},
  publisher={Springer}
}

@inproceedings{lipton2004approximately,
  title={On approximately fair allocations of indivisible goods},
  author={Lipton, Richard J and Markakis, Evangelos and Mossel, Elchanan and Saberi, Amin},
  booktitle={Proceedings of the 5th ACM Conference on Electronic Commerce},
  pages={125--131},
  year={2004}
}

@article{budish2011combinatorial,
  title={The combinatorial assignment problem: Approximate competitive equilibrium from equal incomes},
  author={Budish, Eric},
  journal={Journal of Political Economy},
  volume={119},
  number={6},
  pages={1061--1103},
  year={2011},
  publisher={University of Chicago Press Chicago, IL}
}

@article{caragiannis2019unreasonable,
  title={The unreasonable fairness of maximum Nash welfare},
  author={Caragiannis, Ioannis and Kurokawa, David and Moulin, Herv{\'e} and Procaccia, Ariel D and Shah, Nisarg and Wang, Junxing},
  journal={ACM Transactions on Economics and Computation (TEAC)},
  volume={7},
  number={3},
  pages={1--32},
  year={2019},
  publisher={ACM New York, NY, USA}
}

@article{chakraborty2021weighted,
  title={Weighted envy-freeness in indivisible item allocation},
  author={Chakraborty, Mithun and Igarashi, Ayumi and Suksompong, Warut and Zick, Yair},
  journal={ACM Transactions on Economics and Computation (TEAC)},
  volume={9},
  number={3},
  pages={1--39},
  year={2021},
  publisher={ACM New York, NY, USA}
}

@article{steinhaus1948problem,
  title={The problem of fair division},
  author={Steinhaus, Hugo},
  journal={Econometrica},
  volume={16},
  pages={101--104},
  year={1948}
}

@article{suksompong2016asymptotic,
  title={Asymptotic existence of proportionally fair allocations},
  author={Suksompong, Warut},
  journal={Mathematical Social Sciences},
  volume={81},
  pages={62--65},
  year={2016},
  publisher={Elsevier}
}

@article{bei2021price,
  title={The price of fairness for indivisible goods},
  author={Bei, Xiaohui and Lu, Xinhang and Manurangsi, Pasin and Suksompong, Warut},
  journal={Theory of Computing Systems},
  volume={65},
  pages={1069--1093},
  year={2021},
  publisher={Springer}
}

@inproceedings{barman2020optimal,
  title={Optimal bounds on the price of fairness for indivisible goods},
  author={Barman, Siddharth and Bhaskar, Umang and Shah, Nisarg},
  booktitle={International Conference on Web and Internet Economics},
  pages={356--369},
  year={2020},
  organization={Springer}
}

@inproceedings{bu2022complexity,
  title={Approximability Landscape of Welfare Maximization within Fair Allocations},
  author={Bu, Xiaolin and Li, Zihao and Liu, Shengxin and Song, Jiaxin and Tao, Biaoshuai},
  booktitle={Proceedings of the 26th ACM Conference on Economics and Computation},
  pages={412--440},
  year={2025}
}

@article{liu2024mixed,
  title={Mixed fair division: A survey},
  author={Liu, Shengxin and Lu, Xinhang and Suzuki, Mashbat and Walsh, Toby},
  journal={Journal of Artificial Intelligence Research},
  volume={80},
  pages={1373--1406},
  year={2024}
}

@book{bertsimas1997introduction,
  title={Introduction to linear optimization},
  author={Bertsimas, Dimitris and Tsitsiklis, John N},
  volume={6},
  year={1997},
  publisher={Athena scientific Belmont, MA}
}

@article{aziz2023computing,
  title={Computing welfare-maximizing fair allocations of indivisible goods},
  author={Aziz, Haris and Huang, Xin and Mattei, Nicholas and Segal-Halevi, Erel},
  journal={European Journal of Operational Research},
  volume={307},
  number={2},
  pages={773--784},
  year={2023},
  publisher={Elsevier}
}

@inproceedings{garg2020improved,
  title={An improved approximation algorithm for maximin shares},
  author={Garg, Jugal and Taki, Setareh},
  booktitle={Proceedings of the 21st ACM Conference on Economics and Computation},
  pages={379--380},
  year={2020}
}

@misc{eu2024platform,
  title={Directive (EU) 2024/2831 of the European Parliament and of the Council of 23 October 2024 on improving working conditions in platform work},
  author={{European Parliament and Council of the European Union}},
  year={2024},
  url={https://eur-lex.europa.eu/eli/dir/2024/2831/oj/eng}
}

@inproceedings{binns2025uber,
  title={Not Even Nice Work If You Can Get It; A Longitudinal Study of Uber's Algorithmic Pay and Pricing},
  author={Binns, Reuben and Stein, Jake and Datta, Siddhartha and Van Kleek, Max and Shadbolt, Nigel},
  booktitle={Proceedings of the 2025 ACM Conference on Fairness, Accountability, and Transparency},
  pages={1484--1497},
  year={2025}
}

\newpage
\appendix

\section{Equal-Pay Contracts Need Not Be Envy-Free}
\label{apx:equal_pay_not_fair}

This section gives a simple example showing that an equal-pay restriction can be incompatible with envy-freeness. The example uses identical agents and identical success probabilities, so the incompatibility arises already from cost heterogeneity across tasks. Consider two agents and two tasks, with rewards $r_1=r_2=1$. For both agents $i\in\{1,2\}$, let
\[
    p_{i,1}=p_{i,2}=1,\qquad c_{i,1}=\frac{1}{4},\qquad c_{i,2}=\frac{1}{2}.
\]
Thus task $1$ is strictly more attractive than task $2$ for every agent at any common payment level because it has a lower cost.

Suppose the principal imposes equal pay across the two tasks, interpreted in this task-level model as equal success-contingent payments: $\alpha_1r_1=\alpha_2r_2$. Since the rewards are equal, this is equivalent to $\alpha_1=\alpha_2=\alpha$; in particular, the smallest equal-pay contract that can assign both tasks while satisfying the effort constraints is $\alpha=\frac{1}{2}$. Under this contract, each agent's utility from task $1$ is
\[
    \alpha p_{i,1}r_1-c_{i,1}=\frac{1}{2}-\frac{1}{4}=\frac{1}{4},
\]
whereas each agent's utility from task $2$ is
\[
    \alpha p_{i,2}r_2-c_{i,2}=\frac{1}{2}-\frac{1}{2}=0.
\]
No full allocation can be envy-free. If the two tasks are assigned to different agents, then the agent receiving task $2$ envies the agent receiving task $1$. If both tasks are assigned to the same agent, then the other agent obtains zero utility but values the assigned bundle at $\frac{1}{4}$, and hence envies the assigned agent.

The same argument rules out every equal-pay full-allocation contract in this instance. If $\alpha<\frac{1}{2}$, task $2$ cannot be assigned to any agent while satisfying the effort constraint. If $\alpha\ge \frac{1}{2}$, every agent strictly prefers task $1$ to task $2$, so the case analysis above applies. By contrast, without equal pay, the principal can set $\alpha_1=\frac{1}{4}$ and $\alpha_2=\frac{1}{2}$ and assign one task to each agent; all induced utilities are zero, so the resulting full-allocation contract is envy-free.

\section{Missing Proofs and Results in \cref{section_model_part}}

\subsection{Proof of \cref{prop:ef_exists}}

\begin{proof}
    One possible EF contract is constructed as follows: for every task $j$, among the set of agents $\mathcal{M}_j = \{i \mid p_{i,j}r_j -c_{i,j} \ge 0\}$, assign task $j$ to the agent $i^* \in \mathcal{M}_j$ where $i^* =\argmin_i \tau_{i,j}$ and set the contract $\alpha_j = \tau_{i^*,j}$. The resulting allocation is EF and satisfies the effort constraints.
\end{proof}

\subsection{Proof of \cref{prop_full_allo}}
\begin{proof}
    Suppose that in an envy-free contract $(S,\alpha)$, task $j$ is unallocated. Among the set of agents $\mathcal{M}_j = \{i \mid p_{i,j}r_j -c_{i,j} \ge 0\}$, assign task $j$ to the agent $i^* \in \mathcal{M}_j$ where $i^* =\argmin_i \tau_{i,j}$ and set the contract $\alpha_j = \tau_{i^*,j}$. Apply the same operation to every other unallocated task. These modifications preserve the effort and EF constraints while increasing the principal's revenue.
\end{proof}

\section{Missing Proofs and Results in \cref{sec:general}} \label{missingproof_sec_gener}

\subsection{Proof of \cref{hardness_ef}}

\begin{proof}
    Our proof is by a reduction from the partition problem:
    given a set of $\ell$ integers $n_1, n_2, \dots, n_{\ell}$, find a subset of integers $\mathcal{S}$ such that $\sum_{j\in \mathcal{S}} n_j = \sum_{j\in [\ell]} n_j/2$.

    \paragraph{Construction.}
    We construct an instance where there are $3$ agents, i.e., $n=3$.
    Given a set of integers $n_1, n_2, \dots, n_{\ell}$, we define the instance as follows. There exists a task $j^*$ whose reward is  $r_{j^*}=1$. The agents' success probabilities for task $j^*$ are  $p_{1,j^*}=1$,  $p_{2,j^*}=p_{3,j^*}=\frac{1}{10}$, and their costs  are $c_{1,j^*}=\frac{1}{2}$, $c_{2,j^*}=c_{3,j^*}=0$.
        Let $M=10 \sum_{j \in [\ell]}n_j$.
    For each $n_j$, we define a task $j$ with reward $r_j  = n_j/M$. Additionally, we define the agents' success probabilities as  $p_{1,j}=0$, $p_{2,j}=p_{3,j}=1$ and costs as $c_{1,j}=c_{2,j}=c_{3,j}=0$.

\paragraph{Sufficiency.}
    We show that if there exists a partition, then the principal's revenue is at least $\frac{1}{2}$.

Suppose that there is a set of indexes $\mathcal{S}$ such that $\sum_{j\in \mathcal{S}} n_j=\sum_{j \in [\ell]} n_j/2$.
Consider the following contract. Assign task $j^*$ to agent $1$ with contracts as $\alpha_{j^*}=\frac{1}{2}$.
Moreover, assign all the tasks in $\mathcal{S}$ to agent $2$, and  all the other tasks to agent $3$.
We set the contracts
$\alpha_j=1$ to all the tasks $j\in [\ell]$.

We are left to show that the allocation satisfies the effort constraints and guarantees revenue at least $\frac{1}{2}$.
First, the allocation is envy-free. Indeed, by the construction above, agent $1$ receives utility $0$. Due to $p_{1,j}=0$ for all $j\in [\ell]$, agent $1$ receives utility $0$ switching to the tasks assigned to one of the other agents.
Agent $2$ receives utility
\[\sum_{j\in \mathcal{S}} r_j p_{2,j} - c_{2,j} =  \sum_{j\in \mathcal{S}} \frac{n_j}{M}=\frac{1}{20}.\]
Switching to the tasks assigned to agent $1$, agent $2$ gets utility $\alpha_{j^*}r_{j^*}p_{2, j^*} = \frac{1}{20}$. In addition, switching to the task assigned to agent $3$, agent $2$ gets utility $\sum_{j \in [\ell]\setminus \mathcal{S}} \alpha_{j}r_{j}p_{2, j} - c_{2, j} = \frac{1}{20}$ since $\mathcal{S}$ is a partition.
Hence, the agent $2$ does not envy other agents.
A similar argument applies to agent $3$.

We conclude noticing that the principal gets positive revenue only from task $j^*$ and in particular $\frac{1}{2}$.

\paragraph{Necessity.} We prove that if there is no partition, the principal's revenue is at most $\frac{1}{5}$.

    We prove the result by contradiction. Assume that the principal's revenue is strictly greater than $\frac{1}{5}$. Notice that if agent $1$ is not assigned the task  $j^*$, the principal's revenue is at most:
    \[
    r_{j^*}p_{2,j^*} + \sum_{j\in [\ell]} r_{j}p_{2,j} = \frac{1}{10} + \sum_{j\in [\ell]} \frac{n_j}{M} = \frac{1}{5},
    \]
    which is obtained, for instance, by assigning all the tasks to agent $2$.

    Hence, task $j^*$ must be assigned to agent $1$ since by assumption the principal's revenue is strictly greater than $\frac{1}{5}$.
    This directly implies that the payment for task $j^*$ is at least $\alpha_{j^*}\ge \frac{1}{2}$ by the effort constraint of agent $1$.

    Now, we analyze the EF constraints of agent $2$ and $3$.
    Since agent $2$ does not envy agent $1$, the set of tasks $S_2\subseteq[\ell]$ assigned to agent $2$ must guarantee utility:
    \[\sum_{j \in S_2} \alpha_j r_j p_{2,j} =  \sum_{j \in S_2} \alpha_j \frac{n_j}{M}\ge \alpha_{j^*}r_{j^*}p_{2, j^*} \ge 1/20.\]
    where the first inequality is by the EF constraints with respect to agent $1$, and the second inequality is by $\alpha_{j^*}\ge \frac{1}{2}$.
Therefore, we have that
    \[\sum_{j \in S_2} \frac{n_j}{10 \sum_{j\in [\ell]}n_j} =\sum_{j \in S_2} \frac{n_j}{M} \ge \sum_{j \in S_2} \alpha_{j}\frac{n_j}{M}\ge \frac{1}{20}\]
    which further implies that
     \[\sum_{j \in S_2} n_j\ge \frac{1}{2} \sum_{j\in[\ell]} n_j.\]
     By similar arguments, we have that for the set of tasks $S_3\subseteq [\ell]$ assigned to agent $3$ guarantees:
      \[\sum_{j \in S_3} n_j\ge \frac{1}{2} \sum_{j\in [\ell]} n_j.\]

Hence, it must be the case that $\sum_{j \in S_2} n_j = \sum_{j \in S_3} n_j= \frac{1}{2} \sum_{j\in [\ell]} n_j$, which implies that  $S_2$ defines a partition, leading to a contradiction.

We conclude the proof by noting that any algorithm guaranteeing a revenue fraction strictly larger than $\frac{1}{5}/\frac{1}{2}=\frac{2}{5}$ would distinguish the two cases; equivalently, no approximation ratio strictly smaller than $\frac{5}{2}$ is possible.
\end{proof}

\subsection{Proof of \cref{prop:two_ef_hard}}
\label{subapx:hardness}

\begin{proof}[Proof of \cref{prop:two_ef_hard}]
We prove the results by reducing from the partition problem:  given a set of $\ell$ integers $n_1, n_2, \dots, n_{\ell}$, find a subset of integers $\mathcal{S}$ such that $\sum_{j\in \mathcal{S}} n_j = \sum_{j\in [\ell]} n_j/2$.

    \paragraph{Construction.}
    There are two agents $n=2$ and $m=\ell+2$ tasks.
    There is a task $j^*$ with reward $r_{j^*}=1$. The success probabilities for task $j^*$ are  $p_{1,j^*}=1$,  $p_{2,j^*}=\frac{1}{10}$, and the related costs are $c_{1,j^*}=\frac{1}{2}$, $c_{2,j^*}=0$. Moreover, there is one task $\bar{j}$ with reward $r_{\bar{j}} =1$. The success probabilities relative to task $\bar j$ are  $p_{1,\bar{j}} = \frac{1}{10}, p_{2,\bar{j}} = 0$, and the costs are $c_{1,\bar{j}} = c_{2,\bar{j}} = 0$.

    Additionally, let $T=\sum_{j\in S} n_j$ and $M=5 T$. Then, for each $n_j\in \mathcal{S}$, we add a task $j$ with reward $r_j = \frac{n_j}{M}$. The success probabilities relative to this task are  $p_{1,j}=1$, $p_{2,j}=\frac{1}{2}$, and the costs are $c_{1,j}=c_{2,j}=0$.

\paragraph{Sufficiency.} Suppose there exists a solution $\mathcal{S}_1,\mathcal{S}_2$ to the partition problem, i.e., $\mathcal{S}_1\cup \mathcal{S}_2 = \mathcal{S}$ and $\sum_{j \in S_1} n_j = \sum_{j \in S_2} n_j = \frac{T}{2}$. We can construct the following contracts:
\begin{itemize}
    \item $\alpha_{j^*} = \frac{1}{2}$
    \item $\alpha_j = 0$ for $j\in \mathcal{S}_1$
    \item $\alpha_j = 1$ for $j \in  \mathcal{S}_2$
    \item $\alpha_{\bar{j}}=1$
\end{itemize}
The tasks assigned to agent $1$ are $G_1 = \mathcal{S}_1\cup \{j^*, \bar{j}\}$, and the tasks assigned to agent $2$ are $G_2 = \mathcal{S}_2$. First, by construction, the contracts satisfy the effort constraints. Next, we verify that the allocations are EF. For agent $1$, the EF constraints are satisfied, i.e.,
\begin{align*}
    \sum_{j\in S_1} r_j \alpha_j p_{1,j} + (r_{j^*} \alpha_{j^*}p_{1, j^*} - c_{1,j^*}) + r_{\bar{j}} \alpha_{\bar{j}} p_{1, \bar{j}}&= 0 + \frac{1}{10} \ge \sum_{j\in S_2} r_j \alpha_j p_{1,j}=  \sum_{j\in S_2} \frac{n_j}{M} = \frac{\frac{1}{2}T}{5T} =  \frac{1}{10}
\end{align*}
The EF constraints for agent $2$ are also satisfied,
\begin{align*}
&\sum_{j\in S_2} r_j \alpha_j p_{2,j}  = \frac{1}{2} \sum_{j\in S_2} \frac{n_j}{M}=\frac{1}{20} \ge \sum_{j\in S_1} r_j \alpha_j p_{2,j} + (r_{j^*} \alpha_{j^*}p_{2, j^*} - c_{2,j^*}) + r_{\bar{j}} \alpha_{\bar{j}} p_{2, \bar{j}} = \frac{1}{20}
\end{align*}
Hence, we have that the principal's revenue is
\[ (1-\alpha_{j^*})r_{j^*}p_{1,j^*} + \sum_{j\in S_1} (1-\alpha_j) r_j p_{1, j}= \frac{1}{2} + \sum_{j\in S_1} \frac{n_j}{M} = \frac{1}{2} + \frac{1}{10}. \]

\paragraph{Necessity.}
Next, we show that if the partition problem is not satisfiable, then the optimal revenue is strictly smaller than $\frac{1}{2} + \frac{1}{10}$. Now, we assume that there does not exist an equal partition.

We first notice that $j^*$ must be assigned to agent $1$ in the optimal solution. Otherwise, the principal's revenue is at most $\sum_{j \in \mathcal S} r_j p_{1,j} + r_{j^*} p_{2, j^*} + r_{\bar{j}} p_{1, \bar{j}} = \frac{1}{5} + \frac{1}{10} + \frac{1}{10} =  \frac{2}{5}$.

Moreover, since agent $2$ has probability $0$ on task $\bar{j}$, the optimal contract assigns $\bar{j}$ to agent $1$, which will not cause envy from agent $2$ but will improve the principal's revenue.

We have the following observation for the optimal solution.
\begin{observation}\label{EF_githt_ob}
    In an optimal solution, the EF constraints are tight for both agents.
\end{observation}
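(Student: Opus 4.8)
The plan is to prove tightness by a local contract-lowering perturbation argument: starting from an optimal full-allocation contract, if either envy-free inequality were strict I will lower a single contract value by an infinitesimal amount $\eta>0$, strictly increasing the principal's revenue while preserving IR and both EF constraints, contradicting optimality. Recall that by the two facts established just above the observation, in an optimal solution both $j^*$ and $\bar j$ are assigned to agent $1$, so the only remaining freedom is how the partition tasks are split between $S_1$ and $S_2$ together with the contract values $\alpha$. Writing $u_i(S_{i'})=\sum_{k\in S_{i'}}\max\{\alpha_k p_{i,k}r_k-c_{i,k},0\}$, the structural facts I would isolate first are: (i) every contract contributing positively to $u_1(S_1)$ — namely $\alpha_{j^*}$ above its IR floor $\tfrac12$, $\alpha_{\bar j}$, and $\alpha_j$ for partition tasks $j\in S_1$ — when lowered strictly raises revenue (its marginal revenue $p_{1,k}r_k$ is strictly positive) and weakly \emph{loosens} agent $2$'s EF constraint, since lowering it can only decrease $u_2(S_1)$ and never touches $u_2(S_2)$ (here $p_{2,\bar j}=0$ makes $\bar j$ invisible to agent $2$); and (ii) every contract $\alpha_j$ on a partition task $j\in S_2$ contributes to $u_2(S_2)$, and lowering it raises revenue while loosening agent $1$'s constraint, because it decreases $u_1(S_2)$ and leaves $u_1(S_1)$ fixed.

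For agent $1$'s tightness I would suppose $u_1(S_1)>u_1(S_2)$. Since $u_1(S_2)\ge 0$ as a sum of nonnegative terms, this forces $u_1(S_1)>0$, so at least one of $\alpha_{j^*}>\tfrac12$, $\alpha_{\bar j}>0$, or $\alpha_j>0$ for some partition $j\in S_1$ must hold. Lowering that single contract by a small $\eta$ raises revenue, keeps IR (we remain above the floor $\tfrac12$ for $j^*$ and above $0$ otherwise), keeps agent $1$'s EF satisfied since it was strict, and only relaxes agent $2$'s EF by fact (i) — contradicting optimality, so agent $1$'s EF binds. The argument for agent $2$ is symmetric via fact (ii): if $u_2(S_2)>u_2(S_1)$ then $u_2(S_2)>0$ forces some partition $\alpha_j>0$ with $j\in S_2$, and lowering it slightly strictly increases revenue while preserving IR, preserving the (strict) agent-$2$ constraint, and relaxing agent $1$'s constraint. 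Note the two cases are independent: the perturbation used for agent $2$ only loosens agent $1$'s inequality, so I need not assume anything about whether agent $1$'s constraint is already tight.

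The only genuine subtlety — and the step I would be most careful about — is verifying the one-directional effect of each single-coordinate lowering on all four quantities $u_1(S_1),u_1(S_2),u_2(S_1),u_2(S_2)$ and on the revenue \emph{simultaneously}, so that the perturbation never breaks a constraint it is not meant to touch. This is exactly where the instance's special structure is used: $\bar j$ is invisible to agent $2$, task $j^*$ is pinned to agent $1$ with IR floor $\tfrac12$, and each partition reward enters agent $1$'s and the principal's value at full rate but agent $2$'s value only at rate $\tfrac12$. My plan is to record a small sign table of these marginal effects once and then read both cases off it, which turns the perturbation arguments into routine verifications rather than an error-prone case analysis.
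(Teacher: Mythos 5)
Your proposal is correct and matches the paper's own argument: both proofs use the same contract-lowering perturbation, exploiting that zero costs make the right-hand side of each EF constraint nonnegative so that slackness forces some contract strictly above its IR floor, which can then be decreased to raise revenue while only loosening the other agent's constraint. Your version is simply more explicit about the sign effects; the paper states the same idea more tersely.
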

\begin{proof}
    Recall from the above arguments that  tasks $j^*$ and $\bar{j}$ are assigned to agent $1$ in the optimal solution.

    Suppose the optimal solution assigns $G_1 = \mathcal{S}_1 \cup\{j^*, \bar{j}\}$ and $G_2 = S_2$ to agent $1$ and $2$ respectively, where $\mathcal{S}_1 \cup \mathcal{S}_2 = \mathcal{S}$. Notice that the two agents have cost $0$ for all the tasks $j\in \mathcal{S}$. This implies that in the EF constraints of a feasible solution, agent $2$ gets nonnegative utility by exerting all the tasks in $G_1$ (without worrying about the costs which are {\it zero}), and similarly for agent $1$. Therefore, if either one of the EF constraints is not binding in the optimal solution (e.g., the EF constraint for agent $1$), one can further decrease some contract of some task in $G_1$, which will increase the principal's revenue, contradicting optimality. Note that such a decreasing step is always possible, as the minimum possible utility for the agent $1$ from $G_1$ is {\it zero}. Similar arguments also apply to agent $2$. This concludes the proof.
\end{proof}

Hence, by Observation \ref{EF_githt_ob}, we have the binding EF constraints as
\[
\sum_{j\in \mathcal{S}_1} r_j \alpha_j p_{1,j} + (r_{j^*} \alpha_{j^*}p_{1, j^*} - c_{1,j^*}) + r_{\bar{j}} \alpha_{\bar{j}} p_{1, \bar{j}} =  \sum_{j\in \mathcal{S}_2} r_j \alpha_j p_{1,j} \]
\[
\sum_{j\in \mathcal{S}_2} r_j \alpha_j p_{2,j} =\sum_{j\in \mathcal{S}_1} r_j \alpha_j p_{2,j} + r_{j^*} \alpha_{j^*}p_{2, j^*}
\]
We show that in the optimal contracts, $\alpha_j = 0$ for all $j\in \mathcal{S}_1$. Suppose this is not true. Then, given the optimal solution, we can construct a new solution with $\alpha_j'=0$ for all $j\in \mathcal{S}_1$ and for all $j\in \mathcal{S}_2$, let $\alpha_j' = \alpha_j - \Delta_j$ for some $\Delta_j \ge 0$ such that the EF constraints of agent $2$ is still binding,
\begin{equation}\label{33333}
    \sum_{j\in \mathcal{S}_2} r_j \alpha_j' p_{2,j} = r_{j^*} \alpha_{j^*}p_{2, j^*}
\end{equation}
which implies that $\sum_{j\in \mathcal{S}_2} r_j\Delta_jp_{2,j} = \sum_{j\in \mathcal{S}_1} r_j\alpha_j p_{2,j}$. Due to $p_{1,j} = 2p_{2,j}$ for all $j \in S$, the EF for agent $1$ is still binding,
\begin{equation}\label{44444}
 (r_{j^*} \alpha_{j^*}p_{1, j^*} - c_{1,j^*}) + r_{\bar{j}} \alpha_{\bar{j}} p_{1, \bar{j}} =  \sum_{j\in \mathcal{S}_2} r_j \alpha_j' p_{1,j}
 \end{equation}
This will lead to better principal's expected revenue, which is a contradiction. Hence, in the optimal contracts, $\alpha_j = 0$ for all $j\in \mathcal{S}_1$.

To maximize the principal's revenue, since all the utilities in $\mathcal{S}_1$ will be extracted by the principal ($\alpha_j = 0$ for $j\in \mathcal{S}_1$), it will be optimal to maximize the principal's revenue from tasks $j^*$ and  $\bar{j}$, and minimize the payments in $\mathcal{S}_2$ to agents $2$.

Recall that $\alpha_{j^*}\ge \frac{1}{2}$ by the effort constraint of agent $1$. By (\ref{33333}) and (\ref{44444}), we have the relation for $\alpha_{j^*}$ and $\alpha_{\bar{j}}$ in the optimal solution as
\[
\frac{4}{5}\alpha_{j^*} + \alpha_{\bar{j}}\frac{1}{10}=\frac{1}{2}
\]
Hence, to maximize the principal's revenue from tasks $j^*$ and $\bar{j}$, i.e.,
\[(1-\alpha_{j^*})r_{j^*}p_{1, j^*} + (1-\alpha_{\bar{j}})r_{\bar{j}}p_{1, \bar{j}}
\]
we can set $\alpha_{j^*}=\frac{1}{2}$ and $\alpha_{\bar{j}}=1$, which indeed also minimizes the payments to agent $2$ according to (\ref{33333}) and $\alpha_{j^*} \ge \frac{1}{2}$. Hence, the optimal solution would have $\alpha_{j^*}=\frac{1}{2}$ and $\alpha_{\bar{j}}=1$, which implies that by (\ref{33333}),
\[
\sum_{j\in \mathcal{S}_2} r_j \alpha_j p_{2,j} = \frac{1}{20}
\]
By our previous arguments,  to maximize the principal's expected revenue, we finally need to maximize the revenue
\[
\sum_{j\in \mathcal{S}_1} (1-\alpha_j) r_j p_{1, j} + \sum_{j \in \mathcal{S}_2} (1-\alpha_j) r_j p_{2,j} = \sum_{j\in \mathcal{S}_1}  r_j p_{1, j} + \sum_{j \in \mathcal{S}_2}  r_j p_{2,j} - \frac{1}{20}
\]
which is equivalent to maximizing $\sum_{j\in \mathcal{S}_1}  r_j p_{1, j} + \sum_{j \in \mathcal{S}_2}  r_j p_{2,j}$ while ensuring $ \sum_{j \in \mathcal{S}_2}  r_j p_{2,j} - \frac{1}{20}\ge 0$ due to that the principal revenue from agent $2$ is nonnegative.

Note that if $
\sum_{j\in \mathcal{S}_2} r_jp_{2,j} =\frac{1}{20} $, it
implies
$
\sum_{j\in \mathcal{S}_2} n_j = \frac{1}{2}T
$, i.e., an equal partition. Hence,  we must have
\[
\sum_{j\in \mathcal{S}_2} r_jp_{2,j} >\frac{1}{20}
\]
which due to $p_{1, j} > p_{2, j}$, implies that
\[
\sum_{j\in \mathcal{S}_1} (1-\alpha_j) r_j p_{1, j} + \sum_{j \in \mathcal{S}_2} (1-\alpha_j) r_j p_{2,j} = \sum_{j\in \mathcal{S}_1}  r_j p_{1, j} + \sum_{j \in \mathcal{S}_2}  r_j p_{2,j} - \frac{1}{20} <\frac{1}{10}.
\]
Hence, the total revenue will be
\[
(1-\alpha_{j^*})r_{j^*}p_{1, j^*} + (1-\alpha_{\bar{j}})r_{\bar{j}}p_{1, \bar{j}} + \sum_{j\in \mathcal{S}_1} (1-\alpha_j) r_j p_{1, j} + \sum_{j \in \mathcal{S}_2} (1-\alpha_j) r_j p_{2,j} <
 \frac{1}{2} + \frac{1}{10}.
\]

\end{proof}

\subsection{Proof of \cref{hardness_ef1_epsef}}

The proof is adapted from the proof of \cref{hardness_ef}. We prove the hardness results for EF1 contracts and $\eps$-EF contracts separately, which are summarized in \cref{thm:ef1_constant_hard} and \cref{app_them_eps_hard} below.

\begin{proposition}\label{thm:ef1_constant_hard}
When there are three agents, assuming {\rm \P~$\neq$~\NP}, for every constant $\rho>0$ there does not exist a polynomial-time algorithm that computes an EF1 contract with revenue at least a $\left(\frac{7}{10}+\rho\right)$ fraction of $\optefw$.
\end{proposition}

\begin{proof}
 We prove the results by reducing from partition: Given a set $S$ of integers $\{n_j\}$, find a subset of elements such that their sum equals $\frac{\sum_{j\in S} n_j}{2}$.

    \paragraph{Construction}
    There are three agents, i.e.,  $n=3$. Let $C=10$.
    Define two tasks $j^*$ and $\hat{j}$ with reward $r_{j^*} = r_{\hat{j}}=1$. The success probabilities are  $p_{1,j^*}=p_{1,\hat{j}}=1$ and $p_{2,j^*}=p_{3,j^*}=p_{2,\hat{j}}=p_{3,\hat{j}}=\frac{1}{C}$, and the costs are
    $c_{1,j^*} = c_{1,\hat{j}}=\frac{1}{2}$ and $c_{2,j^*}=c_{3,j^*}=c_{2,\hat{j}}=c_{3,\hat{j}}=0$.

    Let $M=C \sum_{j\in S}n_j $.
    For each $n_j$, we define a task $j$ with reward $r_j = \frac{n_j}{M}$. The success probabilities are $p_{1,j}=0$, $p_{2,j}=p_{3,j}=1$. The costs are $c_{1,j}=0$, $c_{2,j}=c_{3,j}=0$.

    \paragraph{Sufficiency.}
    Suppose there is a set of indexes $S^*$ such that $\sum_{j\in S^*} n_j=\frac{\sum_{j\in S} n_j}{2}$.
    Assign both tasks $j^*, \hat{j}$ to agent $1$ with $\alpha_{j^*}=\alpha_{\hat{j}}=1/2$.
    Moreover, assign all the tasks in $S^*$ to agent $2$, and  all the other tasks in $S\setminus S^*$ to agent $3$.
    Set all the contracts $\alpha_j=1$ to all tasks $j\in S$.

    Next, we verify that the allocation is EF1. Indeed, agent $1$ gets $0$ utility from the allocated tasks, and if switching to the sets of tasks of agent $2$ or $3$, it gets utility $0$.

    Also, agent $2$ does not envy agent $3$, and it gets the same utility if switching to the set of tasks assigned to agent $3$.
    By simple calculation, we have that agent $2$ gets utility $\sum_{j\in S^*} \frac{n_j}{M}=\frac{1}{20}$. By removing the task $j^*$ or $\hat{j}$ from the set assigned to agent $1$, agent $2$ gets utility $\frac{1}{20}$ from the remaining task. Hence, the EF1 constraints are satisfied. A similar argument also holds for agent $3$.

    Hence, the principal's revenue (only obtained from tasks $j^*, \hat{j}$) is $1$.

    \paragraph{Necessity.}
    We show that if there exists a contract with utility strictly greater than $\frac{1}{5} + \frac{1}{2}$,  then there exists a partition.

    First, notice that if at most one of the tasks $j^*, \hat{j}$ is assigned to agent $1$, the principal's revenue is at most:
    \[r_{j^*} p_{1, j^*} - c_{1,j^*}+ r_{\hat{j}}p_{2,\hat{j}} + \sum_{j\in S} r_j p_{2,j} = \frac{1}{2}+ \frac{1}{10}+ \sum_{j\in S} \frac{n_j}{M} =\frac{1}{5}+ \frac{1}{2}.\]
    Hence, in order to achieve optimal utility,  both tasks $j^*$ and $\hat{j}$ are assigned to agent $1$ and the contracts are set $\alpha_{j^*} , \alpha_{\hat{j}} \ge 1/2$.

    Since agent $2$ envies agent $1$ up to at most one task (EF1), it must be the case that $S_2$ (the set of tasks assigned to agent $2$) guarantees that
    \[\sum_{j \in S_2} \alpha_j r_j p_{2,j} = \sum_{j \in S_2} \alpha_j \frac{n_j}{M}\ge \alpha_{j^*} r_{j^*} p_{2,j^*} \ge \frac{1}{20}.\]
    which further implies
    \[\sum_{j \in S_2} \frac{n_j}{10 \sum_{j \in S}n_j}= \sum_{j \in S_2} \frac{n_j}{M}\ge \frac{1}{20}\]
    and
     \[\sum_{j \in S_2} n_j\ge \frac{1}{2} \sum_{j \in S} n_j.\]
     Similar results hold for agent $3$,
      \[\sum_{j \in S_3} n_j\ge \frac{1}{2} \sum_{j \in S} n_j.\]
The above together implies the existence of a partition. Thus any guarantee with revenue fraction strictly larger than $\frac{7}{10}$ would distinguish the two cases; equivalently, no approximation ratio strictly smaller than $\frac{10}{7}$ is possible.
\end{proof}

\begin{proposition}\label{app_them_eps_hard}
When there are three agents, assuming {\rm \P~$\neq$~\NP}, for every constant $\rho>0$ there does not exist a polynomial-time algorithm that computes an $\eps$-EF contract with revenue at least a $\left(\frac{7}{10}+\rho\right)$ fraction of $\opteps$, for any $\eps < \frac{1}{20}$.
\end{proposition}

\begin{proof}
 We prove the results by reducing from partition: Given a set of integers $\{n_1, \dots, n_{\ell}\}$, find a subset of elements such that their sum equals $\frac{\sum_{j \in [ \ell ]} n_j}{2}$.

    \paragraph{Construction}
    There are three agents, i.e.,  $n=3$.
    Define two tasks $j^*$ and $\hat{j}$ with reward $r_{j^*} = r_{\hat{j}}=1$. The success probabilities are  $p_{1,j^*}=p_{1,\hat{j}}=1$ and $p_{2,j^*}=p_{3,j^*} = \frac{1}{10}, p_{2,\hat{j}}=p_{3,\hat{j}}=2\eps$, and the costs are
    $c_{1,j^*} = c_{1,\hat{j}}=\frac{1}{2}$ and $c_{2,j^*}=c_{3,j^*}=c_{2,\hat{j}}=c_{3,\hat{j}}=0$.

    Let $M=10 \sum_{j\in [\ell]}n_j $.
    For each $n_j$, we define a task $j$ with reward $r_j = \frac{n_j}{M}$. The success probabilities are $p_{1,j}=0$, $p_{2,j}=p_{3,j}=1$. The costs are $c_{1,j}=0$, $c_{2,j}=c_{3,j}=0$.

    \paragraph{Sufficiency.}
    We show that if there exists a partition, then the principal's revenue is at least $1$.

Suppose that there is a set of indexes $\mathcal{S}$ such that $\sum_{j\in \mathcal{S}} n_j=\sum_{j \in [\ell]} n_j/2$.
Consider the following contract. Assign task $j^*$ and $\hat{j}$ to agent $1$ with contracts as $\alpha_{j^*}=\alpha_{\hat j}=\frac{1}{2}$.
Moreover, assign all the tasks in $\mathcal{S}$ to agent $2$, and  all the other tasks to agent $3$.
We set the contracts
$\alpha_j=1$ to all the tasks $j\in [\ell]$.

We are left to show that the allocation satisfies the effort constraints and guarantees revenue at least $1$.
First, the allocation is $\eps$-EF. Indeed, by the construction above, agent $1$ receives utility $0$. Due to $p_{1,j}=0$ for all $j\in [\ell]$, agent $1$ receives utility $0$ switching to the tasks assigned to one of the other agents.
Agent $2$ receives utility
\[\sum_{j\in \mathcal{S}} r_j p_{2,j} - c_{2,j} =  \sum_{j\in \mathcal{S}} \frac{n_j}{M}=\frac{1}{20}.\]
Switching to the tasks assigned to agent $1$, agent $2$ gets utility \[\alpha_{j^*}r_{j^*}p_{2, j^*}  + \alpha_{\hat{j}}r_{\hat{j}}p_{2, \hat{j}} = \frac{1}{20} + \eps, \]
which implies it is $\eps$-EF.
In addition, switching to the task assigned to agent $3$, agent $2$ gets utility $$\sum_{j \in [\ell]\setminus \mathcal{S}} \alpha_{j}r_{j}p_{2, j} - c_{2, j} = \frac{1}{20}$$ because $\mathcal{S}$ is a partition.
Hence, the agent $2$ does not envy other agents.
A similar argument applies to agent $3$.

We conclude noticing that the principal gets positive revenue only from task $j^*, \hat{j}$ and in particular $1$.

    \paragraph{Necessity.}
    We show that if there exists a contract with utility strictly greater than $\frac{7}{10}$,  then there exists a partition.

    First, notice that if at most one of the tasks $j^*, \hat{j}$ is assigned to agent $1$, the principal's revenue is at most:
    \[r_{\hat j} p_{1, \hat{j}} - c_{1,\hat{j}}+ r_{j^*}p_{2,j^*} + \sum_{j\in [\ell]} r_j p_{2,j} = \frac{1}{2}+ \frac{1}{10}+ \sum_{j\in [\ell]} \frac{n_j}{M} =\frac{7}{10}.\]
    Hence, in order to achieve optimal utility,  both tasks $j^*$ and $\hat{j}$ are assigned to agent $1$ and the contracts are set $\alpha_{j^*} , \alpha_{\hat{j}} \ge 1/2$.

    Since agent $2$ envies agent $1$ up to $\eps$, it must be the case that $S_2$ (the set of tasks assigned to agent $2$) guarantees that
    \[\sum_{j \in S_2} \alpha_j r_j p_{2,j} = \sum_{j \in S_2} \alpha_j \frac{n_j}{M}\ge \alpha_{j^*} r_{j^*} p_{2,j^*} + \alpha_{\hat{j}} r_{\hat{j}} p_{2,\hat{j}} -\eps \ge \frac{1}{20} .\]
    which further implies
    \[\sum_{j \in S_2} \frac{n_j}{10 \sum_{j \in \{1, 2, \dots, \ell\}}n_j}= \sum_{j \in S_2} \frac{n_j}{M}\ge \frac{1}{20}\]
    and
     \[\sum_{j \in S_2} n_j\ge \frac{1}{2} \sum_{j \in [\ell]} n_j.\]
     Similar results hold for agent $3$,
      \[\sum_{j \in S_3} n_j\ge \frac{1}{2} \sum_{j \in [\ell]} n_j.\]
The above together implies the existence of a partition. Thus any guarantee with revenue fraction strictly larger than $\frac{7}{10}$ would distinguish the two cases; equivalently, no approximation ratio strictly smaller than $\frac{10}{7}$ is possible.
\end{proof}

\section{Missing Proofs and Results in \cref{section_resutionced_alg}}
\label{apx:proofs}

\subsection{Proof of \cref{thm:constant_agent}}\label{app_proof_theorm_additve_epsef}

\begin{algorithm}[t]
\caption{Computing $3\xi$-EF Contracts \\
\textbf{Input:} $\xi>0$ ,  $\{r_j\}_{j \in \mathcal{M}}$, $\{p_{i,j}\}_{i \in \mathcal{N}, j \in \mathcal{M}}$, $\{c_{i,j}\}_{i \in \mathcal{N}, j \in \mathcal{M}}$, $m, n$ \\
\textbf{Output:} The contracts $(\tilde{\alpha}, \tilde{S})$.
}\label{algorithm_dynmaic_pr}
$\delta \gets \frac{\xi}{m}$ \;
Define  $\mathcal{D}_j$, $\mathcal{U}_i$, $\mathcal{H}$ as per Equation \eqref{eq:discretization_epsilon_EF} and functions  $\hat U_{i,j}(\cdot)$ and $\hat U^P(\cdot,\cdot)$ (Eq. \eqref{eq:disc_utility_EF})\;
$\mathcal{C}\gets$ Run Algorithm \ref{algorithm_dynmaic_pr2} with discretization  $\mathcal{D}_j$, $\mathcal{U}_i$, $\mathcal{H}$ and functions  $\hat U_{i,j}(\cdot)$, $\hat U^P(\cdot,\cdot)$\;
$\mathcal{C}^* \gets \emptyset$ \;
\For{ $( S,\alpha) \in \mathcal{C}$}{ \label{line_5_eps}
   \If{$(S, \alpha)$ satisfies $3\xi$-EF }{
$\mathcal{C}^* = \mathcal{C}^* \cup \{{S}, \alpha\}$ \;
   }
} \label{line9_eps}
\Return $({\tilde\alpha}, {\tilde S}) \in \mathcal{C}^*$ with maximum  principal's expected revenue.
\end{algorithm}

\begin{proof}
We prove a subroutine guarantee with internal parameter $\xi>0$: Algorithm \ref{algorithm_dynmaic_pr} returns a $3\xi$-EF contract with revenue at least $\optval-2\xi$. Invoking the subroutine with $\xi=\epsilon/3$ gives an $\epsilon$-EF contract with revenue at least $\optval-2\epsilon/3\ge \optval-\epsilon$, proving the theorem.
Let denote an optimal solution to Program (\ref{pr:fair}) with $\alpha^*= (\alpha_1^*, \alpha_2^*, \dots, \alpha_m^*)$, and its corresponding fair allocations of tasks $S^*= (S_1^*, S_2^*, \dots, S_n^*)$.

The proof proceeds as follows. We build a discretized instance of the problem, we solve it with dynamic programming, and we show that one of the solutions returned by Algorithm \ref{algorithm_dynmaic_pr2} is approximately EF and approximately optimal.

\paragraph{Discretized instance.}
We start building the discretized instance.
Given an internal parameter $\xi>0$, we define $\delta=\frac{\xi}{m}$. We discretize the set of contracts, agents' utilities, and the principal's expected revenue with a grid of step $\delta$. Formally, we define
\begin{equation}\label{eq:discretization_epsilon_EF}
    \begin{aligned}
        &\mathcal{D}_j= \{0,\delta, \ldots, 1\} \quad \forall j \in \mathcal{M}\\
        &\mathcal{U}_{i}=\{0,\delta, \ldots, 1\} \quad \forall i \in \mathcal{N}\\
        &\mathcal{H}=\{0,\delta, \ldots, 1\}\\
    \end{aligned}
\end{equation}

In the following since all sets $\mathcal{D}_j$ and $\mathcal{U}_i$ are the same for $i\in \mathcal{N}$ and $j\in \mathcal{M}$, we will refer to these sets simply with $\mathcal{D}$ and $\mathcal{U}$, respectively.
Finally, we define
\begin{equation}\label{eq:disc_utility_EF}
\begin{aligned}
&\hat U_{i,j}(\hat \alpha_j)= \lceil \hat \alpha_j p_{i,j} r_j-c_{i,j}\rceil_{\mathcal{U}} \quad \forall  i \in \mathcal{N}, j \in \mathcal{M}, \hat \alpha_{j}\in \mathcal{D}\\
&\hat U^P(\hat \alpha_j,i)=\lceil (1-\hat \alpha_j) p_{i,j} r_j\rceil_{\mathcal{H}}\quad  \forall  i \in \mathcal{N}, j \in \mathcal{M}, \hat \alpha_{j}\in \mathcal{D}
\end{aligned}
\end{equation}

\paragraph{Existence of a good solution with discretized contracts}

Consider the contract $\bar \alpha$ obtained setting  $\bar \alpha_i= \lceil \alpha^*_i\rceil_{\mathcal{D}}$. We show that $(S^*,\bar \alpha)$ is almost optimal and almost EF.
Formally:
\begin{equation}\label{discreteize_withinsec42_eps}
\begin{split}
    \sum_{k \in S^*_i}\bar{\alpha}_k p_{i, k }r_k - c_{i,k} &\ge \sum_{k \in S^*_i}\alpha_k^* p_{i, k }r_k - c_{i,k}\\
    & \ge \sum_{k \in S^*_j} \max\{ \alpha_k^* p_{i, k }r_k - c_{i,k}, 0\}\\
    & \ge \sum_{k \in S^*_j} \max\{ (\bar{\alpha}_k -\delta) p_{i, k }r_k - c_{i,k}, 0\} \\
    & \ge \sum_{k \in S^*_j} \max\{ \bar{\alpha}_k p_{i, k }r_k - c_{i,k}, 0\} -\xi\quad \quad \forall i,j \in \mathcal{N}
\end{split}
\end{equation}
where the second inequality is due to $(\alpha^*, S^*)$ satisfying the EF constraints. Moreover, the effort constraints for the tasks continue to hold as we round up the contracts.

Finally, we get that the principal's revenue under the new (approximate) contracts $(\bar{\alpha}, S^*)$ is at least $\optval-\xi$, which is
\begin{equation}\label{opt_approximation}
    \sum_{i \in \mathcal{N}} \sum_{k \in S^*_i} (1-\bar{\alpha}_k) p_{i,k}r_k \ge \sum_{i \in \mathcal{N}} \sum_{k \in S^*_i} (1-{\alpha}_k^* -\delta) p_{i,k}r_k \ge \sum_{i \in \mathcal{N}} \sum_{k \in S^*_i} (1-{\alpha}_k^* ) p_{i,k}r_k -\xi = \optval -\xi.
\end{equation}

\paragraph{Find a good solution with discretized contracts.}
The pseudocode of our algorithm is in Algorithm \ref{algorithm_dynmaic_pr}. Our algorithm enumerates over all the possible  pairs $(S,\alpha) \in \mathcal{C}$ returned by Algorithm \ref{algorithm_dynmaic_pr2}, and outputs the $3\xi$-EF contract that maximizes the principal's expected revenue.
Denote this contract by $(\tilde S, \tilde \alpha)$.

By Proposition \ref{thm:DP}, Algorithm \ref{algorithm_dynmaic_pr2} finds a set $\mathcal{C}$ which includes an allocation $\hat S$ and a contract $\hat \alpha$ that induce the same discretized utility profile as $(\bar \alpha,S^*)$.
Formally, we have that
 \begin{align*}
        &\sum_{i \in \mathcal{N}} \sum_{j \in S_i^*} \hat U^P( \bar \alpha_j,i)= \sum_{i \in \mathcal{N}} \sum_{j \in \hat S_i} \hat U^P(\hat \alpha_j,i)\\
        &\sum_{k \in  S_j^*} \hat U_{i,k}(\bar\alpha_k) = \sum_{k \in \hat S_j} \hat U_{i,k}(\hat \alpha_k) \forall i,j\in \mathcal{N}\\
        & \hat \alpha_j p_{i,j} r_j-c_{i,j} \ge 0 \quad \forall i \in \mathcal{N}, j \in \hat S_i.
\end{align*}

The last step of the proof shows that $(\tilde S, \tilde \alpha)$ achieves principal's expected revenue at least $\optval -2\xi$. Indeed, this contract is $3\xi$-EF by construction, while all the contracts in $\mathcal{C}$ satisfy the effort constraints.
To do so, it is sufficient to show that $(\hat S, \hat \alpha)$ is $3\xi$-EF and $2\xi$-optimal, since this contract belongs to $\mathcal{C}$.

We start by showing that the contract is approximately EF. Take two agents $i$ and $j$.
Then
\begin{align*}
    \sum_{k \in \hat S_i} \hat \alpha_k p_{i,k} r_k - c_{i,k} & \ge \sum_{k \in \hat S_i} (\hat U_{i,k}(\hat \alpha_k) -\delta) \\
    &\ge \sum_{k \in \hat S_i} \hat U_{i,k}(\hat \alpha_k) -\xi \\
    &= \sum_{k \in  S^*_i}  \hat U_{i,k}(\bar \alpha_k) -\xi\\
    & \ge \sum_{k \in S^*_i}\bar{\alpha}_k p_{i, k }r_k - c_{i,k} -\xi\\
    & \ge \sum_{k \in S^*_j} \max\{ \bar{\alpha}_k p_{i, k }r_k - c_{i,k}, 0\} -2\xi\\
    & \ge \sum_{k \in S^*_j} \hat U_{i,k}(\bar \alpha) -3\xi\\
    &= \sum_{k \in \hat S_j} \hat U_{i,k}(\hat \alpha) -3\xi\\
    & \ge \sum_{k \in \hat S_j} \max\{ 0, \hat \alpha_k p_{i,k} r_k - c_{i,k} \}- 3\xi.
\end{align*}
where the second and fourth inequalities are by the definition of $\hat{U}_{i,k}$, the fifth inequality is by (\ref{discreteize_withinsec42_eps}) and the equalities are by Proposition \ref{thm:DP}.
Moreover, it is $2\xi$-optimal since
\begin{align*}
\sum_{i \in \mathcal{N}} \sum_{k \in \hat S_i} (1-\hat{\alpha}_k) p_{i,k}r_k &\ge \sum_{i \in \mathcal{N}} \sum_{k \in \hat S_i} U^P(\hat\alpha_{k},i)- \xi\\
& = \sum_{i \in \mathcal{N}} \sum_{k \in S^*_i} U^P(\bar \alpha_{k},i) -\xi\\
& \ge \sum_{i \in \mathcal{N}} \sum_{k \in S^*_i} (1-\bar{\alpha}_k) p_{i,k}r_k - \xi\\
& \ge \optval-2\xi.
\end{align*}
where the last inequality is by
(\ref{opt_approximation}).

We conclude the proof by noting that the algorithm runs in polynomial time. Indeed, Algorithm \ref{algorithm_dynmaic_pr} clearly runs in polynomial time excluding the calls to Algorithm \ref{algorithm_dynmaic_pr2}. Moreover, it is easy to see that each call to Algorithm \ref{algorithm_dynmaic_pr2} runs in polynomial time since the possible vectors of agents' cumulative utilities and the principal's cumulative expected revenue have polynomial size (i.e., each utility belongs to $\{0,\delta, \ldots,m\}$).
\end{proof}

\subsection{Proof of \cref{lemma4_6fptasef1}}

\begin{proof}

First, we prove that
\begin{align}\label{eq:initialEF1}
\max\{\bar{\alpha}_j p_{i, j }r_j - c_{i,j},0\}\le \max\{{\alpha}^*_j p_{i, j }r_j - c_{i,j},0\} + \delta U_i \quad  \forall i \in \mathcal{N}, j \in \mathcal{M}.
\end{align}

Consider an agent $i$ and a task $j$. We consider three cases:
\begin{enumerate}[(i)]
    \item  $\underline{\alpha}_j\ge \tau_{i,j}$ and ${\alpha}^*_j<\tau_{i,j}$
    \item $\underline{\alpha}_j\ge \tau_{i,j}$ and ${\alpha}^*_j \ge \tau_{i,j}$
    \item $\underline{\alpha}_j< \tau_{i,j}$.
\end{enumerate}

(i) If $\underline{\alpha}_j\ge \tau_{i,j}$ and ${\alpha}^*_j< \tau_{i,j}$, then both  $\max\{\bar{\alpha}_j p_{i, j }r_j - c_{i,j},0\} $ and $\max\{{\alpha}^*_j p_{i, j }r_j - c_{i,j},0\}$ are $0$, as $\bar \alpha_j = \lceil \alpha^*_j\rceil_{\mathcal{D}_j} \le \tau_{i,j}$ by that the set $\mathcal{D}_{i,j}$ starts from $\tau_{i,j}$.

(ii) If $\underline{\alpha}_j\ge c_i/(r_j p_{i,j})$ and ${\alpha}^*_j\ge c_i/(r_j p_{i,j})$, then
\begin{align*}
\max\{\bar{\alpha}_j p_{i, j }r_j - c_{i,j},0\}&= \bar{\alpha}_j p_{i, j }r_j - c_{i,j} \\
& \le  ({\alpha}^*_j+ \delta (\underline{\alpha}_{j}  - \tau_{i,j})) p_{i, j }r_j -c_{i,j}  \\
& \le  {\alpha}^*_j p_{i, j }r_j+ \delta (\underline{\alpha}_{j} p_{i, j }r_j  - c_{i,j}) -c_{i,j}  \\
& \le  {\alpha}^*_j p_{i, j }r_j-c_{i,j} + \delta U_i  \\
&=\max\{{\alpha}^*_j p_{i, j }r_j - c_{i,j},0\} + \delta U_i,
\end{align*}
where the last inequality follows from the definition of $\underline{\alpha}_{i,j}$ in Equation \eqref{def:barAlpha} and $\underline{\alpha}_j \le \underline{\alpha}_{i,j}$.

(iii) If $\underline{\alpha}_j< c_i/(r_j p_{i,j})$, then $\max\{\bar{\alpha}_j p_{i, j }r_j - c_{i,j},0\} $ and $\max\{{\alpha}^*_j p_{i, j }r_j - c_{i,j},0\}$ are $0$, as $\bar \alpha_j = \lceil \alpha^*_j\rceil_{\mathcal{D}_j} \le \underline{\alpha}_j$ by Lemma \ref{wellefeindinelemmandk} and the definition of $\underline{\alpha}_j$. This concludes the proof of Equation \eqref{eq:initialEF1}.

Then, we use Equation \eqref{eq:initialEF1} to prove Equation \eqref{discreteize_withinsec42}. Consider two agents $i$ and $j$. It holds:
\begin{align*}
    \sum_{k \in S^*_i}\bar{\alpha}_k p_{i, k }r_k - c_{i,k} &\ge \sum_{k \in S^*_i}\alpha_k^* p_{i, k }r_k - c_{i,k}\\
    & \ge \sum_{k \in S^*_j} \max\{ \alpha_k^* p_{i, k }r_k - c_{i,k}, 0\}\\
    & \ge \sum_{k \in S^*_j} (\max \{\bar\alpha_{k} p_{i,k}r_k -c_{i,k},0\}-\delta U_i)\\
     & \ge \sum_{k \in S^*_j} \max \{\bar\alpha_{k} p_{i,k}r_k -c_{i,k},0\}-\nu U_i \quad\forall i,j \in \mathcal{N},
\end{align*}
where the third inequality is by Equation \eqref{eq:initialEF1}.
Moreover, the effort constraints for the tasks continue to hold as we round up the contracts.

Finally, we get that the principal's revenue under the new approximate contracts $(\bar{\alpha}, S^*)$ is at least $\optval-\nu$, which is
\begin{align*}
\sum_{i \in \mathcal{N}} \sum_{k \in S^*_i} (1-\bar{\alpha}_k) p_{i,k}r_k \ge \sum_{i \in \mathcal{N}} \sum_{k \in S^*_i} (1-{\alpha}_k^* -\delta) p_{i,k}r_k \ge \sum_{i \in \mathcal{N}} \sum_{k \in S^*_i} (1-{\alpha}_k^* ) p_{i,k}r_k -\nu = \optval -\nu.
\end{align*}
\end{proof}

\subsection{Proof of \cref{lemma4_7_epftsef1}}

\begin{proof}
It holds
\begin{align*}
\sum_{i \in \mathcal{N}} \sum_{k \in \hat S_i} (1-\hat{\alpha}_k) p_{i,k}r_k &\ge \sum_{i \in \mathcal{N}} \sum_{k \in \hat S_i} ( \hat{U}^P(\hat\alpha_{k},i)- \delta)\\
&\ge \sum_{i \in \mathcal{N}} \sum_{k \in \hat S_i}  \hat{U}^P(\hat\alpha_{k},i)- \nu\\
& = \sum_{i \in \mathcal{N}} \sum_{k \in S^*_i} \hat{U}^P(\bar \alpha_{k},i) -\nu\\
&\ge \sum_{i \in \mathcal{N}} \sum_{k \in S^*_i} (1-\bar{\alpha}_k) p_{i,k}r_k -\nu \\
& \ge \optval-2\nu.
\end{align*}
where the last inequality follows from Equation \eqref{eq:apx_opt_EF1}.
\end{proof}

\subsection{Proof of \cref{wellefeindinelemmandk}}
\begin{proof}
Consider a $j\in \mathcal{M}$. We show that $\alpha^*_j\le \underline{\alpha}_{j}$. This is sufficient to show that $\bar \alpha_j$ is  well-defined by the definition of $\mathcal{D}_j$.

We consider two cases: $\underline{\alpha}_j=1$ and $\underline \alpha_j<1$.

If $\underline{\alpha}_{j}=1$, it immediately implies $\alpha^*_j\le \underline{\alpha}_{j}$. Hence, in the following arguments, we focus on the second case $\underline{\alpha}_{j}<1$.

Suppose by contradiction that $\alpha^*_{j} > \underline{\alpha}_{j} $. Let $i$ be such that $\underline{\alpha}_j=\underline{\alpha}_{i,j}$. Then, by definition,
\begin{align}\label{eq:useDefAlpha}
    \underline{\alpha}_j= \frac{c_{i,j}+U_i}{q_{i,j}},
\end{align}
where we use $ \underline{\alpha}_{j}<1$.
Furthermore, suppose that
 $i'$ is the agent such that $j \in S^*_{i'}$.
Then, it holds:
    \begin{align*}
    U_i   &\ge \sum_{k \in S^*_{i}}\alpha^*_k p_{i, k }r_k - c_{i,k}\\
    & \ge \sum_{k \in S^*_{i'}}\max \{\alpha^*_k p_{i, k }r_k - c_{i,k},0\}\\
    &\ge  \alpha_j^* p_{i, j }r_j - c_{i,j} \\
    &>  \underline{\alpha}_{j} p_{i, j }r_j - c_{i,j}\\
    &= \frac{U_{i}+c_{i,j}}{p_{i,j}r_j}  p_{i, j }r_j -c_{i,j}\\
    & \ge U_{i},
    \end{align*}
where the first inequality is by Equation \eqref{eq:guess}, the second inequality is by EF and the equality follows from Equation \eqref{eq:useDefAlpha}. Hence, we reach a contradiction.
\end{proof}

\section{Missing Proofs and Results in \cref{sec:priceoffairness}}
\label{apx:additional}

\subsection{\texorpdfstring{Proof of \cref{prop:pof_eps_ef}}{Proof of epsilon-EF Price of Fairness Proposition}}
\label{apx:eps_ef_pof}

We first record a single-task guarantee that will be used in both upper-bound constructions. For a fixed task, suppress the task index and write $q_i=p_i r\le 1$. We use the break-even-share notation from the model: $\tau_i$ denotes the fixed-task version of $\tau_{i,j}$, so $\tau_i=c_i/q_i$ when $q_i>0$, $\tau_i=0$ when $q_i=c_i=0$, and $\tau_i=+\infty$ when $q_i=0<c_i$. For agents with $\tau_i\le 1$, let $W_i=q_i-c_i=q_i(1-\tau_i)$ be the principal's revenue from assigning the task to agent $i$ at her break-even share. Agents with $\tau_i>1$ never obtain positive utility from any linear share $\alpha\in[0,1]$, so they can be ignored in the search for a profitable assignee. Let $W=\max_{\tau_i\le 1} W_i$ be the unconstrained optimal revenue from the task.

\begin{lemma}\label{lem:single_task_eps_ef}
For any single task and any $0<\delta\le 1/4$, there exists a $\delta$-EF linear contract for that task with expected revenue at least $4\delta W$.
\end{lemma}

\begin{proof}
    If $W=0$, assign the task to any effort-feasible agent using share $\tau_i$. This gives revenue $0$ and is $\delta$-EF. Hence, assume $W>0$. 

    Start from an agent $a$ maximizing $W_i$. We sort the agents according to their break-even shares so that $\tau_1 \ge\tau_2 \ge \dots\ge \tau_n$. We can simply ignore those agents $i$ with $\tau_i >1$ as they will never be assigned the task by effort constraints. To ease exposition, we assume $\tau_1 \le 1$. 
    
    Start from the agent $a$ and set $\alpha=\tau_a$. 
Note that any agent $i<a$ gains non-positive utility at  $\alpha=\tau_a$, implying the satisfaction of effort constraints and $\delta$-EF constraints. Hence, the contract is $\delta$-EF exactly when every agent $i>a$ values it by at most $\delta$; for agents with $q_i>0$ with $i>a$, this condition is
    \[
    0\ge \max\{0, q_i \tau_a -c_i\} -\delta
    \]
    which, by $q_i \tau_i -c_i =0$, implies that
\[
q_i(\tau_a-\tau_i)_+\le \delta
\quad\text{for all such }i.
\] 
If there exists some agent $i>a$ such that the above condition fails, move to considering this agent $i$ and check the feasible conditions if it is assigned the task and has contract $\alpha = \tau_i$. Finally, the process terminates at some agent $b$ for which the break-even-share contract is $\delta$-EF. 

If the process terminates at $b = a$, then the revenue is $W$.

If $b \neq a$, then $b>a$. It must be that agent $b$ violated the above $\delta$-EF condition for all its immediate predecessor agents along the path. Let the path be $a \to \cdots\to i \to j \to b$. By construction, we know that agent $b$ violates the $\delta$-EF constraints when the contract is $\alpha = \tau_j$, i.e., 
\[
0 < \max\{q_b \tau_j - c_b, 0\} -\delta = q_b \tau_j - c_b - \delta,
\]
Moreover, since $\tau_i\ge \tau_j$, we also have that 
\[
0 < q_b \tau_i - c_b - \delta.
\]
Hence, when the contract is  $\alpha = \tau_a$ and the task is assigned to agent $a$, we have
\[
0 < q_b\tau_a -c_b -\delta
\]
which, by $q_b\tau_b-c_b =0$, implies that
\[
q_b(\tau_a-\tau_b)>\delta.
\]
Therefore,
\[
\frac{W}{W_b}
=
\frac{q_a(1-\tau_a)}{q_b(1-\tau_b)}
<
\frac{q_a(1-\tau_a)(\tau_a-\tau_b)}{\delta(1-\tau_b)}.
\]
Let $x=(\tau_a-\tau_b)/(1-\tau_b)$. Since $\tau_b<\tau_a\le 1$, we have $x\in[0,1]$, and the last expression equals
\[
\frac{q_a x(1-x)(1-\tau_b)}{\delta}
\le
\frac{1}{4\delta},
\]
using $q_a\le 1$ and $x(1-x)\le 1/4$. Thus $W_b\ge 4\delta W$, as required.

\end{proof}

We now prove the positive guarantee in \cref{prop:pof_eps_ef}. For each task $j$, let
\[
W_j=\max_{i\in\mathcal{N}}\{p_{i,j}r_j-c_{i,j}\}
\]
be its unconstrained contribution. Since tasks are independent without fairness constraints,
\[
\opt=\sum_{j\in\mathcal{M}} W_j.
\]

First, set $\delta=\eps/m$. Applying \cref{lem:single_task_eps_ef} independently to every task gives, for each task $j$, a single-task $\delta$-EF contract with revenue at least $4\delta W_j$. Combining these contracts yields a full allocation. For any ordered pair of agents $(i,\ell)$, each task in $\ell$'s bundle contributes at most $\delta$ to agent $i$'s envy, while agent $i$'s own assigned tasks give nonnegative utility. Hence
\[
U_i(S_\ell)-U_i(S_i)
\le |S_\ell|\delta
\le m\delta
=\eps.
\]
Thus the combined contract is $\eps$-EF and has revenue at least
\[
\sum_{j\in\mathcal{M}}4\delta W_j
=
\frac{4\eps}{m}\opt.
\]
This proves $\opteps\ge (4\eps/m)\opt$.

We next prove the bound depending on $n$. Apply \cref{lem:single_task_eps_ef} to every task with tolerance $\delta=\eps$. For each task $j$, let $a(j)$ be the assignee returned by the lemma, let $\alpha_j$ be the corresponding share, and let $R_j$ be the resulting principal revenue. Then every agent values task $j$ under share $\alpha_j$ by at most $\eps$, the designated assignee $a(j)$ receives zero utility from the task, and
\[
R_j\ge 4\eps W_j.
\]
Choose an agent $h$ whose designated tasks have total designated revenue at least the average:
\[
\sum_{j:a(j)=h}R_j
\ge
\frac{1}{n}\sum_{j\in\mathcal{M}}R_j
\ge
\frac{4\eps}{n}\opt.
\]

For tasks designated to $h$, keep the shares $\alpha_j$ above. For every other task $j$, reset its share to $\min\{\tau_{i,j}:\tau_{i,j}\le 1, \forall i\}$, the smallest break-even share among feasible agents for that task. Under these reset contracts, every agent obtains zero utility from every task not designated to $h$. Under the kept contracts, every task gives every agent utility at most $\eps$.

Now allocate tasks by the modified round-robin procedure used in the proof of \cref{ef1upperboundpro}, starting with agent $h$: on her turn, an agent chooses a remaining task of maximum utility among the tasks for which her effort constraint is satisfied, breaking ties in favor of the largest principal revenue; if no remaining task satisfies her effort constraint, she skips. Since every task is feasible for at least one agent under its chosen share, this procedure terminates with a full allocation. The standard round-robin argument gives EF1. Moreover, every single task has value at most $\eps$ to every agent, so EF1 implies $\eps$-EF.

It remains to lower-bound the revenue. Agent $h$ has zero utility for every task under the shares above, and all tasks designated to $h$ are feasible for her. Therefore, whenever $h$ takes a turn while one of her designated tasks remains, the tie-breaking rule gives her a task whose principal revenue is at least the revenue of any remaining task designated to $h$. Between two consecutive turns of $h$, at most $n-1$ other tasks are allocated. Thus each task taken by $h$ can be charged to at most $n$ tasks designated to $h$, each with no larger designated revenue. Consequently, the revenue obtained from tasks assigned to $h$ is at least
\[
\frac{1}{n}\sum_{j:a(j)=h}R_j
\ge
\frac{4\eps}{n^2}\opt.
\]
All other assigned tasks contribute nonnegative revenue, so
\[
\opteps
\ge
\frac{4\eps}{n^2}\opt.
\]
 Combining the two guarantees gives
\[
\opteps
\ge
\frac{4\eps}{\min\{m,n^2\}}\opt.
\]

Finally, we prove the lower-bound construction. Suppose $0<\eps<1/4$, and fix any $\rho>0$. Consider one nonzero task with reward $1$ and two agents $H$ and $L$. Let
\[
p_H=1,\qquad c_H=\frac12,
\]
and
\[
p_L=2\eps+\eta,\qquad c_L=0,
\]
where $\eta>0$ is chosen so that $2\eps+\eta<1/2$ and $2\eta\le \rho$. Without fairness, the principal assigns the task to $H$ at share $1/2$ and obtains revenue $1/2$. In any $\eps$-EF contract assigning the task to $H$, the effort constraint requires $\alpha\ge 1/2$, so agent $L$ values $H$'s contract by at least
\[
\alpha p_L
\ge
\frac{2\eps+\eta}{2}
=
\eps+\frac{\eta}{2}
>
\eps,
\]
while $L$'s own utility is zero. Hence such a contract is not $\eps$-EF. The task must therefore be assigned to $L$, whose revenue is at most $p_L=2\eps+\eta$. Thus
\[
\opteps
\le
2\eps+\eta
\le
(4\eps+\rho)\opt,
\]
where the last inequality uses $\opt=1/2$. For larger $n$ and $m$, add dummy agents and dummy zero-reward tasks; they do not affect either optimum. This completes the proof of \cref{prop:pof_eps_ef}.

\subsubsection{Another lower bound on the multiplicative gap}

The proposition gives an upper bound on the multiplicative gap $\opt/\opteps$. The construction below upper-bounds $\opteps$ for a family of instances, and therefore gives a lower bound on $\opt/\opteps$. Together, these bounds show that the dependence on $\eps$ is tight for fixed $n$ and $m$, while the exact dependence on $n$ and $m$ is non-trivial and remains open.

Fix an integer $1\le k\le n-1$ and assume $\eps\le m$. Let $\sigma=\sqrt{\eps/m}$. There are $m$ identical tasks, each with reward $1$, one high-productivity agent $H$, and $k$ low-productivity agents. If the instance has more than $k+1$ agents, the remaining agents are dummy agents with zero success probabilities and zero costs, and they do not affect either optimum. Set
\[
p_H=1,\qquad
c_H=t=\frac{1-\sigma}{1+k\sigma},
\]
and, for every low agent $\ell$,
\[
p_\ell=\sigma,\qquad c_\ell=0.
\]
The high agent's welfare per task is
\[
1-t=\frac{\sigma(k+1)}{1+k\sigma},
\]
which is at least the welfare $\sigma$ of any low agent. Hence
\[
\opt
=
m\frac{\sigma(k+1)}{1+k\sigma}.
\]

Consider any $\eps$-EF contract, and let $h$ be the number of tasks assigned to $H$. Every task assigned to $H$ must have share at least $t$, so every low agent values $H$'s bundle by at least $\sigma t h$. If $S_\ell$ is the bundle of low agent $\ell$, $\eps$-EF implies
\[
\sigma\sum_{j\in S_\ell}\alpha_j
\ge
\sigma t h-\eps.
\]
Therefore, the revenue from tasks assigned to low agent $\ell$ is at most
\[
\sum_{j\in S_\ell}\sigma(1-\alpha_j)
\le
\sigma |S_\ell|-\sigma t h+\eps.
\]
Summing over the $k$ low agents and adding the revenue from $H$'s tasks gives
\[
\rev
\le
\sigma m+k\eps
+h\bigl((1-t)-\sigma(1+kt)\bigr).
\]
By the choice of $t$, we have $1-t=\sigma(1+kt)$, and thus
\[
\opteps
\le
\sigma m+k\eps
=
m\sigma(1+k\sigma).
\]
Consequently,
\[
\frac{\opt}{\opteps}
\ge
\frac{k+1}{(1+k\sqrt{\eps/m})^2}.
\]
Optimizing over $k$ gives instances with
\[
\frac{\opt}{\opteps}
=
\Omega\!\left(\min\left\{n,\sqrt{\frac{m}{\eps}}\right\}\right).
\]

\end{document}